\documentclass[11pt,a4paper]{article}
\usepackage[margin=1in]{geometry}
\usepackage{amssymb,amsmath,mathtools}
\usepackage{amsthm,amsfonts,bbm}
\usepackage{enumitem}
\usepackage{booktabs}   
\usepackage{siunitx}    
\usepackage{fontawesome5}
\usepackage{algorithm,algpseudocode}
\usepackage{tikz}
\usetikzlibrary{arrows.meta,positioning,calc}
\usepackage[colorlinks=true,citecolor=blue,linkcolor=blue,urlcolor=blue]{hyperref}
\usepackage{booktabs}
\usepackage{multirow}
\usepackage{caption}
\usepackage{subcaption}
\usepackage{xcolor}
\usepackage{tabularx}
\usepackage{float}

\theoremstyle{plain}
\newtheorem{theorem}{Theorem}[section]
\newtheorem{lemma}[theorem]{Lemma}
\newtheorem{proposition}[theorem]{Proposition}
\newtheorem{corollary}[theorem]{Corollary}
\newtheorem{conjecture}{Conjecture}
\theoremstyle{definition}
\newtheorem{definition}[theorem]{Definition}
\newtheorem{example}[theorem]{Example}
\theoremstyle{remark}
\newtheorem{remark}[theorem]{Remark}
\newtheorem{notation}{Notation}

\DeclareMathOperator{\Gal}{Gal}
\DeclareMathOperator{\GL}{GL}
\DeclareMathOperator{\SL}{SL}
\DeclareMathOperator{\Tr}{Tr}
\DeclareMathOperator{\SVP}{SVP}
\DeclareMathOperator{\Nm}{N}
\DeclareMathOperator{\NTT}{NTT}
\DeclareMathOperator{\FFT}{FFT}
\DeclareMathOperator{\IFFT}{IFFT}
\DeclareMathOperator{\INTT}{INTT}

\DeclareMathOperator{\diag}{diag}
\DeclareMathOperator{\Round}{Round}

\DeclareMathOperator{\sr}{sr}

\newcommand{\Z}{\mathbb{Z}}
\newcommand{\Q}{\mathbb{Q}}
\newcommand{\R}{\mathbb{R}}
\newcommand{\C}{\mathbb{C}}
\newcommand{\E}{\mathbb{E}}
\newcommand{\F}{\mathbb{F}}
\newcommand{\cO}{\mathcal{O}}

\newcommand{\tO}{\tilde{O}}
\newcommand{\fp}{\mathfrak{p}}

\title{Module Lattice Security (Part II): \\
{\large Module Lattice Reduction via Optimal Sign Selection}}
\author{Ming-Xing Luo
\\
\footnotesize
School of Information Science and Technology, Southwest Jiaotong University, Chengdu 610031, China}

\begin{document}
\maketitle

\begin{abstract}
We extend the CDPR's quantum attack from ideal lattices to module lattices over $2^k$-th cyclotomic rings. Using trace orthogonality of the power basis, we decompose a rank-$d$ module into mutually orthogonal rank-$1$ submodules, and apply CDPR's analysis to each independently and return the shortest candidate. The Hermite factor $\exp(\tilde{O}(\sqrt{n}))$ matches the ideal case, with a module reduction factor $\alpha_d=O(1)$ independent of the rank, under a balance hypothesis (proved for Gaussian distribution) automatic for MLWE-distributed bases. To enable a bounded-precision implementation, we replace coordinate-wise rounding with Chinese Remainder Theorem-scaled rounding at totally split primes, reducing the Gram-Schmidt rounding radius from $n/2$ to $\le 1$ at cost $O(d^2 r n \log n)$. Finally, we reformulate the CDPR's sign-selection step as a mixed-integer linear program and prove its optimum is no more than 1/2 for all $k$ ($\approx 0.4407$ for all tested $k\le 12$, conjecturally universal). This replaces the previous heuristic discrepancy $\Theta(\sqrt{nk})$. All results build on the class number condition $h_k^+=1$ established in Part I of this series.
\end{abstract}

\medskip
\noindent\textbf{Keywords:}
Module lattices, cyclotomic rings, CDPR's algorithm, MILP, lattice reduction, discrepancy theory.

\section{Introduction}\label{Sec-intro}

Post-quantum cryptography has become a cornerstone of modern cybersecurity, as scalable quantum computing might threaten various classical public-key cryptosystems. In August 2024, NIST finalized four post-quantum standards \cite{NIST-MLKEM,NIST-MLDSA,NIST-SLHDSA}: ML-KEM (FIPS 203), ML-DSA (FIPS 204), SLH-DSA (FIPS 205), and FN-DSA (draft FIPS 206). The first two-and the most widely deployed-are based on the Module Learning with Errors (MLWE) problem \cite{LS15,ADPS16,DKL+18} over $2^k$-th cyclotomic rings $R=\Z[\zeta_{2^k}] \cong \Z[x]/(x^n+1)$ with $n=2^{k-1}$. These standards use module ranks $d\in\{2,3,4\}$ and $n=256$ ($k=9$) \cite{NIST-MLKEM}.

The key secure question is how hard the approximate Shortest Vector Problem (SVP) is on the module lattices underlying these schemes. For ideal lattices ($d=1$), the CDPR \cite{CDPR16} exploits the algebraic structure of cyclotomic rings to solve approximate SVP with factor $\exp(\tO(\sqrt{n}))$, an exponential improvement over generic lattice algorithms \cite{LLL82,Schnorr87,GN08,ADRS15} whose approximation factors are $\exp(\tO(n))$. However, extending CDPR's method to module lattices ($d\ge 2$) has remained open because the algebraic structure of module lattices is richer.

A key sub‑problem in the CDPR's procedure is the sign‑selection problem, given the log‑embedding geometry of the ring, assign balanced signs $s\in\{\pm1\}^{|G|-1}$ to orbit elements so as to minimize the $L^\infty$-discrepancy of the resulting error vector. Prior work used the tower greedy heuristic \cite{CDPR16,CDW17} and achieved discrepancy $\Theta(\sqrt{|G|\cdot k})$. It was widely assumed that this growth was intrinsic to the lattice geometry.

\paragraph{Ideal-lattice attacks.} The journal version of CDPR's method \cite{CDW17} refined the quantum attack on Ideal-SVP using Stickelberger relations. Bernstein \cite{Bernstein14} and Ducas \cite{Ducas18} analyzed the precise constants, and Pellet-Mary, Hanrot, and Stehl\'e \cite{PMHS19} gave a time-approximation tradeoff. Felderhoff et al. \cite{FPSW23} showed Ideal-SVP remains hard for small-norm uniform prime ideals.

\paragraph{Module lattices and MLWE.} Langlois and Stehl\'e
\cite{LS15} proved worst-case Module-LWE hardness; Albrecht and Deo \cite{AD17} studied concrete parameters; Peikert, Regev, and Stephens-Davidowitz \cite{PRS17} proved Ring-LWE pseudorandomness. Mureau et al.  \cite{MPPW24} gave the first polynomial-time attack on rank-2 module-LIP over totally real fields, extended by Allombert, Pellet-Mary, and van Woerden \cite{APvW25} to all fields with a real embedding. Chevignard et al. \cite{CMEPW25} reduced Hawk to PIP in a quaternion algebra, and Ducas, Espitau, and de Perthuis \cite{DEP25} provided concrete module-BKZ reduction predictions for cyclotomic fields.

\paragraph{Generic and algebraic reduction; quantum and discrepancy tools.} LLL \cite{LLL82} achieves $2^{O(n)}$ approximation in polynomial time; BKZ \cite{Schnorr87,SE94} and its variants \cite{GN08,ADRS15,AGVW17} give the best generic trade-offs. Lee et al. \cite{LLPS19} and Kirchner et al.  \cite{KEF20} studied algebraic reduction for module lattices. The quantum PIP algorithm \cite{BiasseSong16,EHKS14} is the quantum part of CDPR's algorithm; Biasse et al. \cite{BEGFK20} gave a subexponential algorithm for generators in cyclotomic integer rings. On the discrepancy side, Spencer \cite{Spencer85} proved the six-standard-deviations bound for general matrices, Bansal \cite{Bansal10} gave the first constructive proof, Lovett and Meka \cite{LM15} provided a simpler algorithm, and Bansal, Dadush, and Garg \cite{BDG19} addressed the Koml\'os conjecture.

\textbf{Contributions}. We make several contributions as follows. Our module reduction relies on the class number one result $h_k^+=1$ established in Part I of this series, which builds on Weber \cite{Weber1886}, Miller \cite{Miller14}, Fukuda-Komatsu \cite{FK09,FK11}, and Washington \cite{Wash97}.
\begin{itemize}
\item \textbf{Base module reduction (Theorem \ref{T:base}).}
    We provide a module-lattice reduction using the trace orthogonality of the power basis. Our algorithm applies CDPR procedure independently to each of the $d$ rank-1 submodules and returns the shortest of the resulting candidates. This achieves Hermite factor $\exp(\tO(\sqrt n))$ matching CDPR's factor in the ideal case ($d=1$), with a module reduction factor $\alpha_d=O(1)$ that is independent of $d$. The bound holds on input bases satisfying a balance hypothesis (Definition \ref{D:balanced}, proved for Gaussian distribution in Theorem \ref{T:gaussbalance}), which is automatic for MLWE-distributed bases with $C=O(1)$ by the Center Limit Theorem (CLT) (Lemma \ref{lem:mlwebal}) and can be enforced on adversarial inputs via algebraic LLL \cite{LLPS19,KEF20}. 
    
  \item \textbf{CRT-scaled rounding (Theorem \ref{T:crt}).}
    We show that the $R$-linear Gram-Schmidt coefficients can be rounded to the finer lattice $P^{-1}R$ at totally split primes, where all ring operations reduce to Number-Theoretic Transform (NTT)-based component-wise arithmetic. This reduces the rounding error in Phase-1 from $\rho=n/2$ to $\rho\le 1$, allowing a bounded-precision implementation with classical cost $O(d^2 r n\log n)$ bit operations.

 \item \textbf{MILP sign optimization (Theorem \ref{T:milp}).} We reformulate the CDPR's sign-selection sub-problem as a mixed-integer linear program (MILP) and solve it exactly for $k=4, \cdots, 12$. We prove the optimal balanced discrepancy is a constant no more than 1/2 for all $k$ ($\approx 0.4407$ for all tested $k\le 12$, conjecturally independent of the orbit group size). This tightens the implicit constant inside $\gamma$ by a factor of $\sqrt{nk}$ (Corollary \ref{cor:milpmod}).
\end{itemize}

The rest is organized as follows. Section \ref{sec:prelim} introduces the necessary background on lattices, cyclotomic rings, Minkowski embedding, and module learning With errors (MLWE) problem. Section \ref{sec:cdpr} reviews the CDPR's quantum attack for ideal lattices. Section \ref{sec:orth} establishes the trace orthogonality in $2$-power cyclotomic rings and analyzes the covering radius of the ring of integers. Section \ref{sec:twophase} presents our module reduction algorithm. Section \ref{sec:crt} improves the polynomial factor via Chinese Remainder Theorem (CRT)‑scaled rounding. Section \ref{sec:milp} addresses sign optimization using mixed‑integer linear programming (MILP) while Section \ref{sec:conclusion} concludes the paper.

\section{Preliminaries}\label{sec:prelim}

We present some definitions and notation, including lattices, cyclotomic rings, Galois embedding, and module lattices (see ref. \cite{BernsteinBuchmannDahmen2017}). 

\subsection{Lattices and approximate SVP}

\begin{definition}
\label{D:lattice}
A lattice in $\R^m$ is a discrete additive subgroup $\Lambda=\sum_{i=1}^r \Z \mathbf{b}_i$ for linearly independent vectors $\mathbf{b}_1, \dots, \mathbf{b}_r\in\R^m$. The rank is $r$. When $r=m$ the lattice is full-rank. The determinant of lattice is  $\det(\Lambda)=\sqrt{\det(B^TB)}$, where $B=[\mathbf{b}_1|\cdots|\mathbf{b}_r]$ and $T$ denotes the transpose of matrix.

\end{definition}

\begin{definition}
\label{D:aSVP}
The minimum distance of lattice $\Lambda$ is defined by $\lambda_1(\Lambda)=\min_{\mathbf{v}\in\Lambda\setminus\{0\}} \|v\|_2$. The approximate shortest vector problem ($\gamma$-$\SVP$) asks: given a basis of $\Lambda$, find $\mathbf{v}\in\Lambda\setminus\{0\}$ with $\|\mathbf{v}\|_2 \le \gamma\cdot\lambda_1(\Lambda)$.

\end{definition}

Minkowski's Theorem gives $\lambda_1(\Lambda) \le \sqrt{m}\cdot\det(\Lambda)^{1/m}$ for a full-rank lattice in $\R^m$. The well known algorithms achieve $\gamma=2^{O(m)}$ in polynomial time (LLL \cite{LLL82}) or $\gamma=2^{O(m\log\log m/\log m)}$ in exponential time (BKZ \cite{Schnorr87,GN08,ADRS15,AGVW17}).

\begin{definition}
\label{D:coveringr}
The $L^\infty$-covering radius of lattice $\Lambda$ is defined by 
\begin{align}
  \mu_\infty(\Lambda)=\max_{\mathbf{t}\in\R^m/\Lambda} \min_{\mathbf{v}\in\Lambda} \|\mathbf{t}-\mathbf{v}\|_\infty.
  \label{E:coveringr}
\end{align}

\end{definition}

\subsection{Cyclotomic rings and fields}

Given a $k\ge 3$ and $n=2^{k-1}$. Let $\zeta=\zeta_{2^k}=e^{2\pi i/2^k}$, and $K=\Q(\zeta)$ be the $2^k$-th cyclotomic field. The ring of integers is $R=\Z[\zeta]\cong \Z[x]/(x^n+1)$.

The Galois group is $\Gamma=\Gal(K/\Q)\cong (\Z/2^k\Z)^\times$, defined via $\sigma_a: \zeta\mapsto\zeta^a$ for odd integer $a$. The group $\Gamma$ has an order $\varphi(2^k)=n$.

\begin{definition}
\label{D:galoisembed}
The full Galois embedding $\sigma:K\hookrightarrow\C^n$ maps $\sigma(\alpha)=\sigma_1(\alpha), \sigma_3(\alpha), \dots, \sigma_{2n-1}(\alpha))$, where we use $n$ odd residues $1, 3, \dots, 2n-1 \mod  2^k$ corresponding to all elements of $\Gamma$. Under the mapping $\sigma$, the ring $R$ is transformed into a lattice $\sigma(R)\subset\C^n\cong\R^{2n}$.

\end{definition}

\begin{definition}
\label{D:trace}
The field trace is defined by $\Tr_{K/\Q}(\alpha)=\sum_{a\in(\Z/2^k\Z)^\times}\sigma_a(\alpha)$. The associated bilinear form $\langle\alpha, \beta\rangle_{\Tr}=\Tr_{K/\Q}(\alpha\bar\beta)$ is positive definite on $K\otimes_\Q\R$, and satisfies
\begin{align}
  \Tr_{K/\Q}(\alpha\bar{\beta})=\langle\sigma(\alpha), \sigma(\beta)\rangle_{\C^n},
  \label{E:traceembed}
\end{align}
where $\langle\cdot, \cdot\rangle_{\C^n}$ is the Hermitian inner product on $\C^n$.
\end{definition}

\begin{notation}\label{not1:partII}
In this part, we denote $\langle\cdot, \cdot\rangle$ without subscript as the real part of Hermitian inner product: $\langle \mathbf{u}, \mathbf{v}\rangle=\Re\sum_i u_i\bar v_i$. For $u=\sigma(\alpha)$ and $v=\sigma(\beta)$, we have $\langle u, v\rangle=\Re\Tr_{K/\Q}(\alpha\bar\beta)$. We employ three related inner products in what follows:
\begin{enumerate}[label=(\roman*),nosep]
  \item $\langle \mathbf{u}, \mathbf{v}\rangle_{\C^n}:=\sum_\ell u_\ell\overline{v_\ell}$, the Hermitian inner product on $\C^n$;
  
  \item $\langle \mathbf{u}, \mathbf{v}\rangle:= \Re\langle\mathbf{u}, \mathbf{v}\rangle_{\C^n}$, the real inner product on $\C^n\cong\R^{2n}$;
  
  \item $\langle\mathbf{a}, \mathbf{b}\rangle_K:= \sum_{k=1}^d a_k\overline{b_k}\in K$, the $K$-valued Hermitian inner product on $K^d$.
\end{enumerate}
\end{notation}
They are related by $\langle\sigma(\mathbf{a}), \sigma(\mathbf{b})\rangle_{\C^{dn}}= \Tr_{K/\Q}(\langle\mathbf{a}, \mathbf{b}\rangle_K)$, which is in $\Q$. So, under the identification $\C^{dn}\cong\R^{2dn}$, $\langle\sigma(\mathbf{a}), \sigma(\mathbf{b})\rangle_{\R^{2dn}}=\Tr_{K/\Q}(\langle\mathbf{a}, \mathbf{b}\rangle_K)$.

\subsection{The orbit group and logsine geometry}\label{subsec:orbit}

Let $\sigma_{-1}: \zeta\mapsto\zeta^{-1}$ be complex conjugation, which is used to identify pairs $\{\sigma_a, \sigma_{-a}\}$. The orbit group is the quotient
\begin{align}
  G=(\Z/2^k\Z)^\times/\{\pm 1\} \cong \Z/2^{k-2}\Z
  \label{E:orbgroup}
\end{align}
with $|G|=2^{k-2}$. A generator of $G$ is the class of $5$ modulo $2^k$.

For $j=0, \dots, |G|-1$, let $\mathrm{orb}(j)$ be the $j$-th orbit representative under the multiplication by $5$ modulo $2^k$, normalized to $\mathrm{orb}(j)=\min(\mathrm{orb}(j), 2^k-\mathrm{orb}(j))$. Define its logsine vector as
\begin{align}
  z_j=\log\bigl(2|\sin(\pi\cdot \frac{\mathrm{orb}(j)}{2^k})|\bigr), \quad j=0, \dots, |G|-1.
  \label{E:logsine}
\end{align}
This encodes the absolute value of cyclotomic units $\xi_a=(\zeta^a-\zeta^{-a})/(\zeta-\zeta^{-1})$ (see Part I) in the log-embedding.

\subsection{Module lattices and the MLWE problem}\label{subsec:module}

\begin{definition}
\label{def:Modlattice}
A module lattice over the ring $R$ with rank $d$ is a finitely generated torsion-free $R$-submodule $M\subseteq K^d$. For $k\le 8$, $R$ is a principal ideal domain (PID). So, every such module is free, i.e., admits an $R$-basis $\mathbf{b}_1, \dots, \mathbf{b}_d\in K^d$ with $M=\mathbf{b}_1 R\oplus\cdots\oplus \mathbf{b}_d R$. For $k\ge 9$, we focus on free $R$-modules.

\end{definition}

Under the coordinate-wise Galois embedding $\sigma(\mathbf{m})=(\sigma(m_1), \dots, \sigma(m_d))\in\C^{dn}$ for $\mathbf{m}=(m_1, \dots, m_d)\in K^d$, the image $\sigma(M)$ is a lattice of $\Z$-rank $dn$, spanning an $\R$-subspace of dimension $dn$ in $\C^{dn}\cong\R^{2dn}$. Its covolume in that subspace is given for free $R$-modules by
\begin{align}
  \det(\sigma(M))=n^{dn/2}\cdot |\Nm_{K/\Q}({\rm det}_R(M))|,
  \label{E:detsigmaM}
  \end{align}
where ${\rm det}_R(M)$ denotes the determinant of $M$ as an $R$-module and $\Nm_{K/\Q}$ denotes the field norm.

The Galois-conjugate matrix $V_{\ell,a}=\zeta^{\ell a}$ ($\ell$ ranging over odd residues mod $2^k$, $0\le a<n$) satisfies $VV^*=nI_n$ by Lemma \ref{lem:orth}, so $|\det V|^2=n^n$. Hence, the change-of-basis from a $\Z$-basis $\{\zeta^a\}_{0\le a<n}$ of $R$ to its real image under $\sigma$ has Jacobian $n^{n/2}$ per copy of $K$, giving $n^{dn/2}$ over $d$ copies. The same Jacobian applies to any rank-1 sublattice $M_i=J_i\tilde{\mathbf{b}}_i$, where its covolume is given by $\det(\sigma(M_i))=|\Nm_{K/\Q}(J_i)|\cdot n^{n/2}\cdot |\Nm_{K/\Q}(\langle\tilde{\mathbf{b}}_i, \tilde{\mathbf{b}}_i\rangle_K)|^{1/2}$, which will be used in 
Section \ref{sec:twophase}.

\begin{definition}
\label{D:modsvp}
The Module-$\gamma$-SVP problem is defined: given an $R$-basis $(\mathbf{b}_1, \dots, \mathbf{b}_d)$ of a rank-$d$ free module $M$, find $\mathbf{v}\in M\setminus\{\mathbf{0}\}$ such that 
\begin{align}
  \|\sigma(\mathbf{v})\|_2 \le \gamma\cdot\lambda_1(\sigma(M)).
  \label{E:modsvp}
\end{align}

\end{definition}

\begin{definition}
\label{D:MLWE}
Let $R$ be a ring, $q\ge 2$ a modulus, $R_q:= R/qR$, and $\chi$ be an error distribution on $R_q$. For a secret $\mathbf{s}\in R_q^d$, the MLWE problem \cite{LS15,ADPS16} is defined by: given $m$ independent samples
\begin{align}
(\mathbf{a}_i, \mathbf{b}_i), \ \mathbf{b}_i=\langle \mathbf{a}_i,  \mathbf{s}\rangle+\mathbf{e}_i,\ i=1, \dots, m,
\label{MLWE}
\end{align}
with uniform $\mathbf{a}_i\in R_q^d$ and error $\mathbf{e}_i \sim \chi$, recover $\mathbf{s}$.
\end{definition}

Langlois and Stehl\'{e} \cite{LS15} proved that MLWE with a rank $d$ is at least as hard as $\gamma$-$\SVP$ on rank-$d$ free module lattices over $R$, for appropriate $\gamma$.

\begin{notation}\label{not:partII}
Throughout this paper we take use of the following:
\begin{itemize}[nosep]
\item $n=2^{k-1}$ denotes the degree $[K_k:\Q]=\varphi(2^k)$, i.e., the degree of the $2^k$-th cyclotomic field over $\Q$. The degree of its maximal real subfield is $[K_k^+:\Q]=\varphi(2^k)/2=2^{k-2}$.

\item $\tO(f)$ denotes $O(f\cdot\mathrm{polylog}(f))$.

\item PID stands for principal ideal domain: an integral domain in which every ideal is generated by a single element.

\item $\mathfrak{S}_d$ denotes the symmetric group on $\{1, \dots, d\}$.

\item DFT (or FFT) denotes the discrete Fourier transform over the group $G$; IFFT denotes its inverse; $\odot$ denotes the Hadamard (entry-wise) product of vectors.
\end{itemize}

\end{notation}

\section{The CDPR's procedure for Ideal Lattices}\label{sec:cdpr}

Before extending CDPR's method to module lattices, we first review the CDPR quantum attack \cite{CDPR16} on principal ideal lattices ($d=1$), establishing the notation and identifying the sign-selection sub-problem which will be discussed in Section \ref{sec:milp}.

\subsection{Overview of CDPR's method}\label{subsec:cdpr-overview}

Given a principal ideal $I=\alpha R\subseteq R$, the CDPR's attack finds a short generator $\alpha'=\alpha\epsilon$ (for some unit $\epsilon\in R^\times$) with $\|\sigma(\alpha')\|_\infty$ bounded. The analysis has three phases:
\begin{enumerate}[label=\textbf{Phase \arabic*:},leftmargin=*,nosep]
\item \textsl{Quantum PIP.} Use the quantum algorithm of Biasse–Song \cite{BiasseSong16,EHKS14} to find some one generator $\alpha_0$ of $I$ in a polynomial time. 

\item \textsl{Log-unit CVP.} The set of all generators of $I$ is $\{\alpha_0\epsilon: \epsilon\in R^\times\}$. In the log-embedding $\log|\sigma(\cdot)|: \R_{>0}^n\to\R^n$, finding the shortest generator reduces to a closest vector problem (CVP: given a lattice $\Lambda$ and a target point $\mathbf{t}$, find $\mathbf{v}\in\Lambda$ minimizing $\|\mathbf{t}-\mathbf{v}\|$) on the log-unit lattice $\Lambda=\{\log|\sigma(\epsilon)|: \epsilon\in R^\times\}\subset H_0$, where $H_0=\{\mathbf{x}\in\R^n: \sum x_i=0\}$ is the trace-zero hyperplane.

\item \textsl{Decode/sign selection.} The CVP in Phase 2 is solved by Babai's round-off algorithm \cite{Babai86} applied to the cyclotomic unit basis of $\Lambda$. The error in this step is controlled by the sign-selection problem: choosing signs $\mathbf{s}\in\{\pm 1\}^{|G|-1}$ to minimize $\|\mathbf{M}\mathbf{s}\|_\infty$, where $\mathbf{M}$ is the error matrix derived from the logsine vector $\mathbf{z}$.
\end{enumerate}

The overall approximation factor is $\gamma=\exp(\tO(\sqrt{n}))$, where the dominant comes from the distance between the target point and the nearest lattice point in $\Lambda$. 

\subsection{The sign-selection problem}\label{subsec:sign-def}

In this subsection we define the sign-selection problem precisely. 

The cyclotomic units $\xi_a$ (Part I) generate a sublattice of the log-unit lattice $\Lambda$. By Main Theorem (Part I), we proved the index $[R^\times:C^+]=h_k^+=1$ (for $k\le 12$), so the cyclotomic units generate $\Lambda$ up to the torsion subgroup.

Using Babai's round-off algorithm on the cyclotomic unit basis, it requires assigning signs to orbit representatives. Specifically, define the error matrix $\mathbf{M}\in\R^{|G|\times(|G|-1)}$ with its component as 
\begin{eqnarray}  M_{i,j}=\Re(\IFFT(\hat{\mathbf{z}}\odot\widehat{\mathbf{e}_j^{\mathrm{ext}}}))_i-\frac{1}{2}z_i,
  \label{E:errormatrix}
\end{eqnarray}
where $\mathbf{z}\in\R^{|G|}$ is the logsine vector defined in Eq.\eqref{E:logsine}, $\hat{\mathbf{z}}=\FFT(\mathbf{z})$ denotes discrete Fourier transform over $G$, $\mathbf{e}_j^{\mathrm{ext}}\in\R^{|G|}$ is the vector with constant $\frac{1}{2}$ at the position of orbit element $j$ and zero elsewhere, and $\odot$ denotes the Hadamard entry-wise product. 

For any sign vector $\mathbf{s}\in\{\pm1\}^{|G|-1}$, the CDPR's decode error is given by $\mathbf{e}=\mathbf{M}\mathbf{s}\in\R^{|G|}$. 

\begin{definition}
\label{D:signprob}
The balanced sign-selection problem is to optimize 
\begin{align}
  \delta^*(k)=\min_{\substack{s\in\{\pm1\}^{N_s},  \sum_{j=1}^{N_s} s_j\in\{\pm1\}}} \|M\mathbf{s}\|_\infty
  \label{E:signprob}
\end{align}
with $N_s=|G|-1$. 

\end{definition}

The balance constraint $\sum_j s_j\in\{\pm 1\}$ ensures a near-equal number of positive and negative signs.

\subsection{The tower greedy} 

In this subsection, we introduce the tower greedy algorithm \cite{CDPR16}, which assigns signs level-by-level in the cyclotomic tower
\begin{align}
\mathbb{Q}=K_2^+ \subset K_3^+=\mathbb{Q}(\zeta_8) \subset K_4^+=\mathbb{Q}(\zeta_{16}) \subset \cdots \subset K_k^+=\mathbb{Q}(\zeta_{2^k}),
\label{E:towergreedy}
\end{align}
where $K_L^+$ denotes the maximal real subfield of $\mathbb{Q}(\zeta_{2^L})$ for notational simplicity. At each level $L$, the new orbit elements (from the extension $K_L^+/K_{L-1}^+$) receive signs $\pm 1$ chosen to minimize the partial error $\|\mathbf{M}_{\le L}\,\mathbf{s}_{\le L}\|_\infty$, where $\mathbf{M}_{\le L}$ and $\mathbf{s}_{\le L}$ denote the columns of the basis matrix and the sign vector corresponding to all levels no more than $L$, respectively.

\section{Trace Orthogonality and Covering Radius}\label{sec:orth}

In this section we show a key property of 2-power cyclotomic rings, i.e., the trace orthogonality of the power basis. We also analyze the covering radius of the ring of integers and establish that the true covering radius is $\tfrac{\sqrt n}{2}\le\mu_\infty(\sigma(R))\le O(\sqrt{n\log n})$, i.e., $\mu_\infty(\sigma(R))=\tilde\Theta(\sqrt n)$, far below the naive coordinate-wise rounding bound $n/2$. 

\subsection{The trace orthogonality lemma}\label{subsec:traceorth}

The trace orthogonality of the power basis $\{1,\zeta, \dots, \zeta^{n-1}\}$ is a special property of $2$-power cyclotomic fields that plays a central role in our module reduction algorithm.

\begin{lemma}\label{lem:orth}
For $0\le a, b < n$, we have the trace orthogonality 
\begin{align}
  \langle\sigma(\zeta^a),\sigma(\zeta^b)\rangle=n \delta_{a,b},
  \label{E:traceorth}
\end{align}
where $\delta_{a,b}$ denotes the delta function.

\end{lemma}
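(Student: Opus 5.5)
The plan is to compute the inner product directly from the definition of the embedding and reduce it to a geometric sum over the odd residues modulo $2^k$. By Definition \ref{D:trace} and Eq.~\eqref{E:traceembed},
\[
\langle\sigma(\zeta^a),\sigma(\zeta^b)\rangle_{\C^n}=\Tr_{K/\Q}(\zeta^a\overline{\zeta^b})=\sum_{\ell\in(\Z/2^k\Z)^\times}\zeta^{\ell(a-b)} .
\]
This quantity lies in $\Q$, so its real part (the form $\langle\cdot,\cdot\rangle$ of Notation \ref{not1:partII}) equals the Hermitian value. Set $m=a-b$. Since $0\le a,b<n$, we have $|m|<n=2^{k-1}$. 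It therefore suffices to show $S(m):=\sum_{\ell\ \mathrm{odd},\,0<\ell<2^k}\zeta^{\ell m}$ equals $n$ when $m=0$ and $0$ when $0<|m|<2^{k-1}$.

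The case $m=0$ is immediate, since every term is $1$ and there are $\varphi(2^k)=n$ terms.

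For $m\neq 0$, write the odd residues as $\ell=2j+1$ with $0\le j<2^{k-1}$. This gives
\[
S(m)=\zeta^{m}\sum_{j=0}^{2^{k-1}-1}\bigl(\zeta^{2m}\bigr)^j .
\]
Here $\omega:=\zeta^{2m}$ is a $2^{k-1}$-th root of unity. It is $\neq 1$ precisely because $2^k\nmid 2m$, which holds since $0<|m|<2^{k-1}$. The geometric sum then equals $(\omega^{2^{k-1}}-1)/(\omega-1)=0$, since $\omega^{2^{k-1}}=\zeta^{2^k m}=1$.

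An alternative I would mention is the Galois-theoretic argument. The element $\zeta^m$ is a primitive $2^{k'}$-th root of unity with $k'\ge 2$, and its trace from $\Q(\zeta^m)$ to $\Q$ is $0$ because its minimal polynomial $x^{2^{k'-1}}+1$ has vanishing subleading coefficient. Transitivity of the trace then gives the claim.

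The only step needing care is the range condition: the bound $|a-b|<2^{k-1}$ is exactly what excludes $\omega=1$. For $a-b=\pm 2^{k-1}$ the sum would be $-n$, reflecting $\zeta^n=-1$, so the lemma genuinely depends on restricting to $0\le a,b<n$. I would also remark that this computation is what gives $VV^*=nI_n$ for the matrix $V_{\ell,a}=\zeta^{\ell a}$ used earlier, so the lemma is stated in its Hermitian form there.
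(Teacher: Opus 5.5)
Your proof is correct, but your main argument takes a different route from the paper's. The paper also reduces to $\Tr_{K/\Q}(\zeta^{c})$ with $c=a-b$. It then writes $c=2^s c'$ with $c'$ odd and $0\le s\le k-2$, so that $\zeta^c$ is a primitive $2^{k-s}$-th root of unity, and applies transitivity of the trace through $\Q(\zeta_{2^{k-s}})$. The inner trace is $2^s\zeta^c$, and the outer trace is the sum of the conjugates of a primitive $2^{k-s}$-th root of unity, which vanishes because $k-s\ge 2$. This is essentially your ``alternative''. Your primary argument instead writes the odd residues as $\ell=2j+1$ and sums a geometric series in $\omega=\zeta^{2m}$. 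That needs no field theory (no intermediate fields, degrees, or minimal polynomials). It also makes the role of the hypothesis $|a-b|<2^{k-1}$ transparent, since that is exactly the condition $\omega\neq 1$. And it is literally the column-orthogonality computation for $V_{\ell,a}=\zeta^{\ell a}$ used later. The paper's tower argument is more structural: it is an instance of the general fact that the trace of a primitive $m$-th root of unity is the M\"obius value at $m$. That makes clear why orthogonality of the power basis is special to $2$-power conductors; for other conductors these traces, and the corresponding Ramanujan sums, do not vanish. Your geometric-series route uses the same special feature more concretely, namely that the units mod $2^k$ form an arithmetic progression of step $2$. Either argument fully proves the lemma.
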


\begin{proof} From Eq.\eqref{E:traceembed}, we have 
\begin{align}
\langle\sigma(\zeta^a), \sigma(\zeta^b)\rangle_{\C^n}=\Tr_{K/\Q}(\zeta^{a-b}).
\label{E:innprodeq}
\end{align}
Since $\Tr_{K/\Q}(\zeta^{a-b})\in\Q\subset\R$ for all $a, b$, we get 
\begin{align}
\langle\sigma(\zeta^a), \sigma(\zeta^b)\rangle=\Tr_{K/\Q}(\zeta^{a-b}).
\label{E:innprodeqb}
\end{align}
In what follows, we prove $\Tr_{K/\Q}(\zeta^c)=n\cdot\delta_{c,0}$ for any $c$ with $|c| < n$.

\textit{Case $c=0$}: we have $\Tr_{K/\Q}(1)=[K: \Q]=n$, as the trace of the multiplicative identity equals the degree of the field extension.

\textit{Case $c\not\equiv 0 \pmod{2^{k-1}}$}: Let $c=2^s \cdot c'$, where $c'$ is odd and $0\le s\le k-2$. Then, $\zeta^c=\zeta_{2^{k-s}}^{c'}$ is a primitive $2^{k-s}$-th root of unity as $c'$ is odd. Using the transitivity of the trace, we get 
\begin{align}
\Tr_{K/\Q}(\zeta^c)=\Tr_{\Q(\zeta_{2^{k-s}})/\Q}( \Tr_{K/\Q(\zeta_{2^{k-s}})}(\zeta^c) ).
\label{E:traceq}
\end{align}

Since $\zeta^c \in \Q(\zeta_{2^{k-s}})$, the inner trace satisfies 
\begin{align}
\Tr_{K/\Q(\zeta_{2^{k-s}})}(\zeta^c)=[K:\Q(\zeta_{2^{k-s}})] \cdot \zeta^c=2^s \cdot \zeta^c.
\label{E:intraceq}
\end{align}

The outer trace is the sum of all Galois conjugates of a primitive $2^{k-s}$-th root of unity, which vanishes for $s\le k-2$.

For the case $|c|=n=2^{k-1}$, we obtain $\zeta^c=-1$ and $\Tr(-1)=-n \neq 0$. This is excluded by $|c|<n$. So, from both cases we get 
\begin{align}
\langle\sigma(\zeta^a),\sigma(\zeta^b)\rangle=\Tr(\zeta^{a-b})=n\delta_{a,b}. 
\label{E:inorth}
\end{align}

\end{proof}

\begin{example}
Consider $k=3$, so $K=\Q(\zeta_8)$ and $n=4$. The four Galois embeddings map $\zeta=\zeta_8$ into  $\zeta_8^1, \zeta_8^3, \zeta_8^5, \zeta_8^7$. Let $\omega=e^{\mathrm{i}\pi/4}$, the embedding of $\zeta^m$ is $\sigma(\zeta^m)=(\omega^m, \omega^{3m}, \omega^{5m}, \omega^{7m})\in\C^4$. This means $\sigma(1)=(1, 1, 1, 1)$, $\sigma(\zeta)=(\omega, \omega^3, \omega^5, \omega^7)$, $\sigma(\zeta^2)=(\mathrm{i}, -\mathrm{i}, \mathrm{i}, -\mathrm{i})$, $\sigma(\zeta^3)=(\omega^3, \omega, \omega^7, \omega^5)$. The Hermitian inner product $\langle\sigma(1), \sigma(\zeta^2)\rangle =1\cdot\overline{\mathrm{i}}+1\cdot\overline{-\mathrm{i}} +1\cdot\overline{\mathrm{i}}+1\cdot\overline{-\mathrm{i}} =0$, and similarly all other off-diagonal pairs vanish, while each diagonal entry is $|\omega^{\ell m}|^2$ summed over four embeddings $=4=n$. So the four embedded vectors form an orthogonal frame in $\C^4$.
\end{example}

\subsection{Covering radius bounds}\label{subsec:covering}

In this subsection we analyze the covering radius of the ring $R$. We first prove the coordinate-wise rounding bound, which gives the rounding error for our base module reduction algorithm, and then establish tighter bounds.

\begin{lemma}\label{lem:cover}
For any $t\in K\otimes_\Q\R$, the following rounding bounds hold:
\begin{enumerate}[label=(\roman*),nosep]
\item There exists $c\in R$ such that $\|\sigma(t-c)\|_\infty \le \frac{n}{2}$.
\item The same $c$ satisfies $\|\sigma(t-c)\|_2 \le \frac{n}{2}$.
\end{enumerate}
\end{lemma}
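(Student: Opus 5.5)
The plan is to use the power basis coordinates together with Lemma~\ref{lem:orth}. First, identify $K\otimes_\Q\R$ with $\R^n$ through the power basis: since $\{1,\zeta,\dots,\zeta^{n-1}\}$ is a $\Q$-basis of $K$, every $t\in K\otimes_\Q\R$ can be written uniquely as $t=\sum_{a=0}^{n-1} t_a\zeta^a$ with $t_a\in\R$. Also, $R=\Z[\zeta]$ is exactly the set of such combinations with $t_a\in\Z$. The embedding $\sigma$ extends $\R$-linearly, giving $\sigma(t)=\sum_a t_a\sigma(\zeta^a)$.

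Take $c=\sum_a \lfloor t_a\rceil\zeta^a\in R$, the coordinate-wise rounding. Write $e_a=t_a-\lfloor t_a\rceil$, so $|e_a|\le\tfrac12$ and $t-c=\sum_a e_a\zeta^a$.

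For (i), fix any embedding index $\ell$ (an odd residue mod $2^k$). Since $|\zeta^{\ell a}|=1$, the triangle inequality gives
\begin{align*}
|\sigma_\ell(t-c)|=\Bigl|\sum_{a=0}^{n-1}e_a\zeta^{\ell a}\Bigr|\le\sum_{a=0}^{n-1}|e_a|\le\frac n2 .
\end{align*}
Taking the maximum over $\ell$ yields $\|\sigma(t-c)\|_\infty\le n/2$.

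For (ii), expand the squared norm using the real inner product of Notation~\ref{not1:partII}:
\begin{align*}
\|\sigma(t-c)\|_2^2=\sum_{a,b}e_ae_b\langle\sigma(\zeta^a),\sigma(\zeta^b)\rangle .
\end{align*}
Because the $e_a$ are real, bilinearity of the real inner product applies. By Lemma~\ref{lem:orth} this sum collapses to $n\sum_a e_a^2\le n\cdot n\cdot\tfrac14$. Hence $\|\sigma(t-c)\|_2\le n/2$ for the same $c$.

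The main point needing care is the $\ell_2$ step. Lemma~\ref{lem:orth} is stated for the real part of the Hermitian product, so the expansion must be done with real coefficients $e_a$, which holds precisely because we work in power-basis coordinates. This is why the power-basis identification is made first. I do not expect any deeper obstacle.

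Both bounds are crude: the $\ell_2$ bound actually gives $\sqrt n\cdot\tfrac{\sqrt n}{2}$, which coincidentally equals $n/2$. They are consistent with the later claim that the true covering radius is $\tilde\Theta(\sqrt n)$.
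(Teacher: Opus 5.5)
Your proposal is correct and follows the paper's proof essentially step for step. You use coordinate-wise rounding in the power basis, the triangle inequality with $|\zeta^{\ell a}|=1$ for (i), and the trace orthogonality of Lemma~\ref{lem:orth} to collapse $\|\sigma(t-c)\|_2^2$ to $n\sum_a e_a^2\le n^2/4$ for (ii).
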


\begin{proof}
Since $\{1, \zeta, \dots, \zeta^{n-1}\}$ is a $\Q$-basis of $K$, any element $t \in K\otimes_\Q\R$ admits a unique real-coefficient expansion as
\begin{align}
t=\sum_{m=0}^{n-1} x_m \zeta^m, \ x_m \in \R.
\label{E:exp}
\end{align}
For each $m$, let $c_m=\lfloor x_m \rceil$ denote the nearest integer to $x_m$. Define $c=\sum_{m=0}^{n-1} c_m \zeta^m \in R$. The rounding error coefficients $\epsilon_m=x_m-c_m$ satisfy $|\epsilon_m| \le 1/2$ for all $m$.

(i) For any Galois embedding $\sigma_\ell: K \hookrightarrow \C$ (corresponding to odd $\ell$ modulo $2^k$), we have
\begin{align}
\sigma_\ell(t-c)=\sum_{m=0}^{n-1} \epsilon_m \sigma_\ell(\zeta^m).
\label{E:emb}
\end{align}
Since $\sigma_\ell(\zeta^m)=\zeta^{\ell m}$ is a root of unity, it implies $|\sigma_\ell(\zeta^m)|=1$ for all $m$ and $\ell$. Using the triangle inequality $|x+y|\leq |x|+|y|$, we get 
\begin{align}
|\sigma_\ell(t-c)| \le \sum_{m=0}^{n-1} |\epsilon_m| \cdot |\sigma_\ell(\zeta^m)|=\sum_{m=0}^{n-1} |\epsilon_m|\le \frac{n}{2}
\label{E:sigtri}
\end{align}
as each term is bounded by $1/2$. Since this holds for all embeddings $\sigma_\ell$'s, we get $\|\sigma(t-c)\|_\infty \le n/2$.

(ii) From Lemma \ref{lem:orth}, the embedding vectors $\{\sigma(\zeta^m)\}_{m=0}^{n-1}$ are pairwise orthogonal with squared norm $n$, i.e., 
\begin{align}
\langle \sigma(\zeta^a), \sigma(\zeta^b) \rangle=n\cdot \delta_{a,b}.
\label{E:typair}
\end{align}
We then obtain the squared Euclidean norm as
\begin{align}
\|\sigma(t-c)\|_2^2&=\Big\langle \sum_{m=0}^{n-1} \epsilon_m \sigma(\zeta^m), \sum_{m=0}^{n-1} \epsilon_m \sigma(\zeta^m) \Big\rangle
\notag
  \\
&=\sum_{a,b=0}^{n-1} \epsilon_a \epsilon_b \langle \sigma(\zeta^a), \sigma(\zeta^b) \rangle=n \sum_{m=0}^{n-1} \epsilon_m^2.
\end{align}
Since $|\epsilon_m| \le 1/2$, we get $\epsilon_m^2 \le 1/4$ for all $m$. This further implies that
\begin{align}
\|\sigma(t-c)\|_2^2 \le \frac{n^2}{4}.
\label{E:signorm2}
\end{align}
It yields the bound $\|\sigma(t-c)\|_2 \le n/2$.
\end{proof}

\begin{proposition}\label{prop:coverlower} 
The $L^\infty$-covering radius of the embedded ring $\sigma(R)$ satisfies
\begin{align}
  \frac{\sqrt{n}}{2} \le \mu_\infty(\sigma(R)) \le O(\sqrt{n\log n}).
  \label{E:coverbounds}
\end{align}
In particular, $\sqrt{n}/2\le\mu_\infty(\sigma(R))\le O(\sqrt{n\log n})$, i.e., $\mu_\infty(\sigma(R))=\tilde{\Theta}(\sqrt n)$.

\end{proposition}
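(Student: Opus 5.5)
The lower bound comes from an explicit deep hole measured in $L^2$. The upper bound comes from randomized rounding of the power-basis coefficients plus a union bound. Both steps rest on the isometry in Lemma~\ref{lem:orth}: the map $(x_m)_m\mapsto\sigma(\sum_m x_m\zeta^m)$ is $\sqrt n$ times an isometry from $\R^n$ onto its image. For every $\epsilon\in\R^n$ this gives
\[
\Bigl\|\sigma\Bigl(\textstyle\sum_m \epsilon_m\zeta^m\Bigr)\Bigr\|_2^2=n\|\epsilon\|_2^2 ,
\]
which is exactly the computation in Lemma~\ref{lem:cover}(ii).

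\textbf{Lower bound.} Take the target $t=\tfrac12\sum_{m=0}^{n-1}\zeta^m\in K\otimes_\Q\R$. For any $c=\sum_m c_m\zeta^m\in R$ we have $c_m\in\Z$, so the coefficients $\epsilon_m=\tfrac12-c_m$ of $t-c$ satisfy $|\epsilon_m|\ge \tfrac12$. The isometry then gives $\|\sigma(t-c)\|_2^2\ge n\cdot n/4$. Since $\sigma(t-c)\in\C^n$ has $n$ coordinates, $\|\sigma(t-c)\|_\infty\ge \|\sigma(t-c)\|_2/\sqrt n\ge \sqrt n/2$. This holds for every $c\in R$, so $\mu_\infty(\sigma(R))\ge\sqrt n/2$. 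Here $\mu_\infty$ is read with respect to the $\R$-span of $\sigma(R)$, which is where the targets $\sigma(t)$ live.

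\textbf{Upper bound.} Fix $t=\sum_m x_m\zeta^m$; by periodicity we may assume $x_m\in[0,1)$. Round each coefficient independently at random: set $c_m=\lceil x_m\rceil$ with probability $x_m-\lfloor x_m\rfloor$ and $c_m=\lfloor x_m\rfloor$ otherwise. The errors $\epsilon_m=x_m-c_m$ are then independent, have mean zero, and lie in an interval of length $1$.

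For each embedding $\ell$, consider $\Re\sigma_\ell(t-c)=\sum_m\epsilon_m\cos(2\pi\ell m/2^k)$, and likewise the imaginary part. Each is a sum of independent mean-zero terms whose ranges have length at most $1$. Hoeffding's inequality gives
\[
\Pr\bigl[|\Re\sigma_\ell(t-c)|>\lambda\bigr]\le 2e^{-2\lambda^2/n},
\]
and the same bound for $\Im\sigma_\ell(t-c)$. Take a union bound over the $2n$ real quantities with $\lambda=\sqrt{n\log(8n)/2}$; the failure probability is then below $1$. Hence some $c\in R$ has every real and imaginary part of $\sigma(t-c)$ bounded by $\lambda$, so $\|\sigma(t-c)\|_\infty\le\sqrt2\,\lambda=O(\sqrt{n\log n})$.

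Since $t$ was arbitrary, taking the maximum over the compact fundamental domain gives $\mu_\infty(\sigma(R))=O(\sqrt{n\log n})$. The statement $\tilde\Theta(\sqrt n)$ follows by combining the two bounds.

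The only delicate point is the upper bound. Independence holds in the power-basis coordinates rather than in the embedding coordinates, so the concentration step must be applied coordinate by coordinate to the fixed linear forms $\epsilon\mapsto\sigma_\ell(\sum_m\epsilon_m\zeta^m)$, whose coefficients have modulus $1$. That is precisely why the union bound costs only a $\sqrt{\log n}$ factor. If a deterministic construction is preferred, the method of conditional expectations (pessimistic estimators) derandomizes this rounding, but existence is all the proposition needs.
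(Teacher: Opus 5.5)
Your proof is correct and follows essentially the same route as the paper. The lower bound uses the same deep hole $t=\tfrac12\sum_m\zeta^m$ with $L^2$-to-$L^\infty$ averaging via Lemma~\ref{lem:orth}, and the upper bound uses randomized rounding of power-basis coefficients, Hoeffding on real and imaginary parts, and a union bound over embeddings. Your floor/ceiling rounding is slightly cleaner than the paper's $\pm1$-shift rounding: your errors genuinely lie in an interval of length $1$, whereas the paper's can reach magnitude $3/2$ despite its claim of a bound of $1$, so your Hoeffding constants are exactly right.
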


\begin{proof}
We first show the lower bound. Consider a target $t=\frac{1}{2}\sum_{m=0}^{n-1}\zeta^m$, for any $c=\sum c_m\zeta^m \in R$ with $c_m\in\Z$, define $\epsilon_m=\frac{1}{2}-c_m$. Since $c_m\in\Z$, each $\epsilon_m$ is a half-integer with $|\epsilon_m|\ge\frac{1}{2}$.

From Lemma \ref{lem:orth}, we have
\begin{align}
  \|\sigma(t-c)\|_2^2=n\sum_{m=0}^{n-1}\epsilon_m^2 \ge \frac{n^2}{4}.
  \label{E:signnorm2}
\end{align}
Since $\|\sigma(t-c)\|_2^2=\sum_{\ell\text{ odd}}|\sigma_\ell(t-c)|^2$ is a sum of $n$ terms, we then get
\begin{align}
  \|\sigma(t-c)\|_\infty^2
  \ge \frac{1}{n}\sum_\ell|\sigma_\ell(t-c)|^2 \ge \frac{n}{4}.
  \label{E:signlow}
\end{align}
So, it yields $\|\sigma(t-c)\|_\infty \ge \sqrt{n}/2$ for every $c\in R$, gives $\mu_\infty(\sigma(R))\ge\sqrt{n}/2$.

In what follows, we prove the upper bound. For $t=\sum_m x_m\zeta^m\in K\otimes_\Q\R$, let $x_m=\lfloor x_m\rceil+\delta_m$ with $\delta_m\in[-1/2,1/2]$, and $c_m=\lfloor x_m\rceil+s_m$, where $s_m\in\{\pm 1, 0\}$ is chosen independently with $\Pr[s_m=+1]=(1+\delta_m)/2$ and $\Pr[s_m=-1]=(1-\delta_m)/2, \Pr[s_m=0]=0$ when $\delta_m\ne 0$, and $s_m\in\{0,\pm 1\}$ uniformly when $\delta_m=0$. Then $\mathbb{E}[c_m-x_m]=0$. So, the embedding-wise error is given by 
\begin{align}
  \sigma_\ell(c-t)=\sum_{m=0}^{n-1}(c_m-x_m)\zeta^{\ell m},
  \label{E:sigma}
\end{align}
which is a sum of independent centered random variables, each bounded in magnitude by $1$. 

By applying the Hoeffding's inequality to the real and imaginary parts of Eq.(\ref{E:sigma}), we obtain 
\begin{align}
  \Pr\bigl[|\sigma_\ell(c-t)|\ge u\bigr] \le 4\exp(-\frac{u^2}{n}).
  \label{E:probles}
\end{align}
Taking a union bound over the $n$ embeddings and setting
$u=\sqrt{n\ln(8n)}$, the right-hand side is less than 1. So, there exists a realization with $\|\sigma(c-t)\|_\infty\le\sqrt{n\ln(8n)}= O(\sqrt{n\log n})$. Since this holds for every $t$, we finally get $\mu_\infty(\sigma(R))=O(\sqrt{n\log n})$.
\end{proof}

\begin{example}\label{ex:covergap}
For $n=256$, the coordinate-wise rounding (Lemma \ref{lem:cover}) guarantees rounding error $\le n/2=128$ in the embedding norm. The true covering radius bound (Proposition \ref{prop:coverlower}) is $\mu_\infty(\sigma(R))\le O(\sqrt{n\log n})\approx 38$, a factor of about $3.4$ tighter. 
\end{example}

\section{Module Reduction via Diagonal CDPR's Analysis}\label{sec:twophase}

In this section we extend the CDPR's attack from ideal lattices ($d=1$) to module lattices ($d\ge 2$) over 2-power cyclotomic rings. Our algorithm inputs an $R$-basis satisfying a balance hypothesis (Definition \ref{D:balanced}) on its $K$-Gram-Schmidt vectors, which is proved for Gaussian distribution, and automatically satisfied for bases sampled from the MLWE distribution (Lemma \ref{lem:mlwebal}).

\paragraph{Main idea.} A rank-$d$ module lattice $\sigma(M)$ has dimension $dn$, which might be too large for CDPR's method (only for ideal lattices). Our idea is that the $K$-linear Gram-Schmidt process can split $M$ into $d$ rank-1 submodules $M_i=M\cap K\tilde{\mathbf{b}}_i$ that are mutually orthogonal after embedding (Remark~\ref{rem:GSembed}). SO, we have the decomposition  $\sigma(M)=\bigoplus_{i=1}^d\sigma(M_i)$ and the covolumes $\det(\sigma(M))=\prod_i\det(\sigma(M_i))$. Moreover, each $M_i$ is a rank-1 submodule that CDPR's method can be applied to get a vector $\mathbf{v}_i$ whose length is governed by $\det(\sigma(M_i))^{1/n}$. This further implies the minimum by their geometric mean $\bigl(\prod_i\det(\sigma(M_i))\bigr)^{1/(dn)}=\det(\sigma(M))^{1/(dn)}$, exactly the module determinant. 

Especially, our algorithm has three steps: (i) compute the $K$-linear Gram--Schmidt vectors $\tilde{\mathbf{b}}_1, \dots, \tilde{\mathbf{b}}_d$; (ii) optionally size-reduce; (iii) apply CDPR's procedure to each $M_i$ and return the shortest candidate, see the summarization in Figure \ref{fig:diagonalcdpr}. 

\begin{figure}[t]
\centering
\begin{tikzpicture}[
  font=\footnotesize,
  >={Latex[length=2mm]},
  box/.style={draw,rounded corners=2pt,inner sep=4pt,align=center,minimum height=8mm},
  modbox/.style={box,fill=blue!8,draw=blue!50},
  linebox/.style={box,fill=orange!10,draw=orange!60,minimum width=14mm},
  cdprbox/.style={box,fill=green!10,draw=green!50!black,minimum width=14mm},
  vbox/.style={box,fill=yellow!15,draw=yellow!50!black,minimum width=14mm},
  finalbox/.style={box,fill=red!10,draw=red!60,minimum width=30mm,minimum height=10mm},
  arrow/.style={->,thick,>=Latex},
  lab/.style={font=\scriptsize\itshape,text=black!70},
]
\node[modbox,minimum width=15mm,minimum height=10mm] (M) at (0,0){module $M\subseteq K^d$\\
   basis $\mathbf{b}_1,\dots,\mathbf{b}_d$\\
   {\scriptsize $\sigma(M)\subset\R^{2dn}$}};
\node[lab,above=2pt of M]{Input};
\node[linebox] (L1) at (3,1.7) {$L_1=K\tilde{\mathbf{b}}_1$};
\node[linebox] (L2) at (3,0.5) {$L_2=K\tilde{\mathbf{b}}_2$};
\node at (3,-0.5) {$\vdots$};
\node[linebox] (Ld) at (3,-1.5) {$L_d=K\tilde{\mathbf{b}}_d$};
\draw[arrow] (M.east) -- node[lab,above,sloped] {$K$-GS} (L1.west);
\draw[arrow] (M.east) -- (L2.west);
\draw[arrow] (M.east) -- (Ld.west);
\node[lab,right=1pt of L1,xshift=10mm,yshift=10mm, text=orange!70!black,align=left] (orth)
  {Lemma~\ref{lem:orth}};
\draw[<-,>=Latex,orange!60,thin] (L1.east) -- (orth);
\draw[<-,>=Latex,orange!60,thin] (L2.east) -- (orth);
\node[cdprbox] (C1) at (6,1.7) {CDPR on $M_1$};
\node[cdprbox] (C2) at (6,0.5) {CDPR on $M_2$};
\node at (6,-0.5) {$\vdots$};
\node[cdprbox] (Cd) at (6,-1.5) {CDPR on $M_d$};
\draw[arrow] (L1) -- (C1);
\draw[arrow] (L2) -- (C2);
\draw[arrow] (Ld) -- (Cd);
\node[vbox] (V1) at (8.5,1.7) {$\mathbf{v}_1$};
\node[vbox] (V2) at (8.5,0.5) {$\mathbf{v}_2$};
\node at (8.5,-0.5) {$\vdots$};
\node[vbox] (Vd) at (8.5,-1.5) {$\mathbf{v}_d$};
\draw[arrow] (C1) -- (V1);
\draw[arrow] (C2) -- (V2);
\draw[arrow] (Cd) -- (Vd);
\node[finalbox] (out) at (12,0)
  {$\mathbf{v}=\arg\min_i\|\sigma(\mathbf{v}_i)\|_2$\\
  {$\|\sigma(\mathbf{v})\|_2\le \sqrt{C}\gamma$\ \ \ \ }
   \\
{\scriptsize \qquad $\cdot \det(\sigma(M))^{1/(dn)}$}};
\draw[arrow] (V1) -- (out);
\draw[arrow] (V2) -- (out);
\draw[arrow] (Vd) -- (out);
\end{tikzpicture}
\caption{Schematic diagonal module reduction (Theorem \ref{T:base}). The rank-$d$ module is split by $K$-Gram-Schmidt into $d$ rank-$1$ submodules that are mutually orthogonal after the Galois embedding (Lemma \ref{lem:orth}), so $\prod_i\det(\sigma(M_i))=\det(\sigma(M))$. CDPR's procedure is applied independently to each $M_i$ and returns a candidate $\mathbf{v}_i$ with length governed by $\det(\sigma(M_i))^{1/n}$. Taking the shortest candidate, the power-mean inequality bounds the minimum by the geometric mean of the per-line bounds, which is exactly the module determinant.}
\label{fig:diagonalcdpr}
\end{figure}

\subsection{Input balance hypothesis}

Our module reduction algorithm inputs an $R$-basis $\mathbf{b}_1, \dots, \mathbf{b}_d$ of the free $R$-module $M$, computes its $K$-Gram-Schmidt vectors $\tilde{\mathbf{b}}_1, \dots, \tilde{\mathbf{b}}_d$, and applies CDPR's procedure to each rank-1 submodule $M_i=M\cap K\tilde{\mathbf{b}}_i$. 

\begin{definition}\label{D:balanced}
  An $R$-basis $\mathbf{b}_1, \dots, \mathbf{b}_d$ of a free $R$-module $M\subseteq K^d$ is $C$-balanced (with $C\ge 1$) if its $K$-Gram-Schmidt vectors satisfy
  \begin{align}
  \|\sigma(\tilde{\mathbf{b}}_i)\|_2^2\le C\cdot n\cdot |\Nm_{K/\Q}(\langle\tilde{\mathbf{b}}_i, \tilde{\mathbf{b}}_i\rangle_K)|^{1/n}, \quad 1\le i\le d.
  \label{E:balance}
  \end{align}
\end{definition}

Condition \eqref{E:balance} requires the trace is close to its AM-GM minimum.

\begin{example}\label{ex:unbalanced}
Consider a rank-$1$ module $M=\beta R\subseteq K$, where $\beta$ has highly skewed conjugates, i.e., $|\sigma_1(\beta)|=c$ large and $|\sigma_\ell(\beta)|=c^{-1/(n-1)}$ for the other $n-1$ embeddings, so $|\Nm(\beta)|=1$. Then $\|\sigma(\beta)\|_2^2\approx a^2$ while $n|\Nm(\langle\beta,\beta\rangle_K)|^{1/n}=n$. The balance constant is $C\approx c^2/n$, which can be arbitrarily large by choosing $c$. This is the kind of input that an MLWE distribution cannot produce.
\end{example}

\begin{remark}
\label{rem:balanceschemes}
The hypothesis \eqref{E:balance} is only for the Galois balance $C$ of Eq.\eqref{E:balance}. For a vector with i.i.d. bounded coordinates the conjugates are asymptotically $\mathcal{CN}(0,n\sigma_c^2)$ (Lemma \ref{lem:mlwebal}), so $C\to e^{\gamma}\approx 1.78$. Galois balance therefore holds for any key whose short generators have small, well-spread coefficients, see Table \ref{Ta:schemes}. The diagonal algorithm of Section \ref{sec:twophase} is special to the power-of-two cyclotomic ring $R=\Z[\zeta_{2^k}]$. Schemes over other rings inherit the balance but not the algorithm. 
\end{remark}

\begin{table}[t]
\centering
\caption{Applicability of the $C$-balance hypothesis \eqref{E:balance}. Ring fit indicates whether the power-of-two cyclotomic of  Sections~\ref{sec:orth}--\ref{sec:milp} applies. Galois balance refers to $C=O(1)$ for the short-generator lines.}
\label{Ta:schemes}
\begin{tabular}{@{}lllll@{}}
\toprule
Scheme & Ring & Short-key law & Ring fit & Galois balance \\
\midrule
ML-KEM   & $\Z[\zeta_{2^k}]$, $n=256$ & CBD$_{\eta}$, $\eta\in\{2,3\}$ & yes & $C=O(1)$
 \\
ML-DSA   & $\Z[\zeta_{2^k}]$, $n=256$ & unif.\ $[-\eta,\eta]$          & yes & $C=O(1)$
 \\
Falcon   & $\Z[\zeta_{2^k}]$, $n\in\{512,1024\}$ & Gaussian $(f,g)$+NTRU & yes & $C=O(1)^{\dagger}$
 \\
Hawk     & $\Z[\zeta_{2^k}]$, $n\in\{512,1024\}$ & short Gaussian basis & yes & $C=O(1)^{\dagger}$
\\
NTRU     & $\Z[x]/(x^{p}{-}1)$, $p$ prime & ternary $(f,g)$ & \emph{no}$^{\ddagger}$ & $C=O(1)$ (heur.)
 \\
BGV/BFV$^{\star\star}$ & $\Z[\zeta_{2^k}]$, $n$ large & ternary/sparse & yes & $C=O(1)^{\S}$
 \\
CKKS$^{\star\star}$    & $\Z[\zeta_{2^k}]$, $n$ large & ternary/sparse & yes & $C=O(1)^{\S}$
 \\
\bottomrule
\multicolumn{5}{@{}p{0.95\linewidth}@{}}{\footnotesize
$^{\dagger}$ The NTRU-type basis is unbalanced between its two lines ($\|\sigma(\tilde{\mathbf{b}}_1)\|\sim\|(f,g)\|$ versus $\|\sigma(\tilde{\mathbf{b}}_2)\|\sim q/\|(f,g)\|$), but this is the inter-line split absorbed by the power mean.
 \newline $^{\ddagger}$ $\Z[x]/(x^{p}{-}1)$ is not a power-of-two cyclotomic ring (and not a domain); the ternary secret is Galois-balanced.
 \newline $^{\S}$ Holds for dense ternary/Gaussian secrets; for sparse secrets of fixed Hamming weight $h$ the bound is governed by $h$ and the geometric-mean term can degrade. The large RNS modulus $q$ does not affect $C$, consistent with the $q$-independence of $\sigma_{g_0}$ (Part III).}
\end{tabular}
\end{table}

\begin{lemma}\label{lem:mlwebal}
Let $\mathbf{b}_1, \dots, \mathbf{b}_d\in R^d$ be chosen independently with coefficients drawn i.i.d. from a bounded real distribution with mean zero, variance $\sigma_c^2>0$, and finite fourth moment. Let   $B_i:=\langle\tilde{\mathbf{b}}_i, \tilde{\mathbf{b}}_i\rangle_K\in K^+$ for the $K$-valued squared length of the $i$-th Gram-Schmidt vector, and set $W^{(i)}_\ell:=\tfrac{|\sigma_\ell(\tilde{\mathbf{b}}_i)|^2}{n\sigma_c^2}$ with $\ell\in J:=\{1, 3, \dots, n-1\}$. Then the following results hold
  \begin{enumerate}[label=(\roman*),nosep]
    \item The balance ratio of line $i$ equals the
          arithmetic-to-geometric-mean ratio
    \begin{align}
      \frac{\|\sigma(\tilde{\mathbf{b}}_i)\|_2^2}{n|\Nm_{K/\Q}(B_i)|^{1/n}}=\frac{\operatorname{AM}(W^{(i)})}{\operatorname{GM}(W^{(i)})} \ge 1.
      \label{E:amgclt}
    \end{align}
    
    \item As $n\to\infty$, the per-line ratio converges in probability:
    \begin{align}
      \frac{\operatorname{AM}(W^{(i)})}{\operatorname{GM}(W^{(i)})}
      \xrightarrow{p}\frac{d-i+1}{e^{\psi(d-i+1)}},
      \label{E:perlinlim}
    \end{align}
    where $\psi$ denotes the digamma function;
    
    \item The balance constant $C=\max_{1\le i\le d}\operatorname{AM}(W^{(i)})/\operatorname{GM}(W^{(i)})$ satisfies $C\xrightarrow{p}e^{\gamma}=1.7811\ldots$, where $\gamma$ is     the Euler-Mascheroni constant. In particular, for every $\epsilon>0$ the basis is $(e^{\gamma}+\epsilon)$-balanced with probability $1-o_n(1)$, uniformly in $d$.
  \end{enumerate}
  
\end{lemma}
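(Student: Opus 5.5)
Part (i) is an identity in disguise, so I would prove it first and directly. Since $B_i=\langle\tilde{\mathbf{b}}_i,\tilde{\mathbf{b}}_i\rangle_K$ is totally positive, we have $\sigma_\ell(B_i)=\|\sigma_\ell(\tilde{\mathbf{b}}_i)\|^2$, where $\sigma_\ell$ is applied coordinatewise to the vector $\tilde{\mathbf{b}}_i$. This quantity agrees on the conjugate pair $\ell,-\ell$. Hence
\[
\|\sigma(\tilde{\mathbf{b}}_i)\|_2^2=\Tr_{K/\Q}(B_i)=2\sum_{\ell\in J}\sigma_\ell(B_i)
\qquad\text{and}\qquad
|\Nm_{K/\Q}(B_i)|=\prod_{\ell\in J}\sigma_\ell(B_i)^2 .
\]
Dividing by $n$ and by $n|\Nm(B_i)|^{1/n}$ respectively, the normalizing factor $n\sigma_c^2$ cancels. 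The ratio is therefore exactly $\operatorname{AM}(W^{(i)})/\operatorname{GM}(W^{(i)})$ over the $n/2$ indices of $J$, and AM--GM gives $\ge 1$. Throughout I read $W^{(i)}_\ell$ as $\|\sigma_\ell(\tilde{\mathbf{b}}_i)\|^2/(n\sigma_c^2)$, the squared norm summed over the $d$ coordinates.

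For (ii) I would pass to the Galois/CRT picture. Over $\C$, the ring $K\otimes\R$ splits as $\prod_{\ell\in J}\C$, and the $K$-Gram--Schmidt process commutes with each $\sigma_\ell$. So $\sigma_\ell(\tilde{\mathbf{b}}_i)$ is the ordinary complex Gram--Schmidt vector of the $d\times d$ complex matrix $A_\ell=(\sigma_\ell(b_{j,m}))_{j,m}$.

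By Lemma-style CLT reasoning, each entry $\sigma_\ell(b_{j,m})=\sum_a x_a\zeta^{\ell a}$ is a sum of $n$ i.i.d. bounded terms. By the orthogonality of Lemma~\ref{lem:orth}, its real and imaginary parts have covariance $(n\sigma_c^2/2)I_2$. A multivariate Lindeberg/Berry--Esseen argument then shows that, for any fixed finite set of indices $\ell$, the matrices $A_\ell/\sqrt{n\sigma_c^2}$ are asymptotically independent complex Ginibre matrices. Distinct $\ell$ give orthogonal real directions, again by Lemma~\ref{lem:orth}, so their joint covariance is diagonal.

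For a Ginibre matrix, the Bartlett decomposition gives $\|\tilde{a}_i\|^2\sim\mathrm{Gamma}(d-i+1,1)$. Hence each $W^{(i)}_\ell$ converges in law to $\Gamma(d-i+1)$, with $\E=d-i+1$ and $\E\log=\psi(d-i+1)$.

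The main obstacle is upgrading this to a law of large numbers for the empirical averages $\frac{2}{n}\sum_{\ell\in J}W_\ell$ and $\frac{2}{n}\sum_{\ell\in J}\log W_\ell$. I would use a second-moment method. The means converge by the CLT together with uniform integrability; for $\log W$ near $0$ this needs a small-ball (anti-concentration) bound on $\|\tilde a_i\|^2$. That bound would come from a Carbery--Wright or Littlewood--Offord estimate, and this is where the finite fourth moment and boundedness are used. Next, $\mathrm{Var}\to 0$ because pairs $(\ell,\ell')$ are asymptotically independent: I would apply the pairwise CLT uniformly over pairs, with a rate from Berry--Esseen in dimension $4d^2$. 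Chebyshev then yields convergence in probability of the AM and of the log-GM, and the continuous mapping theorem gives $(d-i+1)/e^{\psi(d-i+1)}$.

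For (iii), set $m=d-i+1$ and $f(m)=m e^{-\psi(m)}$. Using $\psi(m+1)=\psi(m)+1/m$, we get $f(m+1)/f(m)=\frac{m+1}{m}e^{-1/m}\le 1$, since $\log(1+1/m)\le 1/m$. So $f$ is nonincreasing, and its maximum is $f(1)=e^{-\psi(1)}=e^{\gamma}$, attained at $i=d$. Because $f(m)\le e^{\gamma}$ for every line, the upper bound $\max_i\le e^\gamma+\epsilon$ with probability $1-o(1)$ follows from (ii) for fixed $d$.

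Uniformity in $d$ is the delicate part. I would handle it by noting that for large $m$ the $\Gamma(m)$ variable concentrates, so $f(m)\to 1$. I would then show the deviation bounds hold with constants independent of $m$, via a union bound over $i$ with exponential concentration of the AM. Alternatively, I would restrict to $d\le n^{c}$, which is the regime actually needed.
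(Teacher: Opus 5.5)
Your outline is the paper's. Part (i) uses the same trace/norm identity. For (ii) you make the same reduction of $K$-Gram--Schmidt to complex Gram--Schmidt of $A_\ell$, pass by the CLT to independent complex Ginibre matrices, and use the Bartlett law $\Gamma(d-i+1,1)$. For (iii) you use monotonicity of $m e^{-\psi(m)}$; your recursion $\psi(m+1)=\psi(m)+1/m$ is as good as the paper's $\psi'(m)>1/m$. The paper's Step 3 just calls the $W^{(i)}_\ell$ ``asymptotically i.i.d.'' and invokes the weak law. You correctly see that a law of large numbers over $n/2$ dependent, $n$-dependent variables needs proof. Your second-moment argument does settle $\operatorname{AM}(W^{(i)})$: $W^{(i)}_\ell\le\|\sigma_\ell(\mathbf{b}_i)\|^2/(n\sigma_c^2)$ has uniformly bounded moments, and a Berry--Esseen rate uniform over pairs $(\ell,\ell')$ kills the covariances.

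The gap is the geometric mean, and it is in the direction that matters. Truncating, $\log W\le\log\max(W,\eta)$ gives the \emph{upper} bound on $\log\operatorname{GM}(W^{(i)})$ for free, but $C$-balance needs the \emph{lower} bound. Since $\mathbf{b}_i\in R^d$, the coefficient law is integer-valued, with CBD the intended case. So for $d\ge2$ the event $\mathbf{b}_i=\mathbf{b}_1$ ($i\ge2$) has positive probability and forces $W^{(i)}_\ell=0$. Hence $\E\log W^{(i)}_\ell=-\infty$ for every $n$, and the uniform integrability you invoke is false. Carbery--Wright does not apply, since it needs a log-concave law, and boundedness plus a fourth moment control only the upper tail.

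After truncating, you must show $\tfrac{2}{n}\sum_{\ell\in J}\bigl(\log(\eta/W^{(i)}_\ell)\bigr)_+\to0$ in probability. On a nondegenerate basis the natural a priori floor comes from $|\Nm_{K/\Q}(\det B)|\ge1$ for the basis matrix $B$. That floor still permits a single $\ell$ with $\log W^{(i)}_\ell$ of order $-n\log n$, which shifts $\log\operatorname{GM}$ by order $\log n$. Generic small-ball tools (Berry--Esseen, Littlewood--Offord/Hal\'asz) bound $\Pr[W^{(i)}_\ell<t]$ only by a quantity that stops decreasing once $t$ is polynomially small in $n$. That is not enough to make the total contribution of such $\ell$ negligible, in particular for the binding line $i=d$, where $W^{(d)}_\ell$ is a one-dimensional projection.

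The missing ingredient is an arithmetic anti-concentration bound for $\sigma_\ell$ of random elements of $R$, valid down to scales $e^{-O(n\log n)}$; this is the kind of input needed for log-determinants of random circulant matrices. As written, your plan covers coefficient laws with a bounded density, not the lattice laws this lemma is about. The paper does not supply this step either.

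Separately, ``uniformly in $d$'' is not proved. The paper's justification, that the limit $e^{\gamma}$ is $d$-free, conflates the limit with the rate. Your fallback of restricting to $d\le n^{c}$ proves a weaker statement.
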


\begin{proof}
We prove in four steps.

\textit{Step 1 Proof of (i).} Note $B_i=\langle\tilde{\mathbf{b}}_i, \tilde{\mathbf{b}}_i\rangle_K =\sum_{m=1}^d\tilde b_{im}\overline{\tilde b_{im}}\in K^+$, so each embedding satisfies $\sigma_\ell(B_i)=|\sigma_\ell(\tilde{\mathbf{b}}_i)|^2\ge0$. For the $2^k$-th cyclotomic field, by definition of the trace norm and conjugate symmetry, we get 
\begin{align}
  \|\sigma(\tilde{\mathbf{b}}_i)\|_2^2
  &=\sum_{\ell\text{ odd}}|\sigma_\ell(\tilde{\mathbf{b}}_i)|^2
   =\sum_{\ell\text{ odd}}\sigma_\ell(B_i)
  \notag\\
  &=2\sum_{\ell\in J}\sigma_\ell(B_i)
   =2 n\sigma_c^2\sum_{\ell\in J}W^{(i)}_\ell
  \notag\\
  & =n^2\sigma_c^2\cdot\operatorname{AM}(W^{(i)}),
  \label{E:nums}
\end{align}
where the last equality has used the equality of $\operatorname{AM}(W^{(i)})=\tfrac{1}{|J|}\sum_{\ell\in J}W^{(i)}_\ell=\tfrac{2}{n}\sum_{\ell\in J}W^{(i)}_\ell$.

Note the field norm is the product over all $n$ embeddings. This implies 
\begin{align}
  \Nm_{K/\Q}(B_i)
  &=\prod_{\ell\text{ odd}}\sigma_\ell(B_i)
   =\prod_{\ell\in J}\sigma_\ell(B_i)^2
  \notag\\
  &=\prod_{\ell\in J}\bigl(n\sigma_c^2  W^{(i)}_\ell\bigr)^2
  =(n\sigma_c^2)^{n} \Bigl(\prod_{\ell\in J}W^{(i)}_\ell\Bigr)^{2}.
  \label{E:norms}
\end{align}
Taking the $n$-th root and multiplying by $n$, we get from Eq.(\ref{E:norms})
\begin{align}
  n|\Nm_{K/\Q}(B_i)|^{1/n}
  =n^2\sigma_c^2\Bigl(\prod_{\ell\in J}W^{(i)}_\ell\Bigr)^{2/n}
  =n^2\sigma_c^2\cdot\operatorname{GM}(W^{(i)}),
  \label{E:denoms}
\end{align}
where $\operatorname{GM}(W^{(i)})
=\bigl(\prod_{\ell\in J}W^{(i)}_\ell\bigr)^{2/n}$ is the geometric mean of $|J|=n/2$ values. Dividing Eq.\eqref{E:nums} by Eq.\eqref{E:denoms}, we obtain Eq.\eqref{E:amgclt}, where the inequality $\ge 1$ is from the classical AM--GM inequality.

\medskip
\textit{Step 2 Asymptotic distribution of $W^{(i)}_\ell$.} Fix an odd residue $\ell$. Each ring-element coefficient $\sigma_\ell(b_{jm})=\sum_{t=0}^{n-1}(b_{jm})_t\zeta^{\ell t}$ is a linear combination of $n$ independent, mean-zero, bounded random variables (the power-basis coefficients $(b_{jm})_t$) with deterministic unimodular weights $\zeta^{\ell t}$. Let $U_\ell:=\Re \sigma_\ell(b_{jm})/\sqrt{n\sigma_c^2}$ and $V_\ell:=\Im\sigma_\ell(b_{jm})/\sqrt{n\sigma_c^2}$, similar to Lemma \ref{lem:orth}, we can obtain
\begin{align}
  \E[U_\ell^2]=\E[V_\ell^2]=\tfrac{1}{2}, \ \ \E[U_\ell V_\ell]=0.
  \label{E:cltmoments}
\end{align}
By the Lindeberg Center Limit Theorem (CLT) (see Part III for full verification), $(U_\ell,V_\ell)\xrightarrow{d}\mathcal{N}(\mathbf{0},\tfrac{1}{2}I_2)$ jointly, so 
$\sigma_\ell(b_{jm})/\sqrt{n\sigma_c^2}\xrightarrow{d}\mathcal{CN}(0,1)$. Moreover, the convergence is joint over the $d^2$ entries $(j,m)$, which are independent because they involve disjoint sets of input coefficients.

Consider the $d\times d$ matrix
\begin{align}
  A_\ell:=\frac{1}{\sqrt{n\sigma_c^2}}
  \bigl[\sigma_\ell(\mathbf{b}_1) \big|\dots 
  \big|\sigma_\ell(\mathbf{b}_d)\bigr]\in\C^{d\times d}
  \label{E:ginibrelim}
\end{align}
it converges in distribution to a standard complex Ginibre matrix (i.i.d. $\mathcal{CN}(0,1)$ entries) \cite{Goodman63,Edelman89}.

As each embedding $\sigma_\ell: K\to\C$ is a ring homomorphism and commutes with complex conjugation on $K$, the $K$-inner product embeds faithfully: $\sigma_\ell(\langle\mathbf{a},\mathbf{b}\rangle_K)=\langle\sigma_\ell(\mathbf{a}), \sigma_\ell(\mathbf{b})\rangle_{\C^d}$. So, the $K$-Gram-Schmidt recurrence \eqref{E:GSKO}, evaluated at embedding $\ell$, is the ordinary complex Gram-Schmidt of the columns of $A_\ell$:
\begin{align}
  \sigma_\ell(\tilde{\mathbf{b}}_j)
    =\sigma_\ell(\mathbf{b}_j)-\sum_{i<j}\frac{\langle\sigma_\ell(\mathbf{b}_j),                    \sigma_\ell(\tilde{\mathbf{b}}_i)\rangle_{\C^d}}                  {\|\sigma_\ell(\tilde{\mathbf{b}}_i)\|_{\C^d}^2}
   \sigma_\ell(\tilde{\mathbf{b}}_i).
  \label{E:gsembed}
\end{align}
Since the complex Gram-Schmidt is a continuous function of the input matrix, using the continuous mapping theorem and Eq.\eqref{E:ginibrelim}, we get 
\begin{align}
W^{(i)}_\ell=\frac{|\sigma_\ell(\tilde{\mathbf{b}}_i)|^2}{n\sigma_c^2}
\xrightarrow{d}|R_{ii}|^2,
\label{E:Weg}
\end{align}
where $R_{ii}$ is the $i$-th diagonal entry of the upper-triangular factor in the QR decomposition of a Ginibre matrix. By the complex Bartlett decomposition \cite{Goodman63,Edelman89} (see also \cite{Muirhead82,Forrester10}), we have $|R_{ii}|^2\sim\Gamma(d-i+1,1)$, with the $d$ diagonal entries mutually independent. In particular, the first line has the richest distribution $\Gamma(d,1)$, and the last line the sparsest $\Gamma(1,1)=\operatorname{Exp}(1)$.

\medskip
\textit{Step 3.} For distinct $\ell,\ell'\in J$, similar to Lemma \ref{lem:orth} with $c=\ell-\ell'$, we have the cross-covariance vanishes, i.e., 
\begin{align}
\E[\sigma_\ell(b)\overline{\sigma_{\ell'}(b)}]=\sigma_c^2\sum_{t=0}^{n-1}\zeta^{(\ell-\ell')t}=0,
\label{E:covva}
\end{align}
where the geometric sum vanishes because $\ell\ne\ell'$ are distinct odd residues and $\zeta^{\ell-\ell'}\ne1$. 

Combined with the joint CLT of Step 2, the matrices $\{A_\ell\}_{\ell\in J}$ are asymptotically independent. So, the family $\{W^{(i)}_\ell\}_{\ell\in J}$ is asymptotically i.i.d. with marginal law $\Gamma(m_i,1)$, and $m_i:=d-i+1$.

Since $\Gamma(m,1)$ has finite mean $m$ and finite variance $m$, the weak law of large numbers gives
\begin{align}
  \operatorname{AM}(W^{(i)})=\frac{1}{|J|}\sum_{\ell\in J}W^{(i)}_\ell \xrightarrow{p}\E[\Gamma(m_i,1)]=m_i.
  \label{E:amlln}
\end{align}
For the geometric mean, let $\log\operatorname{GM}(W^{(i)})=\frac{1}{|J|}\sum_{\ell\in J}\log W^{(i)}_\ell$.

The random variable $\log\Gamma(m,1)$ has mean $\psi(m)$ and variance $\psi'(m)<\infty$ (the trigamma function), so the same law of large numbers yields
\begin{align}
  \log\operatorname{GM}(W^{(i)})\xrightarrow{p}\psi(m_i),
    \label{E:gmlln}
\end{align}
hence, we get $\operatorname{GM}(W^{(i)})\xrightarrow{p}e^{\psi(m_i)}$. 

Combining Eqs.\eqref{E:amlln} and \eqref{E:gmlln}, using the continuous mapping theorem we obtain
\begin{align}
  \frac{\operatorname{AM}(W^{(i)})}{\operatorname{GM}(W^{(i)})}
  \xrightarrow{p}\frac{m_i}{e^{\psi(m_i)}}
  =\frac{d-i+1}{e^{\psi(d-i+1)}}.
  \label{E:ratiolim}
\end{align}

\medskip
\textit{Step 4 Proof of (iii).} Consider the function $f(m)=m e^{-\psi(m)}$ on $[1,\infty)$. We show $f$ is strictly decreasing. Indeed, we have $(\log f)'(m)=\tfrac{1}{m}-\psi'(m)$. Since $\psi'(m)=\sum_{j=0}^{\infty}(m+j)^{-2}$, we have
\begin{align}
  \psi'(m)=\sum_{j=0}^{\infty}\frac{1}{(m+j)^2}>\int_0^{\infty}\frac{dx}{(m+x)^2}=\frac{1}{m}.
  \label{E:psiineq}
\end{align}
The inequality is strict because the sum dominates the integral. So, $(\log f)'(m)<0$, i.e., $f$ is strictly decreasing on $[1,\infty)$.

The GS lines $i=1, \dots, d$ have shape parameters $m_i=d, d-1, \dots, 1$. Since $f$ is decreasing function, using the identity $\psi(1)=-\gamma$, we have 
\begin{align}
  \max_{1\le i\le d}\frac{m_i}{e^{\psi(m_i)}}=f(1)=\frac{1}{e^{\psi(1)}}=e^{-\psi(1)}=e^{\gamma},
  \label{E:maxline}
\end{align}
where $\gamma=0.5772\ldots$ is the Euler-Mascheroni constant. Hence, $e^{\gamma}=1.7811\ldots$.

Since each of the $d$ ratios converges in probability in Eq.\eqref{E:ratiolim}, the maximum of finitely many convergent sequences also converges as
\begin{align}
  C=\max_{1\le i\le d}
    \frac{\operatorname{AM}(W^{(i)})}{\operatorname{GM}(W^{(i)})}  \xrightarrow{p}\max_{1\le i\le d}\frac{m_i}{e^{\psi(m_i)}}
    =e^{\gamma}.
  \label{E:Climit}
\end{align}

By the definition of convergence in probability, for every $\epsilon>0$,
$\Pr[C>e^{\gamma}+\epsilon]\to0$ as $n\to\infty$, so the basis is
$(e^{\gamma}+\epsilon)$-balanced with probability $1-o_n(1)$. Since
$e^{\gamma}$ is independent of $d$, the statement holds uniformly in $d$.
\end{proof}

Now, we consider the balance hypothesis \ref{D:balanced} with special distribution as follows. 

\begin{theorem}
\label{T:gaussbalance}
Let $\mathbf{b}_1, \dots, \mathbf{b}_d\in (K\otimes_\Q\R)^d$ have coordinates whose power-basis coefficients are i.i.d. $\mathcal{N}(0,\sigma_c^2)$. Let $J=\{1, 3, \dots, n-1\}$ and $\tilde{\mathbf{b}}_1, \dots, \tilde{\mathbf{b}}_d$ be the $K$-Gram-Schmidt vectors. Then the following results hold 
\begin{enumerate}[label=(\roman*),nosep]
\item For each line $i$ the normalized squared embedded GS-lengths
$W^{(i)}_\ell:=\tfrac{|\sigma_\ell(\tilde{\mathbf{b}}_i)|^2}{n\sigma_c^2}$ with $\ell\in J$ are i.i.d. across $\ell\in J$ with law $\Gamma(d-i+1, 1)$;

\item The balance ratio of line $i$ equals exactly the arithmetic-to-geometric-mean ratio of these variables as
\begin{align}
  \frac{\|\sigma(\tilde{\mathbf{b}}_i)\|_2^2}{n|\Nm_{K/\Q}(\langle\tilde{\mathbf{b}}_i, \tilde{\mathbf{b}}_i\rangle_K)|^{1/n}}
=\frac{\operatorname{AM}(W^{(i)})}{\operatorname{GM}(W^{(i)})} =\frac{\tfrac1{|J|}\sum_{\ell\in J}W^{(i)}_\ell}        {\bigl(\prod_{\ell\in J}W^{(i)}_\ell\bigr)^{1/|J|}} \ge 1 .
   \label{E:amgm}
\end{align}
\item $C=\max_{1\le i\le d}\operatorname{AM}(W^{(i)})/\operatorname{GM}(W^{(i)})$ converges almost surely to $e^{\gamma}=1.7811\ldots$ as $n\to\infty$, and for every $\delta>0$, 
\begin{align}
  \Pr\bigl[C> e^{\gamma}(1+\delta)\,\bigr]\le 4d e^{-c(\delta) n}
\end{align}
for a constant $c(\delta)>0$. In particular, a Gaussian basis is $(e^{\gamma}+\delta)$-balanced with probability $1-e^{-\Omega(n)}$, uniformly in $d$.
\end{enumerate}

\end{theorem}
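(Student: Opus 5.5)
The plan is to replace the CLT step of Lemma~\ref{lem:mlwebal} by exact Gaussian facts, and then swap the weak law of large numbers for Chernoff-type large-deviation bounds.

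\textbf{Step 1: exact Ginibre structure.} For each fixed odd residue $\ell$, the map from power-basis coefficients to $\sigma_\ell(\cdot)$ is linear. By Lemma~\ref{lem:orth}, the rescaled map $x\mapsto(\sigma_\ell(x)/\sqrt n)_{\ell\in J}$, viewed as a real linear map $\R^n\to\C^{n/2}\cong\R^n$, is orthogonal up to the factor $\sqrt 2$. This uses $\sigma_{-\ell}=\overline{\sigma_\ell}$ together with $\langle\sigma(\zeta^a),\sigma(\zeta^b)\rangle=n\delta_{a,b}$. Since i.i.d.\ $\mathcal N(0,\sigma_c^2)$ vectors are rotation invariant, the family $\{\sigma_\ell(b_{jm})/\sqrt{n\sigma_c^2}\}_{\ell\in J}$ is exactly i.i.d.\ $\mathcal{CN}(0,1)$, and it is independent across the $d^2$ entries $(j,m)$. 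Hence the matrices $A_\ell$ of \eqref{E:ginibrelim} are \emph{exactly} independent standard complex Ginibre matrices for $\ell\in J$, with no limit needed.

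By the embedded Gram--Schmidt identity \eqref{E:gsembed}, $W^{(i)}_\ell=|R_{ii}(A_\ell)|^2$. The complex Bartlett decomposition then gives $W^{(i)}_\ell\sim\Gamma(d-i+1,1)$. Independence across $\ell$ follows because the $A_\ell$ are independent. This proves (i).

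\textbf{Step 2: the AM/GM identity.} Part (ii) is the purely algebraic Step~1 of Lemma~\ref{lem:mlwebal}, which made no distributional assumption, so it can be quoted verbatim.

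\textbf{Step 3: almost-sure limit and tail bound.} Let $N=|J|=n/2$ and $m=m_i$. Note that $C\ge1$ always, and the proof must handle the maximum over $i$. The target is
\begin{align*}
\frac{\operatorname{AM}}{\operatorname{GM}}\to \frac{m}{e^{\psi(m)}}\le e^{\gamma},
\end{align*}
where the inequality is the monotonicity of $f(m)=me^{-\psi(m)}$ established in Step~4 of Lemma~\ref{lem:mlwebal}. Almost-sure convergence per line follows from the strong law of large numbers for the i.i.d.\ variables $W_\ell$ and $\log W_\ell$, both of which are integrable. Taking the maximum over finitely many $i$ gives $C\to e^\gamma$ a.s., assuming $d$ is fixed.

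For the tail, pick $\eta=\eta(\delta)>0$ small enough that $(m+m\eta)\,e^{-\psi(m)+\eta}\le m e^{-\psi(m)}(1+\delta)$. The event $C>e^\gamma(1+\delta)$ is then contained in the union over $i$ of two events: $\operatorname{AM}>m(1+\eta)$, or $\frac1N\sum\log W_\ell<\psi(m)-\eta$.

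The first event is controlled by the exact Gamma Chernoff bound, since $\sum_\ell W_\ell\sim\Gamma(Nm,1)$. Its probability is at most $e^{-Nm(\eta-\log(1+\eta))}$.

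For the second event, apply Cramér's bound to the lower tail of $\log W$. The lower tail is light: $\E[W^{-t}]=\Gamma(m-t)/\Gamma(m)<\infty$ for $0<t<m$, and in particular for $t<1$. The Chernoff bound therefore gives $\Pr\bigl[\frac1N\sum\log W<\psi(m)-\eta\bigr]\le \exp\bigl(-N\sup_{0<t<1}[t(\psi(m)-\eta)\cdot(-1)\ldots]\bigr)$. Concretely the rate is $\sup_t\{-t(\psi(m)-\eta)-\log(\Gamma(m-t)/\Gamma(m))\}$, which is positive because the derivative at $t=0$ equals $\eta>0$.

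\textbf{Main obstacle: uniformity in $d$.} Each line contributes at most two exponential terms, and a union bound over the $d$ lines gives a prefactor of order $2d\le 4d$. Making $c(\delta)$ independent of $m\in\{1,\dots,d\}$ is the main difficulty. The AM term is only better for large $m$, because its exponent carries the factor $m$. For the GM term I would use the fact that $\mathrm{Var}(\log W)=\psi'(m)$ decreases in $m$, and bound the cumulant uniformly via $\log\Gamma(m-t)-\log\Gamma(m)\le -t\psi(m)+\tfrac{t^2}{2}\psi'(m-1)$ for $t\le 1/2$, where $\psi'(m-1)$ is taken to be at most $\psi'(1/2)$. This yields a rate of order $\eta^2$ uniformly in $m$, and hence the bound $4de^{-c(\delta)n}$. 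The final claim of being $(e^\gamma+\delta)$-balanced with probability $1-e^{-\Omega(n)}$ then follows by rescaling $\delta$.
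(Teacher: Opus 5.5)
Your proposal is correct and follows the same route as the paper. Both arguments run: exact independent complex Ginibre matrices $A_\ell$ for $\ell\in J$, then the complex Bartlett decomposition, then the purely algebraic AM/GM identity, then per-line concentration of $\operatorname{AM}$ and $\log\operatorname{GM}$ with a union bound over $2d$ events and the monotonicity of $m\mapsto m e^{-\psi(m)}$. Your orthogonality-plus-rotation-invariance argument is a cleaner way to get the joint circular symmetry that the paper only asserts after computing Hermitian covariances.

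Your exact Gamma--Chernoff and Cram\'er bounds are more careful than the paper's generic Bernstein step, because they make $c(\delta)$ visibly independent of $m_i\le d$. One slip needs fixing: $\psi'(m-1)$ is infinite at $m=1$, so replace it by $\psi'(m-t)\le\psi'(m-\tfrac12)\le\psi'(\tfrac12)$ for $0<t\le\tfrac12$.
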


\begin{proof}
Given an odd $\ell$, $\sigma_\ell(b)=\sum_{i=0}^{n-1}c_i\zeta^{\ell i}$ is a real-linear image of the Gaussian vector $(c_i)$, hence jointly Gaussian. Using $\sum_{i=0}^{n-1}\zeta^{2\ell i}=0$ (a geometric sum with ratio $\zeta^{2\ell}\ne1$ and $(\zeta^{2\ell})^{n}=1$), we obtain 
\begin{align}    
 &\E[\Re\sigma_\ell(b)^2]=\E[\Im\sigma_\ell(b)^2]=\tfrac{n\sigma_c^2}{2}
\\ 
 &\E[\Re\sigma_\ell(b)\,\Im\sigma_\ell(b)]=0
\end{align}
So, we get $\sigma_\ell(b)\sim\mathcal{CN}(0,n\sigma_c^2)$. For different $\ell, \ell'\in J$, we have 
\begin{align}   
 \E[\sigma_\ell(b)\overline{\sigma_{\ell'}(b)}] =\sigma_c^2\sum_i\zeta^{(\ell-\ell')i}=0,
  \label{E:exporth}
\end{align}
i.e., jointly Gaussian and Hermitian-uncorrelated with circular symmetry implies independence. As the entries $b_{jm}$ are independent ring elements, we know the matrices $A_\ell:=[\sigma_\ell(\mathbf{b}_1)| \cdots| \sigma_\ell(\mathbf{b}_d)] \in\C^{d\times d}$ (dividing $\sqrt{n\sigma_c^2}$) are independent complex Ginibre matrices \cite{Muirhead82,Forrester10}.

Since each $\sigma_\ell$ is a ring homomorphism commuting with the $K$-Hermitian form, $\sigma_\ell(\langle\mathbf{a}, \mathbf{b}\rangle_K)=\langle\sigma_\ell(\mathbf{a}), \sigma_\ell(\mathbf{b})\rangle_{\C^d}$, so the $K$-Gram-Schmidt of $(\mathbf{b}_j)$ embeds at $\ell$ to the ordinary complex Gram-Schmidt of the columns of $A_\ell$. By the QR/Bartlett decomposition of a complex Ginibre matrix \cite{Goodman63,Edelman89}, the squared GS-lengths $|R^{(\ell)}_{ii}|^2$ are independent with $|R^{(\ell)}_{ii}|^2/(n\sigma_c^2)\sim\Gamma(d-i+1,1)$. This proves the part (i), where the independence across $\ell\in J$ is from that of the $A_\ell$.

By conjugate symmetry, we have 
\begin{align}
\|\sigma(\tilde{\mathbf{b}}_i)\|_2^2
=2\sum_{\ell\in J}|\sigma_\ell(\tilde{\mathbf{b}}_i)|^2
=2n\sigma_c^2\sum_{\ell\in J}W^{(i)}_\ell
\end{align}
Moreover, we have 
\begin{align}
\Nm_{K/\Q}(\langle\tilde{\mathbf{b}}_i,\tilde{\mathbf{b}}_i\rangle_K)
=\prod_{\text{odd}\ \ell}|\sigma_\ell(\tilde{\mathbf{b}}_i)|^2
=(n\sigma_c^2)^n\prod_{\ell\in J}\bigl(W^{(i)}_\ell\bigr)^2    
\end{align}
So, we get
\begin{align}
n|\Nm|^{1/n}=n^2\sigma_c^2(\prod_{\ell\in J}W^{(i)}_\ell)^{2/n}
\end{align}
This further implies Eq.\eqref{E:amgm}, where the inequality $\ge 1$ is from AM-GM inequality.

Note the variables $W\sim\Gamma(m,1)$ and $\log W$ are sub-exponential, with $\E[W]=m$, $\E[\log W]=\psi(m)$. Using the Bernstein's inequality \cite{Billingsley95}, for each $i$ and small $\epsilon>0$, we have 
\begin{align}
&\Pr[|\operatorname{AM}(W^{(i)})-m_i|>\epsilon]\le2e^{-c_1\epsilon^2 n},
\\
&\Pr[|\log\operatorname{GM}(W^{(i)})-\psi(m_i)|>\epsilon]\le2e^{-c_2\epsilon^2 n}    
\end{align}
where $m_i=d-i+1$. Off the union of these $2d$ events, we get 
\begin{align}
 \frac{\operatorname{AM}(W^{(i)})}{\operatorname{GM}(W^{(i)})}
\le m_i e^{-\psi(m_i)}(1+\epsilon)e^{\epsilon}   
\end{align}
for every $i$. 

The map $m\mapsto m e^{-\psi(m)}$ is strictly decreasing on $[1,\infty)$, as $\tfrac{d}{dm}(\log m-\psi(m))=\tfrac{1}{m}-\psi'(m)<0$, where $\psi'(m)=\sum_{j\ge0}(m+j)^{-2}>\int_0^\infty(m+x)^{-2}dx=\tfrac{1}{m}$. Hence, its maximum over $i\in\{1, \dots, d\}$ is achieved $e^{-\psi(1)}=e^{\gamma}$ by $m_i=1$ (the last line $i=d$). Choosing $\epsilon$ with $(1+\epsilon)e^{\epsilon}\le 1+\delta$, we obtain part (iii), where the a.s. limit is from the strong law applied to $\operatorname{AM}$ and $\log\operatorname{GM}$.
\end{proof}

\begin{remark}
Theorem \ref{T:gaussbalance} gives the strict result of Lemma \ref{lem:mlwebal} in Gaussian case: the convergence is to the exact constant $e^\gamma$ with exponential concentration, and the binding line is the last Gram-Schmidt vector (Gamma shape $1$). The value $d/e^{\psi(d)}$ in Lemma \ref{lem:mlwebal} is the ratio of the first line (shape $d$). Since $m\mapsto m e^{-\psi(m)}$ is decreasing, the true balance constant is $\max_i m_i e^{-\psi(m_i)}=e^{\gamma}$, independent of $d$. There remains the same conclusion $C=O(1)$ uniformly in $d$, and the module reduction factor $\alpha_d=\sqrt{C}\leq 1.3346<1.4$.
\end{remark}

\subsection{$R$-Linear Gram–Schmidt orthogonalization}\label{subsec:rgs}

Let $M$ be a free $R$-module of rank $d$ with basis $\mathbf{b}_1, \dots, \mathbf{b}_d \in K^d$. We perform a $K$-linear Gram-Schmidt orthogonalization using the Hermitian inner product on $K^d$ as 
\begin{align}
  \tilde{\mathbf{b}}_j=\mathbf{b}_j-\sum_{i=1}^{j-1} \mu_{ji} \tilde{\mathbf{b}}_i,
  \label{E:GSKO}
\end{align}
where Gram-Schmidt coefficients $\mu_{ji}=\langle \mathbf{b}_j, \tilde{\mathbf{b}}_i \rangle_K/\langle \tilde{\mathbf{b}}_i, \tilde{\mathbf{b}}_i \rangle_K \in K$.

The Gram–Schmidt vectors are pairwise orthogonal under the $K$-inner product, i.e., $\langle \tilde{\mathbf{b}}_i, \tilde{\mathbf{b}}_j \rangle_K=0$ for $i \neq j$. Since the basis $\{\mathbf{b}_i\}$ is linearly independent over $K$, each $\tilde{\mathbf{b}}_i$ is non-zero. Hence, we have $\langle \tilde{\mathbf{b}}_i, \tilde{\mathbf{b}}_i \rangle_K \in K^+ \setminus \{0\}$.

The module determinant as the determinant of the $K$-valued Gram matrix is given by 
\begin{align}
{\rm det}_R(M)=\det(\langle \mathbf{b}_i, \mathbf{b}_j \rangle_K)_{1\le i, j\le d} \in K^\times /R^\times.
\label{E:detR}
\end{align}
From the orthogonality of the Gram–Schmidt vectors, we have
\begin{align}
  {\rm det}_R(M)=\prod_{i=1}^d \langle \tilde{\mathbf{b}}_i, \tilde{\mathbf{b}}_i \rangle_K.
  \label{E:detprod}
\end{align}

\begin{remark}\label{rem:GSembed}
The coordinate-wise Galois embedding $\sigma:K^d\to\C^{dn}$ maps the $K$-valued inner product to the standard Hermitian inner product on $\C^{dn}$ as 
  \begin{align}    
  \langle\sigma(\mathbf{a}_i), \sigma(\mathbf{b}_j)\rangle_{\C^{dn}}=\Tr_{K/\Q}(\langle\mathbf{a}_i,\mathbf{b}_j\rangle_K).
  \label{E:GSembed}
  \end{align}
  In particular, if $\langle\mathbf{a}_i, \mathbf{b}_j\rangle_K=0$ then $\sigma(\mathbf{a}_i)$ and $\sigma(\mathbf{b}_j)$ are orthogonal in $\C^{dn}$. For the scalar multiplication by $\alpha\in K$, we have the embedding-wise factorization
  \begin{align}    \|\sigma(\mathbf{a}_i\cdot\alpha)\|_2^2=\sum_{\ell}|\sigma_\ell(\alpha)|^2\sum_{k=1}^{d}|\sigma_\ell(a_{ik})|^2,
    \label{E:scalarnorm}
  \end{align}
  which factorizes as $\|\sigma(\mathbf{a}_i)\|_2^2 \|\sigma(\alpha)\|_2^2/n$ when $|\sigma_\ell(\alpha)|$ is independent of $\ell$.
\end{remark}

\subsection{Optional size reduction}
\label{subsec:phase1}

In this section we show the optional size reduction, which does not affect the approximation-factor bound in Theorem \ref{T:base}, but improves numerical conditioning. For each $j$ from $2$ to $d$ and each $i$ from $j-1$ down to $1$, we compute
\begin{align}
  c_{ji}=\Round(\mu_{ji})\in R, \mathbf{b}_j \gets \mathbf{b}_j-c_{ji}\cdot\mathbf{b}_i,
  \label{E:sizered}
\end{align}
where $\Round:K\to R$ is a rounding function whose error is controlled by coordinate-wise rounding (Lemma \ref{lem:cover}) with $\rho=n/2$ or Chinese Remainder Theorem (CRT)-scaled rounding (Section \ref{sec:crt}) with $\rho\le 1$. 

\subsection{Diagonal CDPR's algorithm on rank-1 submodules}

Note each Gram-Schmidt vector $\tilde{\mathbf{b}}_i$ spans a one-dimensional $K$-subspace $L_i:= K\cdot\tilde{\mathbf{b}}_i \subseteq K^d$. The intersection $M_i:= M\cap L_i$ is a non-zero $R$-submodule of $M$ of $R$-rank $1$. Under the $R$-linear isomorphism $\psi_i: L_i\xrightarrow{\sim} K$, $\alpha\tilde{\mathbf{b}}_i\mapsto\alpha$, the image $J_i:= \psi_i(M_i)\subseteq K$ is a non-zero fractional $R$-ideal, and $M_i=J_i\cdot\tilde{\mathbf{b}}_i$.

\bigskip
\noindent\textit{Applying CDPR's algorithm.} For $k\le 8$, the ring $R$ is a PID, so $J_i$ is principal. For $k\ge 9$, the free $R$-modules arising from MLWE generically yield principal $J_i$. Here, the non-principal case will be deal with the algorithm of Biasse-Song \cite{BiasseSong16} at the same asymptotic cost. Now, pick any generator $\alpha_{i,0}\in J_i$ and apply the CDPR's short-generator algorithm \cite{CDPR16,CDW17} to the principal ideal $(\alpha_{i,0})\subseteq R$. This returns a unit $u_i\in R^\times$ such that $\alpha_i':= u_i^{-1}\alpha_{i,0}$ is a generator of $J_i$ satisfying
\begin{align}
  \|\sigma(\alpha_i')\|_\infty \le \gamma
  |\Nm_{K/\Q}(J_i)|^{1/n},
  \label{E:cdprbound}
\end{align}
where $\gamma=\exp(C_0\sqrt n\log n)$ for a constant $C_0>0$. 

\bigskip
\noindent\textit{The output vector.} Define the output vector as 
\begin{align}
  \mathbf{v}_i:= \alpha_i'\cdot\tilde{\mathbf{b}}_i
  \in M_i \subseteq M.
  \label{E:videf}
\end{align}
By construction $\mathbf{v}_i$ is a non-zero element of $M$.

\bigskip
\noindent\textit{Per-line norm bound.} We use the inequality $\|\sigma(\mathbf{v}_i)\|_2\le\sqrt{dn}\|\sigma(\mathbf{v}_i)\|_\infty$ because $\sigma(\mathbf{v}_i)\in\C^{dn}$ has at most $dn$ complex coordinates. In what follows, we get tight factorization via the trace as 
\begin{align}
  \|\sigma(\mathbf{v}_i)\|_2^2
  &= \sum_\ell|\sigma_\ell(\alpha_i')|^2\cdot     \sigma_\ell(\langle\tilde{\mathbf{b}}_i,\tilde{\mathbf{b}}_i\rangle_K)
  \notag\\
  &\le \|\sigma(\alpha_i')\|_\infty^2 \Tr_{K/\Q}(\langle\tilde{\mathbf{b}}_i, \tilde{\mathbf{b}}_i\rangle_K)
  \notag\\
  &= \|\sigma(\alpha_i')\|_\infty^2\ 
     \|\sigma(\tilde{\mathbf{b}}_i)\|_2^2
  \notag\\
  &\le \gamma^2|\Nm_{K/\Q}(J_i)|^{2/n} \ 
     \|\sigma(\tilde{\mathbf{b}}_i)\|_2^2.
  \label{E:viprel}
\end{align}

Under the $C$-balance hypothesis \eqref{E:balance} (Definition \ref{D:balanced}), we obtain 
\begin{align}
  \|\sigma(\tilde{\mathbf{b}}_i)\|_2^2
  \le C n |\Nm_{K/\Q}(\langle\tilde{\mathbf{b}}_i, \tilde{\mathbf{b}}_i\rangle_K)|^{1/n}.
  \label{E:BiB}
\end{align}
From the inequalities \eqref{E:BiB} and \eqref{E:viprel}, we then obtain 
\begin{align}
  \|\sigma(\mathbf{v}_i)\|_2^2 \le
  C n \gamma^2 |\Nm_{K/\Q}(J_i)|^{2/n} |\Nm_{K/\Q}(\langle\tilde{\mathbf{b}}_i, \tilde{\mathbf{b}}_i\rangle_K)|^{1/n}.
  \label{E:viboundint}
\end{align}

\bigskip
\noindent\textit{Covolume of $\sigma(M_i)$ and product identity.} The rank-1 sublattice $M_i=J_i\tilde{\mathbf{b}}_i$ can be embeded to a rank-$n$ lattice $\sigma(M_i)\subset\sigma(L_i)$ with covolume
\begin{align}
  \det(\sigma(M_i))=|\Nm_{K/\Q}(J_i)|\cdot
  n^{n/2}\cdot|\Nm_{K/\Q}(\langle\tilde{\mathbf{b}}_i,\tilde{\mathbf{b}}_i\rangle_K)|^{1/2},
  \label{E:covolMi}
\end{align}
which is from the same Vandermonde computation as Eq.\eqref{E:detsigmaM}. Taking the $(2/n)$-th power, we get 
\begin{align}
  |\Nm_{K/\Q}(J_i)|^{2/n}\, |\Nm_{K/\Q}(\langle\tilde{\mathbf{b}}_i, \tilde{\mathbf{b}}_i\rangle_K)|^{1/n}=\frac{1}{n}\det(\sigma(M_i))^{2/n}.
  \label{E:Miexp}
\end{align}

Combining Eq.\eqref{E:Miexp} and inequality \eqref{E:viboundint} give 
\begin{align}
  \|\sigma(\mathbf{v}_i)\|_2^2 \le
  C\gamma^2\det(\sigma(M_i))^{2/n}.
  \label{E:vibound}
\end{align}

Since the $K$-lines $L_i$ are pairwise orthogonal under the $K$-inner product (Remark \ref{rem:GSembed}), the embedded sublattices $\sigma(M_i)$ are mutually orthogonal in $\R^{2dn}$, and $\sigma(M)=\bigoplus_{i=1}^d\sigma(M_i)$ as orthogonal direct sum. Therefore, we finally get 
\begin{align}
  \prod_{i=1}^d \det(\sigma(M_i))=\det(\sigma(M)).
  \label{E:Miprod}
\end{align}
Here, the $d$ submodules are applied independently since the $L_i$ are pairwise orthogonal.

\subsection{Base module reduction bound}

We now state the main bound for the base algorithm. 

\begin{theorem}\label{T:base}
  Let $M\subseteq K^d$ be a free $R$-module of rank $d$. Given by a $C$-balanced $R$-basis $\mathbf{b}_1, \dots, \mathbf{b}_d$, consider the following algorithm: (i) computes the $K$-Gram-Schmidt decomposition   $\tilde{\mathbf{b}}_1,\cdots,\tilde{\mathbf{b}}_d$, (ii) applies CDPR's algorithm to each rank-1 submodule $M_i=M\cap(K\tilde{\mathbf{b}}_i)$ to obtain a short generator $\alpha_i'\in J_i$, and (iii) returns 
  \begin{align}
    \mathbf{v}=\arg\min_{1\le i\le d}\|\sigma(\mathbf{v}_i)\|_2,
    \quad \mathbf{v}_i:=\alpha_i'\cdot\tilde{\mathbf{b}}_i.
    \label{E:outpurul}
  \end{align}
  This algorithm produces a non-zero vector $\mathbf{v}\in M$ with
  \begin{align}
    \|\sigma(\mathbf{v})\|_2 \le
    \sqrt{C} \gamma \det(\sigma(M))^{1/(dn)},
    \label{E:baseB}
  \end{align}
  where $\gamma=\exp(C_0\sqrt n\log n)$ is the CDPR factor. 
\end{theorem}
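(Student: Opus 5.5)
The plan is to assemble the per-line bound \eqref{E:vibound} with the product identity \eqref{E:Miprod} via a min-versus-geometric-mean argument. There are two things to establish: that the returned vector is non-zero and lies in $M$, and the norm bound.

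For the first part, each $\alpha_i'$ is a generator of the non-zero fractional ideal $J_i$, so $\alpha_i'\neq 0$. Hence $\mathbf{v}_i=\alpha_i'\tilde{\mathbf{b}}_i$ is a non-zero element of $M_i=J_i\tilde{\mathbf{b}}_i\subseteq M$, using \eqref{E:videf}. The $\arg\min$ in \eqref{E:outpurul} therefore returns a non-zero vector of $M$.

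For the norm bound, I will take as established the chain \eqref{E:viprel}--\eqref{E:vibound}. This uses three inputs:
\begin{itemize}
\item the CDPR guarantee \eqref{E:cdprbound} on $\|\sigma(\alpha_i')\|_\infty$;
\item the $C$-balance hypothesis \eqref{E:balance};
\item the covolume formula \eqref{E:covolMi}.
\end{itemize}
Together they give $\|\sigma(\mathbf{v}_i)\|_2\le\sqrt{C}\,\gamma\,\det(\sigma(M_i))^{1/n}$ for every $i$. Since the minimum of nonnegative numbers is at most their geometric mean,
\begin{align*}
\|\sigma(\mathbf{v})\|_2=\min_i\|\sigma(\mathbf{v}_i)\|_2\le\Bigl(\prod_{i=1}^d\|\sigma(\mathbf{v}_i)\|_2\Bigr)^{1/d}\le\sqrt{C}\,\gamma\Bigl(\prod_{i=1}^d\det(\sigma(M_i))\Bigr)^{1/(dn)}.
\end{align*}
Substituting \eqref{E:Miprod} then gives exactly \eqref{E:baseB}.

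The main obstacle is justifying \eqref{E:Miprod}, namely that $\sigma(M)$ is the orthogonal direct sum of the $\sigma(M_i)$, or at least that the product of the covolumes equals $\det(\sigma(M))$. Orthogonality of the $\sigma(L_i)$ follows from Remark~\ref{rem:GSembed}, since $\langle\tilde{\mathbf{b}}_i,\tilde{\mathbf{b}}_j\rangle_K=0$ for $i\neq j$. However, $\bigoplus_i M_i$ is in general only a finite-index submodule of $M$.

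I would handle this by computing both sides through Eq.~\eqref{E:detsigmaM} and \eqref{E:covolMi}, which reduces the question to norms. The left side is $\prod_i|\Nm(J_i)|\,n^{n/2}\,|\Nm(B_i)|^{1/2}$ with $B_i=\langle\tilde{\mathbf{b}}_i,\tilde{\mathbf{b}}_i\rangle_K$. The right side is $n^{dn/2}|\Nm(\det_R M)|$, and by \eqref{E:detprod} this equals $n^{dn/2}\prod_i|\Nm(B_i)|^{1/2}$ once $\det_R$ is read as the square root of the Gram determinant. So the identity holds exactly when $\prod_i|\Nm(J_i)|=1$.

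Under the paper's setting I expect $J_i\supseteq R$, because $\tilde{\mathbf{b}}_i$ differs from $\mathbf{b}_i$ only by a $K$-combination of earlier vectors. If the equality $\prod_i|\Nm(J_i)|=1$ cannot be verified in full generality, the fallback is to argue along the projection $\pi_i$ onto $L_i$. The projection $\pi_i(M\cap(L_i\oplus\cdots\oplus L_d))$ equals $R\tilde{\mathbf{b}}_i$, which gives the covolume identity $\det\sigma(M)=\prod_i\det\sigma(R\tilde{\mathbf{b}}_i)$. I would then replace $\mathbf{v}_i$ by an element of $M_i$ compared against $R\tilde{\mathbf{b}}_i$, taking care that $M_i\subseteq R\tilde{\mathbf{b}}_i$ contributes only a covolume factor $\ge1$ on the correct side. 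This step is where I would spend most of the effort.
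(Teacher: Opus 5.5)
\textbf{There is a genuine gap, at the step you flagged, and it cannot be closed.}

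Your skeleton is exactly the paper's proof: the per-line bound \eqref{E:vibound}, then minimum $\le$ geometric mean, then \eqref{E:Miprod}. You are right that \eqref{E:Miprod} is the weak step. The paper justifies it only by asserting $\sigma(M)=\bigoplus_i\sigma(M_i)$, which is false in general. Your hope that $J_i\supseteq R$ is backwards. Take $\mathbf{m}=\sum_j x_j\mathbf{b}_j$ with $x_j\in R$ lying in $K\tilde{\mathbf{b}}_i$, and expand it in the Gram--Schmidt basis. For the largest $t$ with $x_t\neq 0$, the coefficient of $\tilde{\mathbf{b}}_t$ is $x_t$ itself, so $t=i$ and $\psi_i(\mathbf{m})=x_i\in R$. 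Hence $J_i\subseteq R$ (with $J_1=R$), and
\[
\prod_{i=1}^d\det(\sigma(M_i))=\Bigl(\prod_{i=1}^d|\Nm(J_i)|\Bigr)\det(\sigma(M))=[M:\bigoplus_i M_i]\,\det(\sigma(M))\ge\det(\sigma(M)).
\]
This is the wrong direction for \eqref{E:baseB}.

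Your fallback fails for the same reason. The identity $\det(\sigma(M))=\prod_i\det(\sigma(R\tilde{\mathbf{b}}_i))$ is correct. It comes from $\pi_i\bigl(M\cap(L_1\oplus\cdots\oplus L_i)\bigr)=R\tilde{\mathbf{b}}_i$, with $\pi_i$ the orthogonal projection onto $L_i$; with $L_i\oplus\cdots\oplus L_d$ you would get $J_d\tilde{\mathbf{b}}_d$ at $i=d$. But $M_i=J_i\tilde{\mathbf{b}}_i\subseteq R\tilde{\mathbf{b}}_i$ makes $\det(\sigma(M_i))=|\Nm(J_i)|\det(\sigma(R\tilde{\mathbf{b}}_i))$ the larger quantity. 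So the factor $|\Nm(J_i)|^{1/n}\ge 1$ sits on the upper-bound side and cannot be dropped. (Your remark that $\det_R$ in \eqref{E:detsigmaM} must be read as the square root of the Gram determinant \eqref{E:detR} is correct.)

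The statement itself fails. Write $B_i=\langle\tilde{\mathbf{b}}_i,\tilde{\mathbf{b}}_i\rangle_K$. By AM--GM, every nonzero $\alpha\tilde{\mathbf{b}}_i$ with $\alpha\in J_i$ satisfies
\[
\|\sigma(\alpha\tilde{\mathbf{b}}_i)\|_2^2\ge n|\Nm(\alpha)|^{2/n}|\Nm(B_i)|^{1/n}\ge\det(\sigma(M_i))^{2/n},
\]
using \eqref{E:Miexp}. So whatever generator CDPR returns, the output has norm at least $\min_i\det(\sigma(M_i))^{1/n}$. Now take $d=2$, $\mathbf{b}_1=(q,0)$ and $\mathbf{b}_2=(a,1)$ with $q\in\Z$ and $aR+qR=R$.
\begin{itemize}
\item Then $\mu_{21}=a/q$ and $\tilde{\mathbf{b}}_2=(0,1)$, so the basis is $1$-balanced.
\item The lines are $M_1=(qR,0)$ and $M_2=(0,qR)$, so both candidates satisfy $\|\sigma(\mathbf{v}_i)\|_2\ge\sqrt{n}\,q$.
\item Since $\det(\sigma(M))=n^nq^n$, the claimed bound is $\sqrt{C}\gamma\det(\sigma(M))^{1/(2n)}=\gamma\sqrt{nq}$.
\end{itemize}
Hence \eqref{E:baseB} fails as soon as $q>\gamma^2$.

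What your argument (and the paper's) actually proves is
\[
\|\sigma(\mathbf{v})\|_2\le\sqrt{C}\,\gamma\,\Bigl(\prod_{i=1}^d|\Nm(J_i)|\Bigr)^{1/(dn)}\det(\sigma(M))^{1/(dn)}.
\]
This is a bound relative to the sublattice $\bigoplus_i M_i$. It coincides with \eqref{E:baseB} only when every $\tilde{\mathbf{b}}_i$ already lies in $M$. Your argument that $\mathbf{v}$ is nonzero and lies in $M$ is fine.
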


Comparing with the Hermite upper bound   $\lambda_1(\sigma(M))\le\sqrt{dn}\cdot\det(\sigma(M))^{1/(dn)}$, the algorithm in Theorem \ref{T:base} achieves Hermite factor at most  $\sqrt{C/(dn)}\gamma=\exp(\tO(\sqrt{n}))$. Under the Gaussian heuristic $\lambda_1(\sigma(M))\approx\sqrt{dn/(2\pi e)}\cdot\det(\sigma(M))^{1/(dn)}$, the Module-SVP approximation factor is $\gamma=O(\sqrt{C})\cdot\exp(\tO(\sqrt n))$. For input bases sampled from the MLWE distribution, we have $C=O(1)$ with probability $1-o(1)$ (Lemma \ref{lem:mlwebal}), so, $\gamma=\exp(\tO(\sqrt{n}))$ unconditionally.

\begin{proof}
Set $A_i:=\|\sigma(\mathbf{v}_i)\|_2^2$. By the inequality \eqref{E:vibound}, we get 
\begin{align}
  A_i \le C\gamma^2\det(\sigma(M_i))^{2/n}.
  \label{E:Aiboundn}
\end{align}

Using the output in Eq.\eqref{E:outpurul}, we have $\|\sigma(\mathbf{v})\|_2^2=\min_{1\le i\le d} A_i$. From the power-mean inequality we get
\begin{align}
  \min_{1\le i\le d} A_i \le
  \Bigl(\prod_{i=1}^d A_i\Bigr)^{1/d}.
  \label{E:powermean}
\end{align}

Combining inequalities \eqref{E:Aiboundn} and \eqref{E:powermean}, and the orthogonal direct-sum identity \eqref{E:Miprod}, we obtain 
\begin{align}
  \|\sigma(\mathbf{v})\|_2^2
  &\le C\gamma^2\Bigl(\prod_{i=1}^d\det(\sigma(M_i))\Bigr)^{2/(dn)} \notag\\
  &=C\gamma^2\det(\sigma(M))^{2/(dn)}.
  \label{E:afterpm}
\end{align}
This further yields the inequality \eqref{E:baseB}, where the Hermite factor and heuristic Module-SVP factor are followed directly.
\end{proof}

The bound \eqref{E:baseB} requires the $C$-balance hypothesis (Definition \ref{D:balanced}) on the input basis. This holds for MLWE-sampled bases (Lemma \ref{lem:mlwebal}), and more generally for any  basis where the Galois conjugates of the  $\langle\tilde{\mathbf{b}}_i, \tilde{\mathbf{b}}_i\rangle_K$ are approximately equidistributed. An worst-case input basis may violate the hypothesis, in that case, the basis can be preprocessed to satisfy balance via algebraic LLL \cite{LLPS19,KEF20}, which raises $C$ by a factor of at most $2^{O(n)}$ still subsumed into the $\tO(\sqrt n)$ exponent of $\gamma$ for large $n$. Since our target regime is MLWE, we state the main theorem under the assumption $C=O(1)$.

For ML-KEM ($n=256$, $d\in\{2, 3, 4\}$) with the centered binomial distribution (CBD) input distribution, numerical simulations (Table \ref{Ta:balance}) yield $C\le 1.8$. The resulting module reduction factor is $\alpha_d:= \sqrt{C}\le 1.4$, independent of $d$, with all polynomial dependence on $d$ absorbed into the implicit constant of $\gamma$ via the per-line CDPR's bound. 

\begin{table}[t]
\centering
\caption{Empirical balance constant
$C=\max_{1\le i\le d}
 \| \sigma(\tilde{\mathbf{b}}_i)\|_2^2
/(n|\Nm(\langle\tilde{\mathbf{b}}_i,\tilde{\mathbf{b}}_i\rangle_K)|^{1/n})$ for $d=4$ over $N_{\mathrm{tr}}$ trials ($N_{\mathrm{tr}}=10^4$). Both the i.i.d. Gaussian and the centered binomial distribution (CBD$_{\eta=2}$) secret laws concentrate at the universal constant $e^{\gamma}=1.7811\ldots$ (Theorem~\ref{T:gaussbalance}). The induced module reduction factor is $\alpha_d=\sqrt{C}\to e^{\gamma/2}=1.3346<1.4$, independent of $d$.}
\label{Ta:balance}
\begin{tabular}{ccccccc}
\toprule
 & & \multicolumn{2}{c}{Gaussian} & \multicolumn{2}{c}{CBD$_{\eta=2}$} & \\
\cmidrule(lr){3-4}\cmidrule(lr){5-6}
$k$ & $n$ & mean $C$ & $99$\% & mean $C$ & $99$\% & $\sqrt{\,\overline{C}\,}=\alpha_d$ \\
\midrule
6  & 32   & 1.766 & 3.007 & 1.770 & 2.974 & 1.329 \\
7  & 64   & 1.770 & 2.572 & 1.771 & 2.562 & 1.331 \\
8  & 128  & 1.775 & 2.293 & 1.773 & 2.283 & 1.332 \\
9  & 256  & 1.778 & 2.117 & 1.780 & 2.126 & 1.333 \\
10 & 512  & 1.780 & 2.011 & 1.780 & 2.006 & 1.334 \\
11 & 1024 & 1.780 & 1.930 & 1.781 & 1.939 & 1.334 \\
12 & 2048 & 1.780 & 1.900 & 1.782 & 1.856 & 1.334 \\
\bottomrule
\multicolumn{7}{@{}p{0.92\linewidth}@{}}{\footnotesize
Asymptotics: $e^{\gamma}=1.7811$, $e^{\gamma/2}=1.3346$. The final column uses the Gaussian mean $\overline C$. The CBD$_{\eta=2}$ column gives the same value to within sampling error.}
\end{tabular}
\end{table}

\begin{example}
\label{ex:rank2}
Let $M=\mathbf{b}_1 R\oplus\mathbf{b}_2 R\subseteq K^2$ be free of rank $2$. The $K$-Gram-Schmidt step sets $\tilde{\mathbf{b}}_1=\mathbf{b}_1$ and $\tilde{\mathbf{b}}_2=\mathbf{b}_2-\mu_{21}\tilde{\mathbf{b}}_1$ with $\mu_{21}=\langle\mathbf{b}_2, \tilde{\mathbf{b}}_1\rangle_K/\langle\tilde{\mathbf{b}}_1, \tilde{\mathbf{b}}_1\rangle_K\in K$, giving two orthogonal $K$-lines $L_1$ and $L_2$. Setting $N_i=|\Nm_{K/\Q}(J_i)|$ and $T_i=|\Nm_{K/\Q}(\langle\tilde{\mathbf{b}}_i, \tilde{\mathbf{b}}_i\rangle_K)|$, the covolumes are $\det(\sigma(M_i))=N_i n^{n/2} T_i^{1/2}$ and further get 
$\det(\sigma(M))=N_1N_2 n^{n} (T_1T_2)^{1/2}$.

CDPR's algorithm applied to each line returns $\mathbf{v}_1$ and $\mathbf{v}_2$ with $\|\sigma(\mathbf{v}_i)\|_2^2\le C\gamma^2\det(\sigma(M_i))^{2/n}$ by Eq.\eqref{E:vibound}. Suppose the two lines are unbalanced, i.e., $\det(\sigma(M_1))=t\cdot\det(\sigma(M_2))$ for some $t>1$, then line $2$ is the shorter target. Taking the minimum,
\begin{align}
  \|\sigma(\mathbf{v})\|_2^2
  &=\min\{\|\sigma(\mathbf{v}_1)\|_2^2,\|\sigma(\mathbf{v}_2)\|_2^2\}
\notag
  \\
  &\le C\gamma^2\det(\sigma(M_2))^{2/n}\notag
  \\
  &\le C\gamma^2\bigl(\det(\sigma(M_1))\det(\sigma(M_2))\bigr)^{1/n}
 \notag
  \\
  &=C\gamma^2\det(\sigma(M))^{2/(2n)},
\end{align}
where the middle step is the power-mean (AM--GM) inequality $\min\{a,b\}\le\sqrt{ab}$. This is the $d=2$ case of \eqref{E:baseB}.
\end{example}

\section{CRT-scaled Rounding}\label{sec:crt}

The base algorithm of Section \ref{sec:twophase} achieves under the $C$-balance hypothesis on the input basis, with no dependence on size-reduction error $\rho$ in Phase 1. Optional coordinate-wise size reduction (Section \ref{subsec:phase1}) gives $\rho=n/2$, which requires $O(n)$-bit rational arithmetic to manipulate exactly. In this section we replace coordinate rounding by Chinese Remainder Theorem (CRT)-scaled rounding: we round in the finer lattice $P^{-1}R$ at totally split primes, achieving $\rho\le 1$ and enabling a bounded-precision implementation in which every ring operation is an $n$-point Number-Theoretic Transform (NTT) over $\F_{p_s}$. The approximation factor is unchanged; the payoff is in complexity and numerical stability.

\subsection{Totally split primes and NTT}\label{subsec:split-primes}

We first introduce the standard properties of totally split primes in 2-power cyclotomic rings, which enable efficient component-wise arithmetic via NTT.

\begin{definition}
An odd prime $p$ is totally split in $R=\Z[x]/(x^n+1)$ if the cyclotomic polynomial $x^n+1$ splits into $n$ distinct linear factors modulo $p$. This is equivalent to $p \equiv 1 \pmod{2n}$. 

\label{D:totalsplit}
\end{definition}

For such a prime, we have the ideal factorization $pR=\fp_1 \fp_2 \cdots \fp_n$, where each prime ideal $\fp_i$ has norm $p$. For a totally split prime $p$, the NTT gives a ring isomorphism between $R/pR$ and the component-wise product ring $\F_p^n$ as 
\begin{align}
  \NTT_p: \alpha(x) \mapsto (\alpha(\zeta_p^1), \alpha(\zeta_p^3), \dots, \alpha(\zeta_p^{2n-1})),
  \label{E:nttdef}
\end{align}
where $\zeta_p \in \F_p^\times$ is a fixed primitive $2n$-th root of unity. The inverse transform is denoted as $\INTT_p$. Here, the multiplication in $R/pR$ is mapped to component-wise multiplication in $\F_p^n$, which can be computed in $O(n\log n)$ ring operations. 

\subsection{Scaled rounding in $P^{-1}R$}

We first prove a general bound for rounding in the scaled lattice $P^{-1}R$, which generalizes the coordinate-wise rounding bound from Lemma \ref{lem:cover}.

\begin{lemma}
\label{lem:scalround}
Let $P$ be a positive integer coprime to $2$. For any $t \in K \otimes_\Q \R$, there is an element $c \in P^{-1}R$ such that
\begin{align}
\|\sigma(t-c)\|_\infty \le \frac{n}{2P}.
\label{E:scaleround}
\end{align}

\end{lemma}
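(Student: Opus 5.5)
The plan is to reduce Lemma~\ref{lem:scalround} to the coordinate-wise rounding bound of Lemma~\ref{lem:cover}(i) by a scaling argument. The key observation is that multiplication by the rational integer $P$ commutes with every Galois embedding: $\sigma_\ell(Px)=P\,\sigma_\ell(x)$ for all odd $\ell$ and all $x\in K\otimes_\Q\R$. It also maps the lattice $P^{-1}R$ bijectively onto $R$. Rounding $t$ to $P^{-1}R$ is therefore the same as rounding $Pt$ to $R$ and then dividing by $P$.

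Concretely, I fix $t\in K\otimes_\Q\R$ and set $t':=Pt\in K\otimes_\Q\R$. Lemma~\ref{lem:cover}(i) supplies $c'\in R$ with $\|\sigma(t'-c')\|_\infty\le n/2$. Explicitly, if $t=\sum_m x_m\zeta^m$, then $c'=\sum_m\lfloor Px_m\rceil\zeta^m$. I then define $c:=P^{-1}c'\in P^{-1}R$. For each odd $\ell$,
\begin{align}
|\sigma_\ell(t-c)|=\frac{1}{P}\,|\sigma_\ell(Pt-c')|\le\frac{1}{P}\cdot\frac{n}{2},
\end{align}
and taking the maximum over $\ell$ gives $\|\sigma(t-c)\|_\infty\le n/(2P)$, which is \eqref{E:scaleround}. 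The same computation together with Lemma~\ref{lem:cover}(ii) would also give the $L^2$ analogue $\|\sigma(t-c)\|_2\le n/(2P)$, which I may record as a remark.

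There is essentially no obstacle here. The only point to check is that the scaling is exact, namely that $\sigma_\ell$ is $\Q$-linear and extends $\R$-linearly to $K\otimes_\Q\R$, so the factor $P$ pulls out of the absolute value. The hypothesis that $P$ is coprime to $2$ plays no role in this bound. I will note that it is needed only later, when $P$ is taken to be a product of totally split primes $p_s\equiv 1\pmod{2n}$ and $P^{-1}R$-rounding is carried out through the NTT/CRT isomorphism $R/PR\cong\prod_s\F_{p_s}^n$ of Section~\ref{subsec:split-primes}. If a sharper form were wanted, the same scaling applied to Proposition~\ref{prop:coverlower} would give $O(\sqrt{n\log n}/P)$, but that is not required by the statement.
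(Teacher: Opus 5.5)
Your proof is correct and produces exactly the same rounded element as the paper, $c=\sum_m \lfloor P x_m\rceil\,\zeta^m/P$. The only difference is that the paper redoes the triangle-inequality estimate directly with $|\epsilon_m|\le 1/(2P)$, whereas you apply Lemma~\ref{lem:cover}(i) to $Pt$ and rescale; your remark that the oddness of $P$ plays no role in this bound is also accurate.
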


\begin{proof}
Since $\{1, \zeta, \dots, \zeta^{n-1}\}$ is a $\Q$-basis of $K$, any element $t \in K \otimes_\Q \R$ admits a unique real-coefficient expansion, i.e., $t=\sum_{m=0}^{n-1} x_m \zeta^m$ for $x_m \in \R$. For each $m$, let $c_m=\frac{\lfloor P x_m \rceil}{P} \in \frac{1}{P}\Z$, where $\lfloor \cdot \rceil$ denotes rounding to the nearest integer. 

Define the rounded element 
\begin{align}
c=\sum_{m=0}^{n-1} c_m \zeta^m \in P^{-1}R.
\label{E:rounded}
\end{align}
The rounding error for each coefficient is given by $\epsilon_m=x_m-c_m$, which satisfies $|\epsilon_m| \le \frac{1}{2P}$.

For any Galois embedding $\sigma_\ell: K \to \C$ (corresponding to odd $\ell$ modulo $2^k$), using Eq.(\ref{E:rounded}) we have
\begin{align}
|\sigma_\ell(t-c)|=\Big| \sum_{m=0}^{n-1} \epsilon_m \sigma_\ell(\zeta^m) \Big|
\le \sum_{m=0}^{n-1} |\epsilon_m| \cdot |\sigma_\ell(\zeta^m)|.
\label{E:Galemb}
\end{align}
Since $\sigma_\ell(\zeta^m)=\zeta^{\ell m}$ is a root of unity, we have $|\sigma_\ell(\zeta^m)|=1$ for all $m,\ell$. Combining the inequality (\ref{E:Galemb}) we get
\begin{align}
|\sigma_\ell(t-c)| \le n \cdot \frac{1}{2P}=\frac{n}{2P}.
\label{E:sigdiff}
\end{align}
This holds for all embeddings $\sigma_\ell$. Hence, we get $\|\sigma(t-c)\|_\infty \le \frac{n}{2P}$.
\end{proof}

\begin{corollary}\label{cor:rho1}
If we choose $P \ge n/2$, then the rounding error satisfies $\|\sigma(t-c)\|_\infty \le 1$, i.e., $\rho \le 1$ for the size reduction step.

\end{corollary}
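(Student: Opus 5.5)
This is a direct specialization of Lemma~\ref{lem:scalround}, so I would not build any new machinery. Fix $t\in K\otimes_\Q\R$ and take $P$ to be a positive odd integer with $P\ge n/2$; in the intended application $P$ is a product of totally split primes $p_s\equiv 1\pmod{2n}$, which is automatically odd. Lemma~\ref{lem:scalround} then supplies the scaled coordinate rounding $c=\sum_m \lfloor P x_m\rceil P^{-1}\zeta^m\in P^{-1}R$ satisfying $\|\sigma(t-c)\|_\infty\le n/(2P)$. The hypothesis $P\ge n/2$ gives $n/(2P)\le 1$, and hence $\|\sigma(t-c)\|_\infty\le 1$.

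Next I would transfer this to the rounding radius $\rho$ of the size-reduction step in Section~\ref{subsec:phase1}. There, $\Round(\mu_{ji})$ is replaced by the $P^{-1}R$-rounding of each Gram--Schmidt coefficient $\mu_{ji}\in K\subseteq K\otimes_\Q\R$. Applying the bound above to $t=\mu_{ji}$ shows that every residual $\mu_{ji}-c_{ji}$ has all embeddings of modulus at most $1$, which is exactly $\rho\le 1$, in place of the $\rho=n/2$ given by Lemma~\ref{lem:cover}.

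The only point needing care is how the statement is interpreted, not the inequality itself. The rounded coefficients lie in $P^{-1}R$, not in $R$, so $\mathbf{b}_j-c_{ji}\mathbf{b}_i$ lies in the scaled module $P^{-1}M$ rather than in $M$. I would therefore state explicitly that $\rho$ measures the embedding-wise rounding error of the Gram--Schmidt coefficients, which is all that Section~\ref{subsec:phase1} uses. Recovering an element of $M$, for example by clearing the factor $P$ via CRT and the NTT at the split primes, is deferred to Theorem~\ref{T:crt} and is not claimed by this corollary.

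The remaining check is that $P$ can actually be chosen: by Dirichlet's theorem there are primes $p\equiv 1\pmod{2n}$, and the smallest such prime already exceeds $2n>n/2$, so a single prime $P=p$ suffices.
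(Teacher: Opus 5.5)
Your proposal is correct and matches the paper's argument: the corollary follows directly from Lemma~\ref{lem:scalround}, since $P\ge n/2$ gives $n/(2P)\le 1$. Your side remark that $c_{ji}\in P^{-1}R$ means $\mathbf{b}_j-c_{ji}\mathbf{b}_i$ can leave $M$ is a fair observation about Algorithm~\ref{alg:crt}, but it lies outside what this corollary asserts.
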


From the Dirichlet's Theorem on primes in arithmetic progressions, there exist infinitely many primes $p \equiv 1 \pmod{2n}$ \cite{Wash97}. We choose a set of split primes $p_1, p_2, \dots, p_r$ with product $P=\prod_{s=1}^r p_s \ge n/2$. For practical parameters (e.g., ML-KEM with $n=256$), the smallest such prime is $p=12289$, which is larger than $n/2=128$, so a single prime suffices.

\subsection{CRT-scaled rounding algorithm}

Our CRT-scaled module-reduction algorithm combines the two-phase Gram-Schmidt framework with CRT-scaled rounding to achieve $\rho \le 1$. The algorithm is shown in Algorithm \ref{alg:crt}.

\begin{algorithm}[H]
\caption{CRT-Scaled Module Reduction}\label{alg:crt}
\begin{algorithmic}[1]
\Require Rank-$d$ free $R$-module $M$ with basis $\mathbf{b}_1, \dots, \mathbf{b}_d\in K^d$;
         totally split primes $p_1,\dots,p_r$ with $P=\prod_{s=1}^r p_s \ge n/2$.
\Ensure Non-zero short vector $\mathbf{v}\in M$.
\medskip
\medskip

\Statex //\textsl{Preprocessing: NTT/INTT for each split prime}//
\For{split prime $p_s$}
    \State Precompute $\operatorname{NTT}_{p_s}$ and $\operatorname{INTT}_{p_s}$
\EndFor
\medskip
\medskip
\Statex \textbf{PHASE 1: $R$-linear Gram–Schmidt with CRT-scaled size reduction}
\State Compute initial $K$-linear Gram–Schmidt vectors
    $\tilde{\mathbf{b}}_1,\dots,\tilde{\mathbf{b}}_d$ by
    \begin{align}
    \tilde{\mathbf{b}}_j=\mathbf{b}_j-\sum_{i=1}^{j-1}
    \frac{\langle \mathbf{b}_j, \tilde{\mathbf{b}}_i \rangle_K}{\langle \tilde{\mathbf{b}}_i, \tilde{\mathbf{b}}_i \rangle_K}\tilde{\mathbf{b}}_i
    \end{align}
\For{$j=2$ \textbf{to} $d$}
    \For{$i=j-1$ \textbf{downto} $1$}
        \State $\mu_{ji} \gets \dfrac{\langle \mathbf{b}_j, \tilde{\mathbf{b}}_i \rangle_K}{\langle \tilde{\mathbf{b}}_i, \tilde{\mathbf{b}}_i \rangle_K} \in K$
        \State $\alpha \gets P\cdot \mu_{ji}$
        \For{split prime $p_s$}
            \State $\alpha_s \gets \alpha \bmod p_s$, $\alpha_s\in R/\fp_s R$
            \State $\hat{\alpha}_s \gets \operatorname{NTT}_{p_s}(\alpha_s) \in \mathbb{F}_{p_s}^n$
           \State Component-wise round each entry of $\hat\alpha_s$: lift to its symmetric representative in $\{-\lfloor p_s/2\rfloor, \dots, \lfloor p_s/2\rfloor\}\subset\Z$, round to the nearest integer, and reduce modulo $p_s$.
            \State $\hat{\alpha}_s \gets \operatorname{INTT}_{p_s}(\hat{\alpha}_s)$
        \EndFor
        \State Combine $\hat{\alpha}_1, \dots, \hat{\alpha}_r$ via CRT to $\hat{\alpha}\in R/PR$
        \State $c_{ji} \gets \hat{\alpha} \cdot P^{-1} \in P^{-1}R$
        \State $\mathbf{b}_j \gets \mathbf{b}_j-c_{ji}\cdot \mathbf{b}_i$
        \State Recompute $\tilde{\mathbf{b}}_j, \dots, \tilde{\mathbf{b}}_d$ to preserve orthogonality
    \EndFor
\EndFor

\medskip
\medskip
\Statex \textbf{PHASE 2: Diagonal CDPR's shortening on rank-1 submodules}
\For{$i=1$ \textbf{to} $d$}
    \State Apply CDPR's algorithm to $R\cdot \tilde{\mathbf{b}}_i$ to find unit $u_i\in R^\times$
    \State $\tilde{\mathbf{b}}_i' \gets u_i^{-1}\cdot \tilde{\mathbf{b}}_i$
\EndFor

\State \Return $\mathbf{v}=\arg\min_{1\le i\le d}\|\sigma(\tilde{\mathbf{b}}_i')\|_2$.
\end{algorithmic}
\end{algorithm}

\begin{theorem}\label{T:crt}
Algorithm \ref{alg:crt} applied to a $C$-balanced $R$-basis (Definition \ref{D:balanced}) with $P=\prod_{s=1}^r p_s\ge n/2$ and the output \eqref{E:outpurul}, produces a non-zero vector $\mathbf{v}\in M$ satisfying the same bound as Theorem \ref{T:base} 
\begin{align}
    \|\sigma(\mathbf{v})\|_2 \le
    \sqrt{C}\gamma\det(\sigma(M))^{1/(dn)}.
    \label{E:crtbound}
\end{align}
Phase 1 costs $O(d^2 r n\log n)$ bit operations, using only bounded-precision arithmetic modulo $p_1, \dots, p_r$.
\end{theorem}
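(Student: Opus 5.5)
The plan has two parts: show that Phase 1 does not affect the approximation factor, then count the bit cost of Phase 1.

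\textbf{Phase 1 leaves the quantities in Theorem~\ref{T:base} unchanged.} Each update $\mathbf{b}_j\gets\mathbf{b}_j-c_{ji}\mathbf{b}_i$ with $i<j$ is a unipotent change of basis, so it preserves the $K$-span filtration $K\mathbf{b}_1\subseteq K\mathbf{b}_1+K\mathbf{b}_2\subseteq\cdots$. By the recurrence \eqref{E:GSKO}, the $K$-Gram-Schmidt vectors $\tilde{\mathbf{b}}_1,\dots,\tilde{\mathbf{b}}_d$ are therefore unchanged. In particular the lines $L_i=K\tilde{\mathbf{b}}_i$, the quantities $\langle\tilde{\mathbf{b}}_i,\tilde{\mathbf{b}}_i\rangle_K$ and the $C$-balance constant of Definition~\ref{D:balanced} are all invariant.

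The module itself needs more care. If $c_{ji}\in R$, then $M$ is unchanged. Since $c_{ji}$ only lies in $P^{-1}R$, I would instead argue that the rank-1 pieces $M_i=M\cap L_i$ and the product identity \eqref{E:Miprod} are computed for the module spanned by the final basis. There are two options:
\begin{itemize}
\item take the output vector from Phase 2 as in \eqref{E:outpurul}, with $J_i$ defined relative to the original $M$, so that Phase 1 acts purely as numerical preconditioning; or
\item note that $\det_R$ in \eqref{E:detprod} is a product of Gram-Schmidt norms and hence unchanged.
\end{itemize}
I would take the first option, since the paper already says size reduction is optional and does not enter the bound.

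With this in place, the argument for \eqref{E:vibound}, \eqref{E:Miprod} and the power-mean step \eqref{E:powermean} goes through verbatim and gives \eqref{E:crtbound}. Corollary~\ref{cor:rho1} adds that every size-reduced coefficient satisfies $\|\sigma(\mu_{ji}-c_{ji})\|_\infty\le1$. This is the bounded-precision guarantee: the entries stay bounded throughout, although the final bound does not depend on it.

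\textbf{Cost of Phase 1.} There are $O(d^2)$ pairs $(j,i)$. For each pair and each prime $p_s$, the algorithm does a reduction modulo $p_s$, one $\NTT_{p_s}$, $n$ componentwise roundings and one $\INTT_{p_s}$. These are $O(n\log n)$ operations on words of size $\log p_s=O(1)$, treating the primes as fixed size (e.g.\ $p=12289$). That gives $O(rn\log n)$ per pair.

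The remaining steps also fit in this budget:
\begin{itemize}
\item The CRT recombination over $r$ primes costs $O(rn)$.
\item Computing the inner products $\langle\cdot,\cdot\rangle_K$ takes $O(d)$ ring multiplications, each done in the NTT domain at cost $O(n\log n)$ per prime.
\item The updates $\mathbf{b}_j-c_{ji}\mathbf{b}_i$ are NTT-domain componentwise operations.
\end{itemize}
Summing over pairs gives $O(d^2rn\log n)$. For this count to hold, the recomputation of $\tilde{\mathbf{b}}_j,\dots$ must be done incrementally, as a rank-one update of the Gram-Schmidt data, rather than from scratch; otherwise it would contribute an extra factor of $d$.

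\textbf{Main obstacle.} The hard step is justifying ``bounded precision'' and the bit-cost claim honestly. The division by $\langle\tilde{\mathbf{b}}_i,\tilde{\mathbf{b}}_i\rangle_K$ needs that element to be invertible modulo each $p_s$, so I would choose split primes that avoid its norm, which excludes finitely many primes. It also needs the reduction $\alpha\bmod p_s$ of an element of $K$ to be well defined. My first attempt would be to keep all quantities in $R[1/D]$ for a denominator $D$ coprime to $P$, and to count arithmetic operations modulo the word-sized $p_s$ as $O(1)$ bit operations.
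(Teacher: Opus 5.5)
Your proof of \eqref{E:crtbound} is the paper's argument: Phase 2 and the output rule \eqref{E:outpurul} never use Phase 1, so Theorem~\ref{T:base} applies verbatim. You are more careful than the paper on two points. First, the unipotent updates leave every $\tilde{\mathbf{b}}_i$ unchanged, so the ``recompute'' line of Algorithm~\ref{alg:crt} is a no-op. Second, since $c_{ji}\in P^{-1}R$, the updated basis spans a module $M'\neq M$. Your first option, defining $J_i$ from the original $M$, is the right one; the second only gives $\det(\sigma(M'))=\det(\sigma(M))$ and would not put $\mathbf{v}$ in $M$.

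\textbf{Cost count.} This is the step that fails. For each pair $(j,i)$, the inner product $\langle\mathbf{b}_j,\tilde{\mathbf{b}}_i\rangle_K$ and the update $\mathbf{b}_j-c_{ji}\mathbf{b}_i$ each involve $d$ ring elements. Even in the NTT domain that is $O(drn)$ per pair, hence $O(d^3rn)$ over all pairs. The initial Gram--Schmidt step of Phase 1 is likewise $O(d^3)$ ring operations. This is not $O(d^2rn\log n)$ unless $d=O(\log n)$. The paper's proof reaches $O(d^2rn\log n)$ only by counting the per-coefficient rounding pipeline (residue, $\NTT_{p_s}$, rounding, $\INTT_{p_s}$, CRT). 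You should either restrict the claim the same way or state the total as $O(d^3rn+d^2rn\log n)$.

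\textbf{Bounded precision.} The fix you sketch cannot support your appeal to Corollary~\ref{cor:rho1} for the computed $c_{ji}$. The rounding of Lemma~\ref{lem:scalround}, $c_m=\lfloor Px_m\rceil/P$, depends on the real size of the coefficients of $\mu_{ji}$. Residues modulo $p_s$ carry no such information. Concretely, if $\mu_{ji}\in R[1/D]$ with $\gcd(D,P)=1$ as you propose, then $\alpha=P\mu_{ji}\equiv 0\pmod{p_s}$ for every $s$, so the pipeline returns $c_{ji}=0$. In any case, the componentwise ``rounding'' in the algorithm acts on integer lifts and is the identity.

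So $\|\sigma(\mu_{ji}-c_{ji})\|_\infty\le 1$ is not established for the algorithm's output; the paper's proof makes the same unsupported appeal. This does not affect \eqref{E:crtbound}, which you rightly derive independently of Phase 1. A genuine $\rho\le 1$ size reduction, however, needs enough real precision to round $P\mu_{ji}$ coefficientwise. That is exactly the information the modular pipeline discards.
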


\begin{proof}
The bound \eqref{E:crtbound} is same as \eqref{E:baseB}, and follows from Theorem \ref{T:base}: the output and CDPR's shortening in Phase 2 are same in two algorithms, and neither depends on the rounding step in Phase 1. 
  
The complexity is for $r$-prime using CRT and NTT: each of the $O(d^2)$ Gram-Schmidt coefficients $\mu_{ji}$ is lifted to $P^{-1}R$ by computing its residue modulo $p_s$, applying an $n$-point $\NTT_{p_s}$ (cost $O(n\log n)$), rounding component-wise, applying $\INTT_{p_s}$, and CRT-combining the $r$ results. The total per-coefficient cost is $O(rn\log n)$, which gives $O(d^2 r n\log n)$ in total. 
  
By Corollary \ref{cor:rho1}, the resulting residuals $\mu_{ji}'$ satisfy $\|\sigma(\mu_{ji}')\|_\infty\le 1$.
\end{proof}

\begin{remark}\label{rem:crtpra}
For ML-KEM ($n=256$), consider a single split prime $p=12289$, giving $r=1$ and per-coefficient cost $O(n\log n)\approx 2000$ field operations. Phase 1 costs $O(d^2 n\log n)\approx 3\cdot 10^4$ field operations for $d=4$.
\end{remark}

\section{MILP Sign Optimization and the $O(1)$ Discrepancy}\label{sec:milp}

The bound of Theorem~\ref{T:base} hides a polynomial-in-$n$ constant inside $\gamma$: the discrepancy of the sign-selection step in CDPR's decode phase. Prior work used a greedy heuristic giving discrepancy $\Theta(\sqrt{nk})$ and assumed this growth was intrinsic. In this section we show the discrepancy is in fact $O(1)$ for every parameter size, by formulating sign selection as a mixed-integer linear program (MILP) with a special circulant structure derived from the logsine geometry of cyclotomic units.

\subsection{MILP formulation of the sign-selection problem}

The balanced sign-selection problem from Section \ref{sec:cdpr} is defined as
\begin{align}
  \delta^*(k)=\min_{\substack{s\in\{\pm1\}^{N_s}, 
  \sum_{j=1}^{N_s} s_j\in\{\pm 1\}}} \|\mathbf{M} \mathbf{s}\|_\infty,
  \label{E:signprobv1}
\end{align}
where $N_s=|G|-1$ is the number of sign variables, $\mathbf{M}\in\R^{|G|\times N_s}$ is the error matrix from the logsine vector \eqref{E:errormatrix}, and the balance constraint enforces a near-equal number of $+1$ and $-1$ signs. Note $N_s=2^{k-2}-1$ is odd, so the sum of $N_s$ odd terms $\pm1$ is odd. Hence, $\sum_j s_j\in\{\pm 1\}$ exhausts the integer-feasible options and coincides with the balance constraint of Definition \ref{D:signprob}.

Since the objective $\|\mathbf{M}\mathbf{s}\|_\infty$ and constraints are linear in $\mathbf{s}$, we reformulate the problem as a mixed-integer linear program (MILP) by linearizing the absolute value. We use the substitution $\mathbf{s}=2\mathbf{x}-\mathbf{1}$, which maps $s_j\in\{\pm1\}$ to binary variables $x_j\in\{0,1\}$. The balanced sign-selection problem is then equivalent to
\begin{align}
\begin{aligned}
  \min_{\mathbf{x}\in\{0,1\}^{N_s},t\ge 0}\quad & t \\
  \text{s.t.}\quad & \mathbf{M}(2\mathbf{x}-\mathbf{1}) \le t\mathbf{1}, \\
  & -\mathbf{M}(2\mathbf{x}-\mathbf{1}) \le t\mathbf{1}, \\
  & \sum_{j=1}^{N_s} x_j \in \{ \lfloor \tfrac{N_s}{2} \rfloor, \lceil \tfrac{N_s}{2} \rceil \},
\end{aligned}
\label{E:milp}
\end{align}
where the first two constraints make $-t \mathbf{1} \le \mathbf{M}\mathbf{s} \le t \mathbf{1}$, which is equivalent to $\|\mathbf{M}\mathbf{s}\|_\infty \le t$. The final constraint corresponds to the balance condition $\sum_j s_j\in\{\pm1\}$, since $\sum_j s_j=2\sum_j x_j-N_s$.

\begin{remark}\label{rem:notation}
To avoid confusion in this section, here $n=2^{k-1}$ denotes the degree of the $2^k$-th cyclotomic field $K/\Q$, i.e., the ring degree; $|G|=2^{k-2}$ is the order of the orbit group $G=(\Z/2^k\Z)^\times/\{\pm1\}$; $N_s=|G|-1$ is the number of sign variables in the MILP. For ML-KEM parameters ($k=9$), $n=256$, $|G|=128$, and $N_s=127$.
\end{remark}

The MILP \eqref{E:milp} has $N_s$ binary variables, one continuous variable $t$, $2|G|+1$ linear constraints, and a constraint matrix of size $(2|G|+1)\times(N_s+1)$. This is a standard 0-1 MILP with a minimax objective, which can be solved using branch-and-bound algorithms.

\begin{proposition}\label{prop:milpstand}
The formulation \eqref{E:milp} is a standard mixed-integer linear program \cite{Wolsey98,NemhauserWolsey88}. Its linear programming (LP) obtained by replacing $\mathbf{x}\in\{0,1\}^{N_s}$ with $\mathbf{x}\in[0,1]^{N_s}$ is solvable in polynomial time via interior-point methods \cite{Karmarkar84}, and provides a certified lower bound on the optimal discrepancy $\delta^*(k)$.
\end{proposition}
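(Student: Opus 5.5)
The plan is to verify the three claims of Proposition \ref{prop:milpstand} in order: that \eqref{E:milp} is a standard MILP, that its LP relaxation can be solved in polynomial time, and that the relaxation gives a lower bound on $\delta^*(k)$.

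\textbf{Step 1: \eqref{E:milp} is a standard MILP.} I would first check that every component of \eqref{E:milp} is linear with real coefficients.

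\begin{itemize}
\item \emph{Objective.} The objective $t$ is linear.
\item \emph{Discrepancy constraints.} The two blocks $\pm\mathbf{M}(2\mathbf{x}-\mathbf{1})\le t\mathbf{1}$ are affine in $(\mathbf{x},t)$. The matrix $\mathbf{M}$ is a fixed real matrix obtained from $\mathbf{z}$ via \eqref{E:errormatrix}.
\item \emph{Balance constraint.} The set-valued constraint $\sum_j x_j\in\{\lfloor N_s/2\rfloor,\lceil N_s/2\rceil\}$ is equivalent, for integral $\mathbf{x}$, to the pair of linear inequalities $\lfloor N_s/2\rfloor\le\sum_j x_j\le\lceil N_s/2\rceil$. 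This holds because the two values are consecutive integers.
\item \emph{Integrality.} The only non-continuous requirement is $\mathbf{x}\in\{0,1\}^{N_s}$.
\end{itemize}

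This is exactly the form $\min\{c^Ty: Ay\le b,\ y_I\in\{0,1\}\}$ in \cite{Wolsey98,NemhauserWolsey88}. The equivalence with \eqref{E:signprobv1} follows from the bijection $\mathbf{s}=2\mathbf{x}-\mathbf{1}$, the identity $\sum_j s_j=2\sum_j x_j-N_s$, and the observation that $\|\mathbf{M}\mathbf{s}\|_\infty\le t$ holds iff both blocks of inequalities hold. Hence the optimal $t$ equals $\delta^*(k)$.

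\textbf{Step 2: the LP relaxation is polynomial-time solvable.} Replacing $\mathbf{x}\in\{0,1\}^{N_s}$ by $0\le x_j\le1$ gives an LP with $N_s+1$ variables and $O(|G|)$ constraints. I would invoke Karmarkar's interior-point method \cite{Karmarkar84}, which runs in time polynomial in the encoding length. The main obstacle is that the entries of $\mathbf{M}$ are transcendental, since they are logs of sines. I would handle this by rounding $\mathbf{M}$ to $b$-bit rationals, with $b$ polynomial in $k$. The encoding length is then polynomial in $|G|$ and $b$. The change in the optimal value is at most $\|\Delta\mathbf{M}\|_\infty\cdot\|\mathbf{s}\|_1\le N_s 2^{-b}$, because $\|2\mathbf{x}-\mathbf{1}\|_\infty\le1$ on the box. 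A certified lower bound on the rounded instance therefore yields a certified lower bound on the true instance after subtracting $N_s2^{-b}$.

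\textbf{Step 3: the relaxation is a lower bound.} This is the standard feasible-set inclusion argument.

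\begin{itemize}
\item Every feasible point of \eqref{E:milp} is feasible for the relaxation. The balance inequalities are unchanged, and $\{0,1\}^{N_s}\subset[0,1]^{N_s}$.
\item Hence $t_{LP}\le t_{MILP}=\delta^*(k)$.
\item Weak LP duality turns this into a \emph{certified} bound: any dual-feasible solution, for instance the one returned by the interior-point method and verified in exact rational arithmetic, has objective value at most $t_{LP}\le\delta^*(k)$.
\item Combined with the perturbation estimate from Step 2, this certifies the lower bound for the actual matrix $\mathbf{M}$.
\end{itemize}
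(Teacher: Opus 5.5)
Your proposal is correct and follows the same route as the paper: check that the objective and constraints are linear with integrality only on $\mathbf{x}$, relax to the box $[0,1]^{N_s}$ and invoke an interior-point method, and get the lower bound because the integer feasible set sits inside the relaxed one. Your additions tighten points the paper's short proof glosses over: rewriting the set-valued balance constraint as $\lfloor N_s/2\rfloor\le\sum_j x_j\le\lceil N_s/2\rceil$ so the relaxation is a genuine LP, rounding the transcendental entries of $\mathbf{M}$ to rationals with the $N_s2^{-b}$ perturbation bound, and making ``certified'' precise via weak duality; you also rightly claim only polynomial time rather than the paper's bare $O(N_s^3)$ operation count, which omits the iteration and bit-length factors.
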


\begin{proof}
The objective and all constraints are linear in the decision variables $(\mathbf{x},\mathbf{t})$, with integrality constraints only on $\mathbf{x}\in\{0,1\}^{N_s}$, a standard MILP form \cite{Wolsey98}. From the LP relaxation with box constraints $0\le x_j\le1$, it can be solved in $O(N_s^3)$ operations via interior-point methods \cite{Karmarkar84}. The optimal value of the LP relaxation is a lower bound on $\delta^*(k)$.
\end{proof}

We solved MILP \eqref{E:milp} to certified global optimality for $k=4, \dots, 12$ using MATLAB's \texttt{intlinprog} solver, with cross-verification via Gurobi \cite{Gurobi}. The results are summarized in Table \ref{Ta:milp}.

\begin{table}[t]
\centering
\caption{MILP-optimal balanced discrepancy. Tower greedy and local search shown for comparison. The classical tower greedy bound is $\Theta(\sqrt{nk})\approx48$.}
\label{Ta:milp}
\begin{tabular}{c c c S[table-format=2.4] S[table-format=2.4] S[table-format=2.4] l}
\toprule
{$k$} & {$|G|$} & {$N_s$} & {Tower Greedy} & {Local Search} & {MILP $\delta^*$} & {Method} \\
\midrule
4 & 4 & 3 & 0.4407 & 0.4407 & 0.4407 & exhaustive \\
5 & 8 & 7 & 0.4407 & 0.4407 & 0.4407 & exhaustive \\
6 & 16 & 15 & 0.4407 & 0.4407 & 0.4407 & exhaustive \\
7 & 32 & 31 & 0.4407 & 0.4407 & 0.4407 & local search \\
8 & 64 & 63 & 5.3453 & 0.4407 & 0.4407 & local search \\
9 & 128 & 127 & 9.9817 & 0.4407 & 0.4407 & local search \\
10 & 256 & 255 & 16.4681 & 0.4407 & 0.4407 & local search \\
11 & 512 & 511 & 25.1833 & 0.4407 & 0.4407 & local search \\
12 & 1024 & 1023 & 40.8353 & 0.4407 & 0.4407 & local search \\
\bottomrule
\end{tabular}
\end{table}

\begin{theorem}\label{T:milp}
For each $k\in\{4, \dots, 12\}$, the balanced sign-selection MILP \eqref{E:milp} has the optimal value $\delta^*(k)=0.44069\ldots$ shown in Table~\ref{Ta:milp}. 
\end{theorem}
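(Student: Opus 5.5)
The statement is a finite collection of nine numerical claims, one for each $k\in\{4,\dots,12\}$. I would prove it by exhibiting, for each $k$, two matching certificates. The first is a \emph{primal certificate}: an explicit balanced sign vector $\mathbf{s}^{(k)}\in\{\pm1\}^{N_s}$ with $\|\mathbf{M}\mathbf{s}^{(k)}\|_\infty\le 0.44069\ldots$. The second is a \emph{dual certificate} proving $\|\mathbf{M}\mathbf{s}\|_\infty\ge 0.44069\ldots$ for every feasible $\mathbf{s}$. The upper bound is routine. I would take the local-search or MILP incumbent reported in Table~\ref{Ta:milp}, recompute $\mathbf{M}$ from \eqref{E:errormatrix} and \eqref{E:logsine}, and evaluate $\mathbf{M}\mathbf{s}^{(k)}$ in interval arithmetic. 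The FFT/IFFT in \eqref{E:errormatrix} can be replaced by the equivalent exact circulant convolution over $G=\langle 5\rangle$, so no rounding ambiguity enters. For $k\le 6$ the enumeration is exhaustive, so there both bounds follow directly.

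For the lower bound I would first try to identify the constant in closed form rather than rely on solver output. Numerically $0.44069\ldots=\tfrac12\ln(1+\sqrt2)$. Moreover $1+\sqrt2=\sin(3\pi/8)/\sin(\pi/8)$ is the modulus of the cyclotomic unit $\xi_3$ of $\Q(\zeta_8)$, the bottom nontrivial level of the tower \eqref{E:towergreedy}. This suggests the following mechanism: some fixed coordinate $i_0$ of $\mathbf{M}\mathbf{s}$ takes the value $\pm\tfrac12\ln(1+\sqrt2)$ for every balanced $\mathbf{s}$. Such a coordinate would come from the part of $\mathbf{z}$ invariant under the subgroup of $G$ fixing $K_3^+$, together with the $-\tfrac12 z_i$ shift in \eqref{E:errormatrix}. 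Concretely, I would show that some row of $\mathbf{M}$, or a nonnegative combination $\mathbf{y}^T\mathbf{M}$ with $\|\mathbf{y}\|_1=1$, is an affine function of $\sum_j s_j$ alone. Under the balance constraint $\sum_j s_j=\pm1$ this gives $|\mathbf{y}^T\mathbf{M}\mathbf{s}|=\tfrac12\ln(1+\sqrt2)$, hence $\|\mathbf{M}\mathbf{s}\|_\infty\ge\tfrac12\ln(1+\sqrt2)$. To find $\mathbf{y}$, I would diagonalize the circulant part via the characters of $G\cong\Z/2^{k-2}\Z$ and look at the trivial character and the order-$2$ character. The column sums of the circulant piece are then computable from $\sum_j z_j$ and $\sum_j(-1)^j z_j$, which by the norm relations of Part~I reduce to $\log$ of norms of cyclotomic units down to $K_3^+$.

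If no such exact identity appears, I would fall back to a certified computation. I would solve the LP relaxation of Proposition~\ref{prop:milpstand} together with the branch-and-bound tree, extract for each leaf a rational dual vector certifying $t\ge 0.44069$, and verify all leaves in exact rational or interval arithmetic. This certification for $k=10,11,12$ ($N_s$ up to $1023$) is the main obstacle. A purely solver-based lower bound is not a proof, and the LP relaxation is likely much weaker than $0.4407$. So I would invest first in the structural row-combination argument above, since it would simultaneously give the lower bound uniformly in $k$. The remaining per-$k$ work would then reduce to verifying the explicit primal vectors $\mathbf{s}^{(k)}$.
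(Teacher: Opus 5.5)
Your upper-bound half is fine, and you identified the constant correctly. Set $\tilde{\mathbf s}=(-S,s_1,\dots,s_{N-1})\in\{\pm1\}^N$, which satisfies $\sum_j\tilde s_j=0$. Lemma~\ref{lem:conv} then reads $\mathbf e=\tfrac12\,\mathbf z*\tilde{\mathbf s}$ (cyclic convolution on $G$). The alternating choice $\tilde s_j=(-1)^j$ gives $e_i=\tfrac12(-1)^i\sum_a(-1)^az_a$, and by Lemma~\ref{lem:fold} this equals $\mp\tfrac12\ln(1+\sqrt2)$ in every coordinate at every level.

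The gap is in the lower bound: the mechanism you propose does not exist.
\begin{itemize}
\item No coordinate is pinned. At $k=4$ the non-alternating balanced patterns have $|e_i|\in\{0.606,\,1.009\}$ in every coordinate.
\item The trivial and order-$2$ characters do not give a functional of $S$ alone. One has $\sum_ie_i\equiv0$, while $\sum_i(-1)^ie_i=-\tfrac12\ln(1+\sqrt2)\sum_j(-1)^j\tilde s_j$ depends on the alternating sum of the signs.
\item More fundamentally, swapping two opposite signs shows that $\mathbf y^T\mathbf M\mathbf s$ is a function of $S$ on the balanced set only if $\mathbf y^T\mathbf M=c\,\mathbf 1^T$. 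In Fourier terms this forces $\hat y(\chi)=-2c/\overline{\hat z(\chi)}$ at every nontrivial $\chi$. So $\mathbf y$ is, up to an additive constant, the inverse filter $\mathbf w$ of $\mathbf z$.
\item The best such certificate is $\bigl(2\min_c\|\mathbf w+c\mathbf 1\|_1\bigr)^{-1}$. This is exactly the LP value of the relaxation that keeps $\sum_js_j=\pm1$ but drops $s_j\in\{\pm1\}$ altogether. It reaches $\tfrac12\ln(1+\sqrt2)$ only if $\max_{i\ \mathrm{even}}w_i\le\min_{i\ \mathrm{odd}}w_i$. By my hand computation this holds at $k=4$ and narrowly at $k=5$, but fails at $k=6$, where the certificate gives only about $0.375$.
\item Restricting to $\mathbf y\ge0$ is worse still: about $0.217$ already at $k=4$.
\end{itemize}
So this route cannot give the bound for $k\ge6$, let alone uniformly. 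What remains is your fallback, which is the paper's own argument (branch-and-bound trusted via the solver's reported gap) made rigorous but not executed. You flag $k=10,11,12$ as the obstacle yourself.

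The missing idea is to use $|\tilde s_j|=1$ for every $j$ quadratically rather than linearly. Let $\chi_0$ be the trivial character. Then $\hat{\tilde s}(\chi_0)=0$ and $\sum_\chi|\hat{\tilde s}(\chi)|^2=N\|\tilde{\mathbf s}\|_2^2=N^2$, so Parseval gives
\[
\|\mathbf e\|_\infty^2\ \ge\ \frac1N\|\mathbf e\|_2^2=\frac1{4N^2}\sum_{\chi\ne\chi_0}|\hat z(\chi)|^2\,|\hat{\tilde s}(\chi)|^2\ \ge\ \frac14\min_{\chi\ne\chi_0}|\hat z(\chi)|^2 .
\]
By Lemma~\ref{lem:fold}, $\hat z(\chi)$ depends only on the primitive character inducing $\chi$. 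The standard formula for $L(1,\chi)$ of an even character gives $|\hat z(\chi)|=\tfrac12\sqrt{f_\chi}\,|L(1,\chi)|$. For the conductor-$8$ character this equals $\ln(1+\sqrt2)$; at conductor $16$ it is already about $1.66$.

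For $k\le12$ the lower bound therefore reduces to checking $|\hat z(\chi)|\ge\ln(1+\sqrt2)$ for at most $1023$ nontrivial characters. That is a finite, easily certified computation with a large margin. For general $k$ it reduces Conjecture~\ref{conj:universal} to a very weak explicit lower bound on $|L(1,\chi)|$.

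Equality in the Parseval step forces $\hat{\tilde s}$ to be supported on the order-$2$ character. Hence the alternating pattern and its negative are the unique optima, and no branch-and-bound certificate is needed.
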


\begin{proof}
For each $k\in\{4, \dots, 12\}$ we solved MILP \eqref{E:milp} by branch-and-bound (intlinprog, cross-checked with Gurobi \cite{Gurobi}). In every instance, the solver terminated with a reported optimality gap below $10^{-6}$ between the best integer-feasible value $0.44069\ldots$, certifying $\delta^*(k)$ (Table \ref{Ta:milp}). An independent $L^p$-homotopy local search reproduced the same optimum without a MILP solver. For $k>12$, we regard it as Conjecture~\ref{conj:universal}.
\end{proof}

\begin{conjecture}\label{conj:universal}
For all $k\ge4$, the optimal balanced discrepancy is the universal constant $\delta^*(k)=\delta^*=0.44069\ldots$, independent of $k$.
\end{conjecture}

\begin{example}\label{ex:milp-k4}
At $k=4$, $|G|=4$ and $N_s=3$, the MILP has three binary variables $s_1, s_2, s_3\in\{\pm1\}$ subject to $\sum s_j\in\{\pm 1\}$, leaving four feasible sign patterns up to sign-flip: $(+, +, -)$, $(+, -, +)$, $(-, +, +)$, and their negatives. The logsine matrix $\mathbf{M}\in\R^{4\times 3}$ is obtained from $z_j=\log(2|\sin(\pi\mathrm{orb}(j)/16)|)$ with $\mathrm{orb}(0,1,2,3)=(1,3,5,7)$. So, we have $\mathbf{z}\approx(-1.659,-0.668,-0.250,-0.054)$. By evaluating $\|\mathbf{M}\mathbf{s}\|_\infty$ on four feasible $\mathbf{s}$ yields the minimum $\delta^*(4)=0.44069\ldots$, attained on $(+,-,+)$ and its negative. 
\end{example}

In fact, we can prove a weak universality bound for Conjecture \ref{conj:universal}. 

Let $N:=|G|=2^{k-2}$ and index $G\cong\Z/N\Z$ by powers of the generator $5$, so position $a$ carries the orbit value of $5^a\bmod 2^k$. Note $z^{(k)}_a=\log(2|\sin(\pi \mathrm{orb}_k(a)/2^k)|)$ and $\mathbf{e}=M\mathbf{s}$ with the balance constraint $S:=\sum_{j=1}^{N-1}s_j\in\{\pm1\}$.

\begin{lemma}\label{lem:conv}
For $1\le j\le N-1$ and $i\in\Z/N\Z$, $M_{i,j}=\frac{1}{2}(z^{(k)}_{(i-j)\bmod N}-z^{(k)}_i)$, we have 
\begin{align}
  e_i=\frac{1}{2}\sum_{j=1}^{N-1}s_j z^{(k)}_{(i-j)\bmod N}-\frac{1}{2}S z^{(k)}_i.
  \label{E:weakconv}
\end{align}

\end{lemma}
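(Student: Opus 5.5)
We compute $M_{i,j}$ directly from the definition \eqref{E:errormatrix}, and then sum against the signs. Index $G\cong\Z/N\Z$ by powers of $5$, as fixed just before the statement. With this indexing the FFT in \eqref{E:errormatrix} is the DFT over the cyclic group $\Z/N\Z$. Multiplying Fourier transforms pointwise and inverting is therefore cyclic convolution, by the convolution theorem for the cyclic DFT:
\begin{align}
\IFFT(\hat{\mathbf{z}}\odot\hat{\mathbf{w}})_i=\sum_{a\in\Z/N\Z} z_a\, w_{(i-a)\bmod N}.
\end{align}
We apply this with $\mathbf{w}=\mathbf{e}_j^{\mathrm{ext}}=\tfrac12\delta_{\cdot,j}$. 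The sum then collapses to the single term $a=i-j$, giving $\tfrac12 z_{(i-j)\bmod N}$. Since $\mathbf{z}$ is real, taking the real part changes nothing. Subtracting $\tfrac12 z_i$ yields $M_{i,j}=\tfrac12\bigl(z^{(k)}_{(i-j)\bmod N}-z^{(k)}_i\bigr)$.

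For the second claim, we substitute into $e_i=\sum_{j=1}^{N-1}M_{i,j}s_j$ and split the sum into its two parts. The first part is $\tfrac12\sum_j s_j z^{(k)}_{(i-j)\bmod N}$. In the second part, $z^{(k)}_i$ does not depend on $j$, so it factors out of the sum, leaving $\tfrac12 z^{(k)}_i\sum_j s_j=\tfrac12 S z^{(k)}_i$. Together these give \eqref{E:weakconv}.

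The main obstacle is normalization and indexing, not the algebra. First, the FFT/IFFT pair must be taken with the convention in which IFFT carries the $1/N$ factor, so that no stray factor of $N$ appears. Second, position $j$ of the sign vector must be identified with the group element $5^j$, so that "the position of orbit element $j$" is the $j$-th power of $5$. With sign columns indexed $1,\dots,N-1$, the identity element $j=0$ is exactly the column that is omitted, which is consistent with $N_s=|G|-1$. We state these conventions explicitly, as implicit in \eqref{E:errormatrix}. If the paper's FFT instead used the unnormalized inverse, the formula would hold after the rescaling already implicit in the choice of $\tfrac12$ in $\mathbf{e}_j^{\mathrm{ext}}$. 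Finally, since $\mathbf{z}$ is real and the convolution is real, the operator $\Re$ is harmless.
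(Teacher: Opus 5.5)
Your proof is correct and follows the paper's argument: the pointwise product of DFTs over $\Z/N\Z$ is cyclic convolution, convolving with the half-impulse $\tfrac12\mathbf{1}_{\{j\}}$ gives the half-shift $\tfrac12 z_{(i-j)\bmod N}$, $\Re$ acts as the identity because $\mathbf{z}$ is real, and summing against $s_j$ factors out $\tfrac12 S z_i$. One small quibble is your closing hedge: with an unnormalized inverse transform you would pick up a factor of $N$, which the $\tfrac12$ in $\mathbf{e}_j^{\mathrm{ext}}$ cannot absorb, but this does not affect the argument under the normalized convention you state (and that the paper implicitly assumes).
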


\begin{proof}
From Eq.\eqref{E:errormatrix}, we have 
\begin{align}
\mathbf{M}_{\cdot,j}=\Re\mathrm{IFFT}(\hat{\mathbf z}\odot\widehat{\mathbf e_j^{\mathrm{ext}}})-\tfrac{1}{2}\mathbf{z}
\label{EMocd}
\end{align}
with $\mathbf{e}_j^{\mathrm{ext}}=\tfrac{1}{2}\mathbf{1}_{\{j\}}$. Note the point-wise product of DFTs over $\Z/N\Z$ is cyclic convolution, and convolution with the half-impulse $\tfrac{1}{2}\mathbf{1}_{\{j\}}$ is the half-shift $(\cdot)_i\mapsto\tfrac{1}{2}z_{(i-j)\bmod N}$. As $\mathbf{z}$ is real, so $\Re$ is the identity. By summing over $j$ with weights $s_j$, we get Eq.\eqref{E:weakconv}.
\end{proof}

\begin{lemma}\label{lem:fold}
For all $a\in\Z/N\Z$, we have $z^{(k)}_a=z^{(k+1)}_a+z^{(k+1)}_{a+N}$, where the right-hand indices lie in $\Z/2N\Z$.

\end{lemma}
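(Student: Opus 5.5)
The plan is to prove the identity directly from the double-angle formula $2\sin 2\varphi=(2\sin\varphi)(2\cos\varphi)$, once we know how the orbit representatives at levels $k$ and $k+1$ are related. First, we remove the normalization. Since $|\sin(\pi x)|$ has period $1$ and satisfies $|\sin(\pi(1-x))|=|\sin(\pi x)|$, replacing $\mathrm{orb}_k(a)$ by $\min(o,2^k-o)$, or by any residue of $5^a$ mod $2^k$, does not change $z^{(k)}_a$. Hence
\begin{align*}
z^{(k)}_a=\log\bigl|2\sin(\pi 5^a/2^k)\bigr|,\qquad z^{(k+1)}_b=\log\bigl|2\sin(\pi 5^b/2^{k+1})\bigr|.
\end{align*}
Here $5^a$ is computed as an integer. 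The level-$k$ value depends only on $a\bmod N$, because $5$ has order $N=2^{k-2}$ in $(\Z/2^k\Z)^\times/\{\pm1\}$. So we may fix an integer representative $a$ and read $a$ and $a+N$ as indices in $\Z/2N\Z$.

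The key arithmetic input is the congruence $5^{N}=5^{2^{k-2}}\equiv 1+2^{k}\pmod{2^{k+1}}$. We prove it by induction on $m\ge 0$ in the form $5^{2^m}=1+2^{m+2}u_m$ with $u_m$ odd. The base case is $m=0$, since $5=1+4\cdot 1$. For the inductive step, squaring gives $5^{2^{m+1}}=1+2^{m+3}u_m+2^{2m+4}u_m^2=1+2^{m+3}(u_m+2^{m+1}u_m^2)$, and the bracket is again odd. Taking $m=k-2$ gives the claim. Since $5^a$ is odd, it follows that $5^{a+N}\equiv 5^a(1+2^k)\equiv 5^a+2^k\pmod{2^{k+1}}$.

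Now set $\varphi=\pi 5^a/2^{k+1}$. By $\pi$-periodicity of $|\sin|$, the previous congruence gives $|\sin(\pi 5^{a+N}/2^{k+1})|=|\sin(\varphi+\pi/2)|=|\cos\varphi|$. Therefore
\begin{align*}
z^{(k+1)}_a+z^{(k+1)}_{a+N}=\log\bigl|(2\sin\varphi)(2\cos\varphi)\bigr|=\log|2\sin 2\varphi|=\log\bigl|2\sin(\pi 5^a/2^k)\bigr|=z^{(k)}_a.
\end{align*}
All logarithms are finite because $5^a$ is odd, so no sine vanishes. The only real obstacle is the $2$-adic congruence for $5^{2^{k-2}}$. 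The rest is bookkeeping to check that the normalization $\min(o,2^k-o)$ and the choice of representatives in $\Z/N\Z$ versus $\Z/2N\Z$ are harmless, which the symmetries of $|\sin|$ already handle.
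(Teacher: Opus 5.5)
Your proof is correct and takes essentially the same route as the paper. The congruence $5^{2^{k-2}}\equiv 1+2^k \pmod{2^{k+1}}$ shifts the angle by $\pi/2$, turning $|\sin|$ into $|\cos|$, and the double-angle identity $(2\sin\varphi)(2\cos\varphi)=2\sin 2\varphi$ finishes the argument. Your inductive proof of the congruence, and your check that the $\min(o,2^k-o)$ normalization and the choice of index representatives are harmless, fill in details that the paper's proof leaves implicit (it only cites the ``unique order-$2$ power'' of $5$).
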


\begin{proof}
Let $b:=5^a\bmod 2^{k+1}$ (odd), so $z^{(k+1)}_a=\log(2|\sin\theta|)$ with $\theta=\pi b/2^{k+1}$. Since $5$ has order $2^{k-1}$ modulo $2^{k+1}$, its unique order-$2$ power is $5^{2^{k-2}}\equiv 1+2^k\pmod{2^{k+1}}$. As $b$ is odd, we get $5^{a+N}=5^a5^{2^{k-2}}\equiv b+2^k\pmod{2^{k+1}}$. Hence, we have 
\begin{align}
z^{(k+1)}_{a+N}&=\log(2|\sin(\theta+\tfrac{\pi}{2})|) 
=\log(2|\cos\theta|),
\\
z^{(k+1)}_a+z^{(k+1)}_{a+N}&=\log(4|\sin\theta\cos\theta|)
  =\log(2|\sin2\theta|)
  \notag\\
 &=\log(2|\sin(\tfrac{\pi b}{2^k})|)=z^{(k)}_a,
\end{align}
where the last step is from the equality $b\equiv 5^a\pmod{2^k}$.
\end{proof}

\begin{theorem}[Weak form of Conjecture~\ref{conj:universal}]
\label{T:weak}
$\delta^*(k+1)\le\delta^*(k)$ holds for every $k\ge 4$. So, $\delta^*(k)\le\delta^*(4)=0.44069\ldots<0.5$ for all $k\ge4$.

\end{theorem}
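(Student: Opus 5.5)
The plan is a direct lifting construction. Take an optimal balanced sign vector $\mathbf{s}\in\{\pm1\}^{N-1}$ at level $k$, where $N=2^{k-2}$. From it, build an explicit balanced sign vector $\mathbf{s}'\in\{\pm1\}^{2N-1}$ at level $k+1$ whose error vector $\mathbf{e}'=\mathbf{M}^{(k+1)}\mathbf{s}'$ is just the level-$k$ error $\mathbf{e}$ repeated twice, i.e.\ $e'_i=e_{i\bmod N}$ for $i\in\Z/2N\Z$. Then $\|\mathbf{e}'\|_\infty=\|\mathbf{e}\|_\infty=\delta^*(k)$, so $\delta^*(k+1)\le\delta^*(k)$. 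Induction from the value $\delta^*(4)=0.44069\ldots$ certified in Theorem~\ref{T:milp} (or by Example~\ref{ex:milp-k4}) then gives the second claim.

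The construction I would use is to duplicate the signs and fix the middle one by balance:
\begin{align}
s'_j=s'_{j+N}=s_j\quad(1\le j\le N-1),\qquad s'_N=-S,
\end{align}
where $S=\sum_j s_j\in\{\pm1\}$. This gives $S'=2S-S=S\in\{\pm1\}$, so the lift is balanced. Next I apply Lemma~\ref{lem:conv} at level $k+1$, writing $z'=z^{(k+1)}$, and group the column indices $j$ and $j+N$. Lemma~\ref{lem:fold} gives
\begin{align}
s_j\bigl(z'_{(i-j)\bmod 2N}+z'_{(i-j-N)\bmod 2N}\bigr)=s_j\,z^{(k)}_{(i-j)\bmod N},
\end{align}
so the paired part of the convolution reproduces exactly $\tfrac12\sum_j s_j z^{(k)}_{(i-j)\bmod N}$. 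What remains is the unpaired term $\tfrac12 s'_N z'_{i+N}$ and the balance term $-\tfrac12 S' z'_i$, using $i-N\equiv i+N \pmod{2N}$.

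The only computation to check is that these leftovers match $-\tfrac12 S z^{(k)}_{i\bmod N}=-\tfrac12 S(z'_i+z'_{i+N})$, which again uses Lemma~\ref{lem:fold}. Substituting $s'_N=-S$ and $S'=S$, the leftover terms equal $-\tfrac12 S z'_{i+N}-\tfrac12 S z'_i$, which is exactly this expression. Hence $e'_i=e_{i\bmod N}$ identically. The bookkeeping to verify is the index identity $z^{(k)}_{(i-j)\bmod N}=z'_{i-j}+z'_{i-j+N}$ for all $i\in\Z/2N\Z$: Lemma~\ref{lem:fold} applies to the residue $a=(i-j)\bmod N$, and the pair $\{i-j,\,i-j+N\}$ equals $\{a,\,a+N\}$ modulo $2N$.

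I expect the main obstacle to be choosing the right sign for the new middle variable $s'_N$. I would settle it by writing the general difference
\begin{align}
e'_i-e_{i\bmod N}=\tfrac12(s'_N+S)\bigl(z'_{i+N}-z'_i\bigr),
\end{align}
which vanishes precisely when $s'_N=-S$. That this same choice also preserves balance is what makes the whole argument work.
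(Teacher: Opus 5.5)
Your proposal is correct and matches the paper's proof: you use the same periodic lift $t_j=t_{j+N}=s_j$, $t_N=-S$, pair columns $j$ and $j+N$ via Lemma~\ref{lem:fold} to obtain $e^{(k+1)}_i=e^{(k)}_{i\bmod N}$, and iterate from $\delta^*(4)$. Your extra identity $e'_i-e_{i\bmod N}=\tfrac12(s'_N+S)(z'_{i+N}-z'_i)$ is a small addition that shows why the middle sign must be $-S$, a choice the paper simply posits.
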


\begin{proof}
Let $\mathbf{s}\in\{\pm1\}^{N-1}$ be balanced, $S=\sum_j s_j\in\{\pm1\}$, with an error $\mathbf{e}^{(k)}$ given by Eq.\eqref{E:weakconv}.

Define the periodic lift $\mathbf{t}\in\{\pm1\}^{2N-1}$ at level $k+1$ by
\begin{align}
  t_j=s_j\ (1\le j\le N-1), t_N=-S, \ t_{N+j}=s_j\ (1\le j\le N-1).
\label{EPerolift}
\end{align}
Its imbalance is given by $S'=\sum_{j=1}^{2N-1}t_j=2S+(-S)=S\in\{\pm1\}$, so $\mathbf{t}$ is feasible. From Lemma \ref{lem:conv} at level $k+1$, for $i\in\Z/2N\Z$, we obtain 
\begin{align}
  2e^{(k+1)}_i=\sum_{j=1}^{N-1}s_j(z^{(k+1)}_{i-j}+z^{(k+1)}_{i-j-N})   -S z^{(k+1)}_{i-N}-S' z^{(k+1)}_i.
\label{EPeroliftcom}
\end{align}

From Lemma \ref{lem:fold}, we have $z^{(k+1)}_{i-j}+z^{(k+1)}_{i-j-N}=z^{(k)}_{(i-j)\bmod N}$. Since $z^{(k+1)}_{i-N}=z^{(k+1)}_{(i+N)\bmod 2N}$, with $S'=S$, we get 
\begin{align}
  -S z^{(k+1)}_{i-N}-S z^{(k+1)}_i=-S (z^{(k+1)}_i+z^{(k+1)}_{(i+N)\bmod 2N})
  =-S z^{(k)}_{i\bmod N}.
\label{Ecoret}
\end{align}
This implies from Eq.(\ref{EPeroliftcom}) that 
\begin{align}
2e^{(k+1)}_i=\sum_{j=1}^{N-1}s_j z^{(k)}_{(i-j)\bmod N}-S z^{(k)}_{i\bmod N}=2e^{(k)}_{i\bmod N}, 
\label{EPeroliftcome1}
\end{align}
i.e., $e^{(k+1)}_i=e^{(k)}_{i\bmod N}(i\in\Z/2N\Z)$. So, we get $\|\mathbf{e}^{(k+1)}\|_\infty=\|\mathbf{e}^{(k)}\|_\infty$. 

By choosing $\mathbf{s}$ optimal at level $k$, we then obtain 
\begin{align}
\delta^*(k+1)\le\|\mathbf e^{(k+1)}\|_\infty=\delta^*(k).
\label{Edeltelift}
\end{align}
By iterating from $\delta^*(4)=0.44069\ldots$ (Example~\ref{ex:milp-k4}, checked with finite operations), we get $\delta^*(k)\le\delta^*(4)$.
\end{proof}

The proof of Conjecture exactly requires $\delta^*(k+1)\ge\delta^*(k)$, which the simulations support but which we do not prove.

\subsection{Implications for module reduction}\label{subsec:milp-implications}

Our result of $\delta^*=O(1)$ allows to tighten the CDPR's approximation factor. The MILP-optimal sign selection reduces the decoding error in the CDPR's Phase 2. This eliminates the residual polynomial factor in the CDPR's approximation bound.

\begin{corollary}\label{cor:milpmod}
  For $k\le 12$, the algorithm of Theorem \ref{T:base} with MILP-optimal sign selection \eqref{E:milp} in the CDPR's decode step of Phase 2 produces a non-zero vector $\mathbf{v}\in M$ with
\begin{align}
    \|\sigma(\mathbf{v})\|_2 \le \sqrt{C}\gamma\det(\sigma(M))^{1/(dn)}.
    \label{E:milpB}
\end{align}

\end{corollary}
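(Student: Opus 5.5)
The plan is to reduce Corollary~\ref{cor:milpmod} to Theorem~\ref{T:base}. The only change is inside the CDPR subroutine used in step (ii): the sign vector chosen in the decode phase. The proof of Theorem~\ref{T:base} uses the subroutine only through the per-line guarantee \eqref{E:cdprbound}, namely $\|\sigma(\alpha_i')\|_\infty\le\gamma|\Nm_{K/\Q}(J_i)|^{1/n}$. So it suffices to show that CDPR with MILP-optimal signs still returns a generator of $J_i$ satisfying \eqref{E:cdprbound}, with the same $\gamma=\exp(C_0\sqrt n\log n)$ or a smaller one. The rest of the argument can then be copied verbatim:
\begin{itemize}
\item the trace factorization \eqref{E:viprel};
\item the balance hypothesis \eqref{E:BiB};
\item the covolume identity \eqref{E:Miexp};
\item the orthogonal product \eqref{E:Miprod};
\item the min-versus-geometric-mean step \eqref{E:powermean}.
\end{itemize}
None of these steps refers to how the unit $u_i$ was selected.

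The first step is validity of the output. For $k\le 12$ the Main Theorem of Part~I gives $h_k^+=1$, so the cyclotomic units $\xi_a$ generate $\Lambda$ up to torsion. Any sign vector $\mathbf{s}$ therefore yields a lattice point of $\Lambda$ in the round-off, hence a genuine unit $u_i\in R^\times$. Consequently $\alpha_i'=u_i^{-1}\alpha_{i,0}$ is still a generator of $J_i$ and $\mathbf{v}_i\in M_i\setminus\{0\}$.

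The second step is the norm bound. The decode error in the log-embedding is $\mathbf{e}=\mathbf{M}\mathbf{s}$. With the MILP-optimal $\mathbf{s}$ this satisfies $\|\mathbf{e}\|_\infty=\delta^*(k)$, which equals $0.44069\ldots$ by Theorem~\ref{T:milp}, and in any case is $\le\delta^*(4)<1/2$ by Theorem~\ref{T:weak}. The tower-greedy choice instead gives $\Theta(\sqrt{nk})$. In CDPR's analysis, the log-distance from $\log|\sigma(\alpha_{i,0})|$ to the chosen lattice point, projected off the all-ones direction, is bounded by a term of the form $\exp(\|\mathbf{e}\|_\infty\cdot\mathrm{poly})$. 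Since $\mathbf{x}\mapsto\exp$ is monotone, replacing a larger $L^\infty$ error by the smaller $\delta^*(k)$ can only decrease the factor. The resulting $\|\sigma(\alpha_i')\|_\infty/|\Nm(J_i)|^{1/n}$ is therefore at most the original $\gamma$, which re-establishes \eqref{E:cdprbound}.

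The expected obstacle is making this monotonicity precise. The paper only describes the decode error abstractly as $\mathbf{M}\mathbf{s}$, so the key link is the inequality
\[
\log\bigl(\|\sigma(\alpha_i')\|_\infty/|\Nm(J_i)|^{1/n}\bigr)\le (\text{Babai residual})+\|\mathbf{M}\mathbf{s}\|_\infty .
\]
I would derive it by writing the round-off residual as the sum of a fractional-part contribution and the sign-dependent term $\mathbf{M}\mathbf{s}$, then applying the triangle inequality in $L^\infty$ on the log side. The restriction $k\le 12$ enters twice: through $h_k^+=1$ in the validity step, and through the certified optimum of Theorem~\ref{T:milp}.
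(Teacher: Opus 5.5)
Your proposal is correct and is essentially the paper's argument. The paper likewise leaves the output rule and power-mean step of Theorem~\ref{T:base} unchanged, and only observes that replacing $\delta_{\rm tower}$ by $\delta^*$ in the per-line bound \eqref{E:cdprbound} (via the factored form \eqref{E:gammafac}) can only shrink $\gamma$; this suffices because \eqref{E:milpB} carries the same $\gamma$ as \eqref{E:baseB}. One minor correction: $u_i$ is a unit for every sign vector simply because it is a product of cyclotomic units, so $h_k^+=1$ is not what guarantees $\mathbf{v}_i\in M_i\setminus\{0\}$; in the paper it only identifies the cyclotomic-unit lattice with $\Lambda$, which concerns decoding quality rather than validity.
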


\begin{proof}
Note the output and the power-mean argument of Theorem \ref{T:base} are unchanged. The only modification is in the CDPR's bound \eqref{E:cdprbound} for each rank-1 submodule: MILP-optimal signs reduce the decoding discrepancy from $\delta_{\rm tower}=\Theta(\sqrt{nk})$ to $\delta^*\approx 0.4407$ (Theorem \ref{T:milp}). This absorbs a $\sqrt{nk}$ factor into the multiplicative constant of $\gamma$. The final bound in Eq.\eqref{E:milpB} is followed by using the improved $\gamma$ with Eq. \eqref{E:baseB}.

In fact, in the factored form \eqref{E:gammafac} the tower factor $\delta_{\rm tower}=\Theta(\sqrt{nk})$ is replaced by $\delta^*=O(1)$; the displayed bound \eqref{E:milpB} is stated with $\gamma$ absorbing this factor, so the improvement is in the constant, not the exponential, consistent with \eqref{E:baseB}.
\end{proof}

\begin{example}[ML-KEM-768, $k=9$]\label{ex:milpN}
Here $n=256$, so the tower greedy discrepancy is $\delta_{\rm tower}\approx\sqrt{nk}=\sqrt{256\cdot 9}\approx 48$, whereas MILP-optimal signs give $\delta^*\approx 0.4407$ (Table~\ref{Ta:milp}). The decode error therefore drops by a factor $48/0.4407\approx 109\approx 2^{6.8}$. 
\end{example}

\subsection{Generators of $\GL_d(R)$}\label{subsec:gen}

For practical implementation of our module-reduction algorithm, we characterize the generators of the general linear group $\GL_d(R)$, which ensures size-reduction step can be implemented using only elementary ring operations.

\begin{definition}\label{D:generators}
Define the following subsets of $\GL_d(R)$:
\begin{itemize}[nosep]
\item $\mathcal{G}^{\text{diag}}$: diagonal matrices $\diag(\epsilon_1,\dots,\epsilon_d)$ with $\epsilon_i\in R^\times$ (units of $R$);
\item $\mathcal{G}^{\text{elem}}$: elementary transvections $E_{ij}(\alpha)=I_d+\alpha e_i e_j^T$ for $i\neq j$ and $\alpha\in R$;
\item $\mathcal{G}^{\text{perm}}$: permutation matrices $P_\pi$ for $\pi\in\mathfrak{S}_d$;
\item $\mathcal{G}_d=\mathcal{G}^{\text{diag}} \cup \mathcal{G}^{\text{elem}} \cup \mathcal{G}^{\text{perm}}$.
\end{itemize}
\end{definition}

\begin{theorem}\label{T:gen}
For $k\le8$ and $d\ge1$, the set $\mathcal{G}_d$ generates $\GL_d(R)$. For $9\le k\le12$, the analogous result holds for the maximal real subfield's integer ring $\cO_{K_k^+}$, which is a principal ideal domain (PID) for all $k\le12$.
\end{theorem}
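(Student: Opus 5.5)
The plan is to reduce everything to the special linear group and then invoke the classical solutions of the congruence subgroup and elementary generation problems for rings of arithmetic type. Throughout, write $\mathcal{O}$ for either $R=\Z[\zeta_{2^k}]$ (case $k\le 8$) or $\cO_{K_k^+}$ (case $9\le k\le 12$). The first step is the determinant splitting. For $A\in\GL_d(\mathcal{O})$, $\det A=\epsilon\in\mathcal{O}^\times$. Hence
\begin{align}
A=\diag(\epsilon,1,\dots,1)\cdot A',\qquad A'\in\SL_d(\mathcal{O}),
\end{align}
and the left factor lies in $\mathcal{G}^{\text{diag}}$. It therefore suffices to show that $\SL_d(\mathcal{O})$ is generated by the transvections $\mathcal{G}^{\text{elem}}$, i.e.\ that $\SL_d(\mathcal{O})=E_d(\mathcal{O})$. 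The permutation matrices are then not even needed, although they are harmless: $P_\pi$ is a product of transvections up to a diagonal sign matrix. The case $d=1$ is trivial, since $\GL_1(\mathcal{O})=\mathcal{O}^\times=\mathcal{G}^{\text{diag}}$.

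For $d\ge3$ I would cite Bass--Milnor--Serre: for the ring of integers $\mathcal{O}$ of any number field, $\SL_d(\mathcal{O})=E_d(\mathcal{O})$ for all $d\ge 3$. This is the solution of the congruence subgroup problem, where the elementary subgroup has trivial cokernel because $SK_1(\mathcal{O})=0$ for Dedekind rings of arithmetic type. This step needs neither the PID property nor Euclideanity.

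The main obstacle is $d=2$. Here $\SL_2=E_2$ genuinely fails for some rings of integers, namely imaginary quadratic rings that are not Euclidean (Cohn). Moreover, the naive PID approach of row-reducing a column $(a,b)^T$ to $(\gcd,0)^T$ requires a Euclidean-type division chain, which we do not have for $k\ge 6$. I would therefore invoke Vaserstein's theorem: if $\mathcal{O}$ is the ring of $S$-integers of a number field and $\mathcal{O}^\times$ is infinite, then $\SL_2(\mathcal{O})=E_2(\mathcal{O})$.

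To apply Vaserstein, I check the unit rank by Dirichlet. For $R$ with $k\ge3$, the field is totally complex of degree $n=2^{k-1}$, so the unit rank is $n/2-1\ge1$; the cyclotomic units $\xi_a$ of Part I already supply non-torsion units. For $\cO_{K_k^+}$ with $k\ge 9$, the field is totally real of degree $2^{k-2}$, so the unit rank is $2^{k-2}-1\ge1$. Both rings therefore satisfy the hypothesis.

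Combining the three cases gives $\GL_d(\mathcal{O})=\langle\mathcal{G}^{\text{diag}},\mathcal{G}^{\text{elem}}\rangle\subseteq\langle\mathcal{G}_d\rangle$ for all $d\ge1$. In particular the PID hypothesis ($h_k^+=1$ from Part I) is not actually needed for generation. I would record it only as the reason the algorithmic size-reduction steps stay inside free modules.

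If a constructive version is wanted for $k\le 4$, I would add a direct proof for those cases. There $R$ is norm-Euclidean, so the Euclidean algorithm on a column, realized by transvections $E_{ij}(q)$, reduces $A'$ to a diagonal matrix of units. Such a diagonal matrix with determinant $1$ is a product of transvections by Whitehead's lemma.
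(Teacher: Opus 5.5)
Your proof is correct. Its skeleton matches the paper's: scale by a diagonal unit matrix to land in $\SL_d$, then use Bass--Milnor--Serre for $d\ge 3$ and Vaserstein for $d=2$. The real difference is in the hypotheses you feed into those two theorems, and there your version is the sound one.

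The paper argues ``$R$ is a PID, hence $\sr(R)\le 2$, hence $\SL_d(R)=E_d(R)$ for $d\ge3$,'' and it invokes Vaserstein for Dedekind domains with $\mathrm{SK}_1=0$. Both steps have problems:
\begin{itemize}
\item $\sr\le 2$ holds for every Dedekind domain, and by itself it only gives $\SL_d/E_d\cong\mathrm{SK}_1$ for $d\ge 3$. The real arithmetic input is exactly the $\mathrm{SK}_1(\cO)=0$ that you cite.
\item $\mathrm{SK}_1=0$ is not enough for $d=2$. The non-Euclidean imaginary quadratic PIDs you mention (e.g.\ $\Z[\tfrac{1+\sqrt{-19}}{2}]$, by Cohn) are counterexamples.
\end{itemize}
You instead use Vaserstein's actual hypothesis, an infinite unit group, and you verify it by Dirichlet's unit theorem for both $R$ ($k\ge 3$) and $\cO_{K_k^+}$.

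What your route buys is independence from the class number, and that matters here. The paper's premise that $\Z[\zeta_{2^k}]$ is a PID for all $k\le 8$ is false for $k=6,7,8$: already $h(\Q(\zeta_{64}))=17$, and the $2$-power cyclotomic rings with class number one stop at $\Z[\zeta_{32}]$. So your argument is the one that actually covers the stated range. It also shows that retreating to $\cO_{K_k^+}$ for $9\le k\le 12$ is unnecessary, since the same reasoning gives $\langle\mathcal{G}_d\rangle=\GL_d(\Z[\zeta_{2^k}])$ for every $k\ge 3$.

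The remaining clause, that $\cO_{K_k^+}$ is a PID for $k\le 12$, is imported from Part I rather than proved, in your account as in the paper's. As you note, it plays no role in the generation statement.
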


\begin{proof}
For $k\le 8$, the full cyclotomic field $K=\Q(\zeta_{2^k})$ has class number $h_k=1$ \cite{Wash97}, so its integer ring $R=\Z[\zeta_{2^k}]$ is a PID.

For any $U\in\GL_d(R)$, $\det(U)\in R^\times$. The unit group of $R$ is $R^\times=\langle-1, \zeta \rangle \cdot C^+$, where $C^+$ is the cyclotomic unit group \cite{Sinnott78}. By multiplying the first row of $U$ by $(\det U)^{-1}\in R^\times$ (a diagonal generator in $\mathcal{G}^{\text{diag}}$), we get a matrix $U'\in\SL_d(R)$.

Since $R$ is a PID, it has stable range $\sr(R)\le2$ \cite[Theorem V.3.5]{HOM89}. By the Bass–Milnor–Serre Theorem \cite{BMS67}, for any ring $A$ with $\sr(A)\le s$ and $d\ge s+1$, $\SL_d(A)$ is generated by elementary transvections. For $d\ge3$, this gives $\SL_d(R)=E_d(R)$, where $E_d(R)$ is the subgroup generated by elementary transvections in $\mathcal{G}^{\text{elem}}$.

For $d=2$, the ring $R=\Z[\zeta_{2^k}]$ is norm-Euclidean only for $k\le 4$ \cite{Lenstra79,Lemmermeyer95}. For $5\le k\le 8$ it is a PID but not known to be Euclidean. Here, we use the result of Vaserstein \cite{Vaserstein72}: for any Dedekind domain $A$ with $\mathrm{SK}_1(A)=0$, every element of $\SL_2(A)$ is a product of elementary transvections. For the ring of integers of any number field, $\mathrm{SK}_1(\cO_K)=0$ holds unconditionally by Bass-Milnor-Serre Theorem \cite{BMS67}, independent of the class number. In particular, we get $\mathrm{SK}_1(\Z[\zeta_{2^k}])=0$ for every $k$.

For $9\le k\le12$, the full cyclotomic field has class number $h_k=h_k^+h_k^->1$ (since the relative class number $h_k^->0$). So, $R$ is not a PID. However, the maximal real subfield $K_k^+$ has class number $h_k^+=1$ for all $k\le 12$ (Part I), so its integer ring $\cO_{K_k^+}$ is a PID, and the same generator result holds for $\GL_d(\cO_{K_k^+})$.
\end{proof}

\begin{remark}\label{rem:euclidean}
For $k\ge 9$, $R$ is not a PID, but our module reduction algorithm still applies to free $R$-modules, which are the module lattices arising in MLWE and ML-KEM. The elementary transvections remain valid for basis manipulations, even without the PID property.
\end{remark}

\subsection{Implementation and complexity}\label{subsec:impl}

A practical advantage of our approach is that the MILP needs to be solved only once per cyclotomic ring, not per module instance. The optimal sign vector depends solely on the logsine matrix $\mathbf{M}$, which is fixed for a given $k$, and can be precomputed and stored in $O(|G|)$ bits.

For ML-KEM parameters ($k=9$, $|G|=128$, $N_s=127$), the MILP has only 127 binary variables, 1 continuous variable, and 257 constraints. Modern solvers such as Gurobi or CPLEX solve this instance in under seconds on standard desktop hardware. For $k=10$ ($N_s=255$), the solve time is under minutes, as in Table \ref{Ta:milp}.

We analyze the asymptotic complexity of the full algorithm.

\paragraph{Phase 1 (classical).} The algorithm requires $d(d-1)/2$ Gram-Schmidt coefficient computations and roundings. Each coefficient involves an inner product in $K^d$, costing $O(dn\log n)$ bit operations via NTT-based ring multiplication. CRT-scaled rounding requires $O(rn\log n)$ operations per coefficient, where $r$ is the number of split primes. The total cost in Phase 1 is $O(d^2 r n\log n)$ bit operations. With precomputed MILP-optimal signs, it costs $O(d^2 n\log n)$.

\paragraph{Phase 2 (quantum).} The algorithm applies CDPR's algorithm to each of $d$ rank-1 submodules. The dominant cost is the quantum PIP solver of Biasse-Song \cite{BiasseSong16}, requiring $\mathrm{poly}(n)$ quantum gates per ideal. The total quantum cost is $d\cdot\mathrm{poly}(n)$ gates. 

The complexity is summarized in Table \ref{Ta:complexity} for ML-KEM parameters ($n=256$, $d=4$, $k=9$).

\begin{table}[t]
\centering
\caption{Complexity of module reduction for ML-KEM parameters ($n=256$, $d=4$, $k=9$). Phase 1 (size reduction) and Phase 2 (CDPR's shortening) run sequentially but independently; the approximation-factor bound depends only on Phase 2.}
\label{Ta:complexity}
\begin{tabular}{@{}lll@{}}
\toprule
\textbf{Component} & \textbf{Classical Cost} & \textbf{Quantum Cost}\\
\midrule
MILP precomputation (one-time per $k$)
  & $<30$\,s on desktop   & -- \\
Phase 1: $K$-Gram-Schmidt
  & $O(d^2 n\log n)$ ops  & -- \\
Phase 1 (optional): CRT-scaled size reduction
  & $O(d^2 r n\log n)$ ops & -- \\
Phase 2: diagonal CDPR with MILP signs
  & --                   & $d\cdot\mathrm{poly}(n)$ gates \\
\bottomrule
\end{tabular}
\end{table}

\subsection{Security implications for ML-KEM}\label{subsec:security}

While our attack tightens the polynomial overhead of prior algebraic attacks, the exponential factor $\exp(\tO(\sqrt{n}))$ remains a fundamental barrier, confirming that ML-KEM is secure against this class of attacks.

Let the per-line CDPR factor as 
\begin{align}
  \gamma=\delta\cdot\exp\!\bigl(C_0\sqrt n\log n\bigr),
  \label{E:gammafac}
\end{align}
where $\delta$ is the decode discrepancy of the sign-selection step (Section \ref{sec:milp}). The tower greedy heuristic gives $\delta_{\rm tower}=\Theta(\sqrt{nk})$, while MILP-optimal signs give $\delta^*\approx 0.4407$ for $k\le 12$ (Theorem \ref{T:milp}). The exponential factor is unchanged in both cases.

Table \ref{Ta:security} summarizes the approximation factors achieved by our attack, compared to generic lattice reduction algorithms.
\begin{table}[t]
\centering
\caption{Approximation factors for ideal/module SVP attacks. All attack factors are for the MILP-optimal combined algorithm (Corollary \ref{cor:milpmod}) applied to an MLWE-distributed input basis (balance constant $C=O(1)$, Table \ref{Ta:balance}). The module-reduction factor $\alpha_d=\sqrt{C}$ is independent of $d$; the full approximation bound composes with an ideal-SVP solver for each rank-1 submodule.}
\label{Ta:security}
\begin{tabular}{@{}cccl@{}}
\toprule
Module Rank $d$ & Our Attack (Hermite factor) & Generic BKZ & Source \\
\midrule
1 & $\exp(\tO(\sqrt n))$              & $\exp(\tO(n))$  & CDPR \cite{CDPR16} \\
2 & $\exp(\tO(\sqrt n))$              & $\exp(\tO(2n))$ & Thm \ref{T:crt}, Cor \ref{cor:milpmod} \\
3 & $\exp(\tO(\sqrt n))$              & $\exp(\tO(3n))$ & Thm \ref{T:crt}, Cor \ref{cor:milpmod} \\
4 & $\exp(\tO(\sqrt n))$              & $\exp(\tO(4n))$ & Thm \ref{T:crt}, Cor \ref{cor:milpmod} \\
\bottomrule
\end{tabular}
\end{table}

ML-KEM-768 uses a centered binomial distribution $\mathrm{CBD}_{\eta_1}$ with $\eta_1=2$ for the secret and noise, of standard deviation $\sigma=\sqrt{\eta_1/2}=1$. The secret key and noise vectors define a free $R$-module with basis, whose coefficients are i.i.d. bounded, so its $C$-balanced with $C\approx 1.8$ at $n=256$ (Table \ref{Ta:balance}). With the tower greedy discrepancy $\delta_{\rm tower}\approx\sqrt{nk}\approx 2^{5.6}$, the attack returns a vector of Hermite factor
\begin{align}
  \frac{\|\sigma(\mathbf{v})\|_2}{\sqrt{dn}\det(\sigma(M))^{1/(dn)}}
  &\le \sqrt{\tfrac{C}{dn}}\cdot\gamma
  \notag
  \\
  &= \sqrt{\tfrac{C}{dn}}\cdot\delta_{\rm tower}\cdot
     \exp(C_0\sqrt n\log n)\approx 2^{129.23}.
\end{align}
If replace $\delta_{\rm tower}$ by the MILP-optimal $\delta^*\approx 0.4407\approx 2^{-1.2}$ (Corollary~\ref{cor:milpmod}), it will remove the $2^{5.6}$ factor. 

The module reduction contributes a factor $\alpha_d:=\sqrt{C}=O(1)$ that is independent of $d$, so the total approximation factor factorizes as $\gamma=\alpha_d\cdot\gamma_{\rm ideal}$, where $\gamma_{\rm ideal}$ is the per-line ideal-SVP approximation factor. This separation is the basis for the refined analysis of Parts III and IV.

In practice, $\delta_{\rm LLL}$ is much smaller than its worst-case upper bound: empirical observations on cyclotomic LLL reductions \cite{LLPS19,KEF20} show $\delta_{\rm LLL}\le 2^{O(\sqrt n)}$ on typical bases, which is dominated by $\gamma$ and contributes no additional security gap. Here $\delta_{\rm LLL}$ denotes the approximation-factor overhead introduced by the algebraic-LLL balance preprocessing \cite{LLPS19,KEF20} on adversarial inputs.

The covering radius $\mu_\infty(\sigma(R))=\tilde\Theta(\sqrt n)$ (Proposition \ref{prop:coverlower}) is a worst-case bound over all targets, and the coordinate-wise rounding error $n/2$ (Lemma \ref{lem:cover}) is a worst-case bound for a specific coordinate-basis rounding strategy. Our discrepancy $\delta^*$ does not contradict either bound: it applies to the specific structured target given by the logsine matrix $\mathbf{M}$ with optimized sign selection, a much more favourable setting than the covering-radius worst case.

\section{Conclusions}\label{sec:conclusion}

We extended the CDPR's quantum attack from ideal to module lattices over $2^k$-th cyclotomic rings. Our algorithm (Theorem \ref{T:base}) achieves the bound $\|\sigma(\mathbf{v})\|_2\le\sqrt{C} \gamma\det(\sigma(M))^{1/(dn)}$ under a $C$-balance hypothesis on the input basis, equivalent to Hermite factor $\exp(\tO(\sqrt n))$ and matching CDPR's factor for $d=1$. The key algorithmic idea is to apply CDPR's algorithm independently to each of the $d$ rank-1 submodules $M_i=M\cap K\tilde{\mathbf{b}}_i$ and return the shortest of the resulting candidates.

CRT-scaled rounding (Theorem \ref{T:crt}) preserves the approximation bound while enabling an $O(d^2 r n\log n)$-time bounded-precision implementation via NTT. Finally, MILP-optimal sign selection replaces the tower greedy CDPR discrepancy $\Theta(\sqrt{nk})$ by a constant $\delta^*<0.5$ ($\approx 0.4407$ certified for $k\leq 12$, conjecturally universal).

\section*{Acknowledgments}

[Acknowledgments will be added in the final version.]


\end{document}